\documentclass{article}

\usepackage[letterpaper, margin=1in]{geometry} 

\usepackage{graphicx} 
\usepackage{amsmath, amsfonts, amsthm, amssymb, enumerate, xcolor, fullpage, algorithmic, physics}
\usepackage[ruled]{algorithm2e}
\usepackage{bbold}
\usepackage{float}
\usepackage{hyperref}
\hypersetup{colorlinks,linkcolor={blue},citecolor={blue},urlcolor={red}} 
\usepackage{enumitem}

\usepackage[capitalize,nameinlink]{cleveref}
\crefrangemultiformat{equation}{#3Eqs.~((#1))#4--#5((#2))#6}{ and #3(#1)#4--#5(#2)#6}{, #3(#1)#4--#5(#2)#6}{ and #3(#1)#4--#5(#2)#6}

\newcommand{\eps}{\varepsilon}
\newcommand{\Be}{\operatorname{B}}
\newtheorem{theorem}{Theorem}[section]
\newtheorem{theoremstar}[theorem]{Theorem$^\dagger$}

\newtheorem{lemma}[theorem]{Lemma}

\newtheorem{corollary}[theorem]{Corollary}
\newtheorem{proposition}[theorem]{Proposition}

\newtheorem{observation}[theorem]{Observation}

\newtheorem{fact}[theorem]{Fact}
\theoremstyle{definition}
\newtheorem{definition}[theorem]{Definition}

\newcommand{\calC}{\mathcal{C}}
\newcommand{\calU}{\mathcal{U}}
\newcommand{\calM}{\mathcal{M}}
\newcommand{\calJ}{\mathcal{J}}
\newcommand{\calP}{\mathcal{P}}

\newcommand{\E}{\mathbb{E}}
\newcommand{\be}{\mathrm{B}}

\newcommand{\dt}{\,\mathrm{d}t}

 \usepackage{etoolbox}
  \makeatletter
    \patchcmd{\maketitle}{\@fnsymbol}{\@arabic}{}{}
    \patchcmd{\maketitle}{\setcounter{footnote}{0}}{}{}{}
  \makeatother

\begin{document}

\title{The Dirichlet Mechanism for rounding with strong negative correlation, with applications}
 \author{David G. Harris\footnote{University of Maryland. Email: {\tt davidgharris29@gmail.com}}, \thickspace \thickspace George Z. Li\footnote{Carnegie Mellon University. Email: {\tt gzli@andrew.cmu.edu}}, \thickspace \thickspace Nitya Raju\footnote{University of Maryland. Email: {\tt nraju@umd.edu}}, \thickspace \thickspace Renata Valieva\footnote{University of Maryland. Email: {\tt rvalieva@umd.edu}}}
\date{}

\maketitle

\begin{abstract}
Many optimization and scheduling problems can be abstracted in terms of a bipartite ``assignment graph" $G = (U \cup V, E)$, where the goal is to select exactly one edge for each right-node. For example, a right-node may correspond to a job, and a left-node to a possible machine assignment.  A common strategy to solve such problems is to obtain a fractional relaxation $x_e$ for each edge $e$, and then have each right-node independently select an edge with probability $x_e$. However, this may cause the left-nodes to become unevenly loaded, leading to suboptimal solutions for some problems.

To address this, a number of algorithms for dependent rounding with strong negative correlation have been developed, e.g. Bansal, Srinivasan \& Svensson (2021), Im \& Shadloo (2020), Im \& Li (2023), Harris (2025), Naor, Srinivasan \& Wajc (2025).   We introduce a new method for this, which we call the \emph{Dirichlet mechanism}. It is based on having each left-node draw Dirichlet random variables for its edges, and then having each right-node select an edge based on these values.  This achieves quantitatively stronger negative correlation than previous algorithms, and is also simpler since it avoids the need for a tie-breaking mechanism.

We illustrate the mechanism with improved approximation ratios for two problems. For oblivious online dependent rounding, we achieve a $0.68$-approximation which improves upon the previous $0.652$-approximation of Naor, Srinivasan \& Wajc (2025). For the  problem of scheduling jobs on unrelated machines to minimize weighted completion time, we achieve a $1.387$-approximation which improves upon the $1.398$-approximation of Harris (2025). (A recent algorithm of Li (2025)  based on iterated rounding also provides a $1.36$-approximation if the weights of each job are independent of machine.)
\end{abstract}

\section{Introduction}
Many scheduling and resource 
allocation problems can be formulated as a \emph{bipartite assignment problem}: we are given a complete bipartite graph $G = (U \cup V, E)$, and we wish to select a set of edges $K$ which intersects each right-node $v \in V$ exactly once. That is, $K$ is a ``half-matching". For instance, $V$ can represent a set of jobs to be scheduled, and $U$ can represent a set of possible machines. Alternatively, $V$ can represent a set of items to be sold, and $U$ can represent potential buyers.

There is a natural strategy for solving such problems. First, one solves a relaxation (e.g., an LP relaxation) to obtain a fractional solution $(x_e: e \in E)$. Then, for each right-node $v \in V$, one selects exactly one neighboring edge $e \in N(v)$, wherein each edge is selected with probability $x_e$. We refer to this as \emph{independent rounding.} (Here, $N(v)$ denotes the set of edges incident on vertex $v$). This algorithmic approach applies to a wide range of problems, and we do not need to belabor its power.

An inherent limitation of independent rounding is that the left-nodes (e.g. the machines in a scheduling problem) can become unevenly loaded due to random fluctuations in edge selections. This is alleviated when the average number of selected edges per left-node is large, but not all problems lie in the ``concentration threshold" regime.   

In a breakthrough result, \cite{bansal2016lift} devised a new rounding approach, based on \emph{dependent rounding with strong negative correlation.} Instead of right-nodes acting independently, they are tied together in a scheme wherein each edge $e$ is still \emph{marginally} selected with probability $x_e$, but for any pair of distinct edges $e,f$ incident on the same left node, there is a \emph{strong negative correlation,} i.e. 
$$
\E[X_e X_f] \leq (1 - c) x_e x_f
$$
for some small constant $c>0$, where $X_e, X_f$ are indicator variables for selecting edges $e,f$. We contrast this with mere \emph{negative correlation}, which ensures only that
$$
\E[X_e X_f] \leq x_e x_f
$$


 The strong negative correlation property implies that the loads on each left-node become more balanced. Using this approach, \cite{bansal2016lift} achieved a $(1.5 - \eps)$-approximation algorithm for a classical scheduling problem of minimizing weighted completion time on unrelated machines, for some minuscule constant $\eps > 0$. Notably, they showed that independent rounding --- even when given a solution $\vec x$ which is a convex combination of optimal integral solutions --- is inherently limited to a $1.5$-approximation ratio.

Since then, a variety of rounding schemes with strong negative correlation have been developed, leading to improved approximation ratios for this scheduling problem and others \cite{im2020,im2023,baveja2024,harris2024dependent}. Generally speaking, we can group these schemes into two classes. The first approach, as in the original work of \cite{bansal2016lift}, is based on a random walk: at each stage, the fractional vector $\vec x$ is modified, until eventually it becomes integral.  An improved version of this rounding scheme was later developed in \cite{baveja2024}.

The second approach, developed originally by \cite{im2023}, is based on ideas from \emph{contention-resolution} in economics. Here, each left-node ``bids" for the right-nodes, and these demands are balanced so that each left-node does not get too many edges. As a result, the variables $X_e, X_f$ become negatively correlated. The algorithm of \cite{im2023} was based on Poissonian ``tickets" for the allocation; the later work  \cite{harris2024dependent} was based on a multivariate geometric distribution.

\subsection{Our Contributions}

In this work, we develop a new randomized rounding method that falls firmly into the second approach, which we call the \emph{Dirichlet mechanism.} It is based on \emph{Dirichlet random variables} as the underlying probability distribution for contention among the left-nodes. Such random variables have powerful properties: they have continuous CDFs, are negatively associated, are closed under aggregation, can be simulated online, and so on.

The correlation function does not have a closed-form expression, and requires some grueling analysis of the Incomplete Beta function for general calculations. For many  algorithmic applications, as we will see, the worst-case behavior comes when the fractional relaxation $\vec x$ has infinitesimal entries. In this special case, we can summarize the new algorithm crisply:
\begin{theorem}[Simplified]
Suppose that distinct edges $e,f$ share a left-node $u$, and entries $x_e, x_f$ are infinitesimal. Then the dependent rounding ensures that\footnote{This holds by setting $\rho_e = x_e/a$ for all edges $e$; see \Cref{thm: E[U^q]} and \Cref{poisson-prop} for further details.}
$$
\E[X_e X_f] \leq \frac{x_e x_f}{\binom{2/a}{1/a}} \qquad \text{for $a = \sum_{e \in N(u)} x_e$}
$$
\end{theorem}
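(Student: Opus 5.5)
The plan is to specialize the general correlation bound (\Cref{thm: E[U^q]}, together with the Poissonian limit in \Cref{poisson-prop}) to the case where all entries are infinitesimal and $\rho_e = x_e/a$. The left-node $u$ then draws a Dirichlet vector $(D_g : g \in N(u))$ whose total concentration is $\sum_{g \in N(u)} \rho_g = 1$. I write $D_g = \Gamma_g / \sum_h \Gamma_h$ with independent $\Gamma_g \sim \mathrm{Gamma}(\rho_g)$.

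\textbf{Step 1: reduce $X_e$ to its conditional law given $D_e$.} I will argue that, for the selection rule at right-nodes, the conditional probability that $e$ is selected given $D_e = s$ is $g(s) = s^{1/a}$, up to $1+o(1)$ factors as the entries tend to zero. The point is that competition at the right-node $v$ from the other left-nodes is governed by independent Dirichlet draws. In the infinitesimal regime these aggregate, by closure of the Dirichlet family under aggregation, into a factor depending only on $D_e$. The exponent $1/a$ is forced by the marginal condition. A single infinitesimal coordinate has $\Pr(D_e \in ds) \approx \rho_e s^{-1}(1-s)^{0}\,ds$ on $(0,1)$, so the requirement
$$x_e = \E[g(D_e)] = \rho_e \int_0^1 s^{1/a-1}\,ds = a\rho_e$$
is met exactly by $g(s) = s^{1/a}$. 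Moreover, conditioned on $(D_e, D_f)$, the events $X_e$ and $X_f$ depend on disjoint sets of right-nodes and independent draws at other left-nodes. Hence they are conditionally independent, and $\E[X_e X_f] \approx \E[D_e^{1/a} D_f^{1/a}]$.

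\textbf{Step 2: compute the joint moment.} For two infinitesimal coordinates of a Dirichlet vector with total concentration $1$, the joint density is $\approx \rho_e \rho_f\, s^{-1} t^{-1} (1-s-t)^{0}$ on the simplex $s+t \le 1$. The remaining coordinates aggregate to a Beta-type remainder with parameter $\to 1$. Therefore
$$\E[X_eX_f] \approx \rho_e\rho_f \int\!\!\int_{s+t\le 1} s^{1/a-1} t^{1/a-1}\,ds\,dt = \rho_e\rho_f \frac{\Gamma(1/a)^2}{\Gamma(2/a+1)}.$$
Substituting $\rho_e = x_e/a$ and using $\Gamma(1/a)/a = \Gamma(1/a+1)$ gives $x_e x_f\,\Gamma(1/a+1)^2/\Gamma(2/a+1) = x_e x_f/\binom{2/a}{1/a}$.

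\textbf{Main obstacle.} The hard part is making Step 1 rigorous. I need to justify that the other edges at $u$, and the other left-nodes at $v$ and $w$, contribute only through $D_e^{1/a}$, and that the $o(1)$ error terms go the right way so the claim is an inequality and not just an asymptotic equality. I would first try to obtain this directly from \Cref{thm: E[U^q]}, reading its $q$-th moment of the relevant variable $U$ at $q = 1/a$. I would then invoke \Cref{poisson-prop} to pass to the infinitesimal limit, where the Dirichlet coordinates become a Poisson process with intensity $\rho\, s^{-1}\,ds$. The monotone, negatively associated structure of the Dirichlet distribution should supply the inequality direction when entries are small but not exactly zero.
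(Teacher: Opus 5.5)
Your overall plan is the paper's own proof. With $\rho_e=x_e/a$ one has $\rho(N(u))=1$ and $\lambda_e=\lambda_f=1/a$, so \Cref{thm: E[U^q]} gives $\E[X_eX_f]\le x_ex_f\Psi(x_e,x_f;x_e/a,x_f/a)$, and \Cref{poisson-prop} sends $\Psi$ to $\binom{2/a}{1/a}^{-1}$. Your Step~2 is a heuristic recomputation of the leading term $\alpha_{0,0}$ of \Cref{mainestlemma0}. However, the self-contained justification you give in Step~1 is wrong in two places.

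\textbf{Conditional independence fails.} $X_e$ and $X_f$ are \emph{not} conditionally independent given the draws at $u$. If another left-node $u'$ is adjacent to both $v_e$ and $v_f$, the competitors $Z_{(u',v_e)}$ and $Z_{(u',v_f)}$ come from the same Dirichlet vector at $u'$, so they are dependent. What is true is \Cref{prop:dep-round}(5). Conditionally on $A_e,A_f$, $X_e$ is a nonincreasing function of the block $(A_g: g\in N(v_e)\setminus\{e\})$, and $X_f$ of the disjoint block $(A_g: g\in N(v_f)\setminus\{f\})$. Negative association then gives $\E[X_eX_f\mid A_e,A_f]\le \E[X_e\mid A_e]\,\E[X_f\mid A_f]$. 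That is where NA is used, not in controlling small-but-nonzero entries.

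\textbf{The source of $s^{1/a}$.} The factor $s^{1/a}$ is not ``forced'' by the marginal condition. Matching $\E[g(D_e)]=x_e$ is only a normalization check and does not determine $g$. Nor is it Dirichlet aggregation that reduces the competition at $v_e$. Instead, that competition collapses to a single independent Exponential of rate $1-x_e$, which gives the exact identity $\E[X_e\mid A_e]=A_e^{1/x_e-1}$. Then by \eqref{betaident},
\[
A_e^{1/x_e-1}=\theta_e\, D_e^{(1-x_e)/a}\,\phi_{\rho_e}(D_e)^{1/x_e-1},
\]
where $D_e$ is the paper's $T_e$. Here $\theta_e\to1$ and $\phi_{\rho_e}^{1/x_e-1}\to1$ uniformly, and this is where $D_e^{1/a}$ actually comes from. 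Relatedly, \Cref{thm: E[U^q]} involves the exponents $1/x_e-1$ on the uniforms $A_e$, not a $1/a$-th moment; the exponent $1/a$ appears only on $D_e$ after this transformation.

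\textbf{Your ``main obstacle'' disappears.} Once these points are corrected, both ingredients are exact, non-asymptotic inequalities. \Cref{prop:dep-round}(5) holds for every $\vec x$. The $0^{\text{th}}$-order case of \Cref{mainestlemma}, which is \Cref{simp-corr}, uses only $b_j\ge 0$, $\sum_j\kappa_{i,j}=1$ and the monotonicity of $\alpha_{j_1,j_2}$. Together they give, for all admissible $x_e,x_f$,
\[
\E[X_eX_f]\le x_ex_f\binom{(2-x_e-x_f)/a}{(1-x_e)/a}^{-1}.
\]
The only limit taken is $x_e,x_f\to 0$ in this binomial coefficient, so there is no question about which way the $o(1)$ errors go.
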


 By contrast, the rounding scheme in \cite{harris2024dependent} would give $\E[X_e X_f] \leq \frac{2 x_e x_f}{1 + e^{1/a}}$ in the infinitesimal setting. The rounding schemes of \cite{bansal2016lift,im2020,baveja2024} have not been analyzed for general values of $a$; for $a = 1$, they would give coefficients of $\frac{107}{108}, \frac{2}{\mathrm{e}}$ and $\frac{26}{27}$ respectively, which apply also for non-infinitesimal $x_e, x_f$.
 
At a high level, our approach is similar to \cite{harris2024dependent}: we construct random variables for each edge that are marginally continuous uniform random variables, but collectively have strong negative correlation. Instead of generating them via the Multivariate Geometric distribution, we use a Dirichlet distribution.
The new algorithm is quantitatively stronger. It is also simpler: because the Dirichlet distribution has a continuous CDF, there is no need for the tie-breaking mechanisms required in \cite{im2023} or \cite{harris2024dependent}.

By plugging our new dependent-rounding algorithm into the framework of \cite{harris2024dependent}, we obtain an improved approximation algorithm for the problem of \emph{Scheduling on Unrelated Machines to Minimize Weighted Completion Time}, essentially for free. In addition, our new rounding algorithm can be used for a novel, and much simpler, algorithm for the problem of \emph{Oblivious Online  Matching}.  We now provide an overview of these problems, and how our algorithm applies to each.

\subsection{Application: Oblivious Online Matching}
The online dependent rounding problem was introduced in \cite{srinivasan2023online}, as an abstraction of allocation problems with information about long-term average-case demand. In this scenario, we have a bipartite graph $G = (U \cup V, E)$, where the ``offline nodes" $U$ are fixed and known in advance. Each time an ``online" node $v \in V$ appears, we learn the demand values $g_{(u,v)} : u \in U$.  We are guaranteed that the overall collection of demands $g_e$ forms a fractional matching of $G$.  Upon the arrival of $v$ we must immediately and irrevocably select some edge $e$ incident to $v$ to add to a matching $M$ (or opt to add none). The goal is to achieve $$
\Pr(e \in M) \geq \gamma \thinspace g_e \qquad \text{for some factor $\gamma \in [0,1]$}
$$

In the offline setting, it is trivial to achieve ratio $\gamma = 1$, since the bipartite fractional matching polytope has no integrality gap. There is a simple online scheme to achieve ratio $\gamma = 1 - 1/e \approx 0.632$. The work \cite{srinivasan2023online} showed an online upper-bound of $\gamma \leq 0.828$, and provided a sophisticated algorithm with $\gamma = 0.652$. 

The Dirichlet mechanism can be implemented in the online setting, which we use to provide an improved algorithm for this problem. We get the following result:
\begin{theorem}
There is an algorithm for Oblivious Online Matching with ratio $\negthinspace \gamma = 0.68$.
\end{theorem}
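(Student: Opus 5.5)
We build the algorithm directly from the Dirichlet mechanism, run online. Each offline node $u$ maintains a Dirichlet process. When online node $v$ arrives with demands $g_{(u,v)}$, each $u$ draws, for the new edge $e=(u,v)$, a value $Z_e$ whose marginal is uniform on $[0,1]$. The draw is made by sequential stick-breaking: the Dirichlet coordinate for $e$ is $\mathrm{Beta}(\rho_e, R_u-\rho_e)$ times the remaining stick, where $R_u$ is the remaining mass, and we transform it through the appropriate CDF. This works because Dirichlet variables are closed under aggregation, so the coordinate for $e$ can be generated online without knowing future edges. Node $v$ then proposes to the edge with the smallest transformed value, which is selected with probability exactly $g_e$, just as in the offline rounding. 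The edge is added to $M$ if $u$ is still unmatched. The point is that the proposals arriving at $u$ are strongly negatively correlated, which is exactly the property needed to beat the independent-rounding bound $1-1/e$.

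\textbf{Analysis.} Fix an edge $e=(u,v)$. Then
\[
\Pr(e\in M)=g_e\cdot\Pr(\text{no earlier edge at } u \text{ was proposed}\mid e\text{ proposed}).
\]
By negative association, the conditioning on $e$ only helps, so it suffices to bound the probability that $u$ receives no earlier proposal. Let $a=\sum_{f\text{ earlier}} g_f\le 1-g_e$. The event is that none of the earlier edges is the chosen one. By inclusion--exclusion, or more precisely by the moment bound $\E[\prod(1-X_f)]$, the key quantity is the probability that $u$ gets zero proposals among edges of total mass $a$.

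\textbf{Worst case and the main obstacle.} I would argue the worst case is infinitesimal $g_f$, by a splitting/convexity argument: splitting an edge into two pieces of half the mass only weakens correlation. In that regime the proposal count is governed by the formula in \Cref{thm: E[U^q]} and \Cref{poisson-prop}. The probability of no proposal then becomes an explicit integral in $a$, and we evaluate its minimum over $a\in[0,1]$ together with the factor from $e$ itself. For comparison, independent rounding gives $e^{-a}$, which is $1/e$ at $a=1$.

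This step is the main obstacle. Computing $\Pr(\text{zero proposals})$ requires integrating over the Dirichlet/Beta structure; the moment formula $\E[X_eX_f]\le x_ex_f/\binom{2/a}{1/a}$ controls only second moments. I would first try a bound using the first two moments, via a Bonferroni-type truncation $1-a+\tfrac12\E[N(N-1)]$. If that is too weak, I would instead express the probability directly as $\E[(1-F(\cdot))^{\ldots}]$ with an incomplete Beta function, and check numerically that the minimum is at least $0.68$. Choosing the parameter $\rho_e$ as a function of $g_e$ (e.g. $\rho_e=g_e/a$ with a tuned $a$) is a degree of freedom I would optimize numerically to reach $\gamma=0.68$.
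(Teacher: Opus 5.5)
\textbf{There is a genuine gap.} Dropping the conditioning on $e$ is valid but fatal. It discards exactly the correlation that the Dirichlet mechanism improves, and the quantity left over is made \emph{smaller} by negative dependence, not larger.

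To see this, condition on the values $A_{N(u)}$ at $u$. Each $1-X_f$ ($f\in L(e)$) is then a nonnegative increasing function of the block $(A_{f'})_{f'\in N(v_f)\setminus\{f\}}$ of NA variables at other left-nodes, and these blocks are disjoint. Moreover $\E[1-X_f\mid A_{N(u)}]=1-A_f^{1/g_f-1}$ is nonincreasing in $A_f$. Two applications of \cref{thm: NA properties}(c) give
\[
\Pr(\text{no earlier proposal at }u)\le\prod_{f\in L(e)}(1-g_f)\le \mathrm{e}^{-r_e}.
\]
So for an edge arriving after $u$ has accumulated demand close to $1$, your bound can never certify more than about $g_e/\mathrm{e}\approx 0.37\,g_e$. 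That is below even the simple $1-1/\mathrm{e}$ scheme, and your comparison with independent rounding's $\mathrm{e}^{-a}$ goes the wrong way. Two smaller problems: $1-a+\tfrac12\E[N(N-1)]$ is a Bonferroni \emph{upper} bound on $\Pr(N=0)$, not a lower bound; and the splitting argument reducing to infinitesimal demands is unproven. The paper handles general demands with the computer-verified \cref{lem:correlation-factor}.

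\textbf{What the paper does instead.} Its proof rests on two ideas your plan lacks.

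First, it keeps the pairwise correlation with $e$: $\Pr(e\in M)\ge\E[X_e]-\sum_{f\in L(e)}\E[X_eX_f]$, with $\E[X_eX_f]\le x_ex_f\Psi(x_e,x_f;\rho_e,\rho_f)$ by \cref{thm: E[U^q]}. So second moments suffice, and the obstacle you identify disappears.

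Second, and essentially, it \emph{attenuates}. Edge $e$ is selected with probability $x_e\approx F(r_e)g_e<g_e$, where $F$ rises from about $0.68$ to about $1$. $F$ comes from a differential equation that makes $F(r)\bigl(1-cQ(0,r)\bigr)$ nearly constant, using the telescoping identity $\sum_{f\in L(e)}y_f=Q(0,r_e)$. Attenuation is also what permits $\rho_e=\alpha y_e$ with $\alpha=1.2337>1$ while keeping $\sum_{e\in N(u)}\rho_e\le 1$, which gives $\Psi\approx\binom{2\alpha}{\alpha}^{-1}\approx0.39$.

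With your rule $\rho_e=g_e/a$, feasibility forces $a\ge1$. Hence $\Psi\ge\frac12$ in the infinitesimal limit, and the same union bound gives only about $g_e/2$ for late edges. This is not just an artifact of the analysis. Suppose $u$ receives $k$ edges of demand $1/k$, and each of their right-nodes sends its remaining demand $1-1/k$ to a private left-node. Then with $\rho_e=g_e$ and no attenuation, the last edge is matched with probability at most $(1-1/\mathrm{e}+o(1))\,g_e$ as $k\to\infty$, which falls short of $0.68$.
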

Beyond the slightly improved factor, the algorithm is much simpler than \cite{srinivasan2023online}. Our full algorithm fits comfortably on half a page.  We emphasize that this is the first  new application of strongly negatively correlated randomized rounding schemes since its introduction in~\cite{bansal2016lift}.

\subsection{Application: Weighted Completion Time}
The classical scheduling problem of \emph{Weighted Completion Time on Unrelated Machines} is denoted in the scheduling literature as $R || \sum_j w_j C_j$. We have a set of machines $\mathcal M$ and a set of jobs $\mathcal J$, where each job $j$ has a weight $w_{j}$ and a separate processing time $p_{ij}$ on each machine $i$. The objective is to assign jobs to machines in some order, so as to minimize the total weighted completion time $\sum_j w_j C_j$, where $C_j$ is the cumulative processing time of all jobs assigned to machine $i$ up to and including $j$, $\sum_{j'\leq j}p_{ij'}$. There is also a variant where the weight of a job $j$ may depend on the machine $i$ to which it is assigned.

This problem has attracted attention, in
part, because it leads to sophisticated rounding algorithms. In particular, the objective function can be seen as a \emph{quadratic} function of the underlying assignment variables. Breaking a long-standing barrier, \cite{bansal2016lift} devised a rounding algorithm with an approximation ratio $1.5 - \eps$ for a very small constant $\eps > 0$, based on dependent rounding with negative correlation. 

Since then, there have been a number of improved approximation algorithms using various dependent rounding schemes. Most of these works can be summarized in the following framework: first solve a convex relaxation, next cluster the jobs by processing times and weights, and finally apply a rounding scheme with strong negative correlation within clusters. (The recent algorithm \cite{li2025approximating} is an exception to this pattern: it is more in the genre of iterated rounding.) We summarize these algorithms  as follows:

\begin{table}[H]
    \begin{center}
    \begin{tabular}{|c|c|c|c|}
    \hline
      Ref.   & Relaxation & Machine-varying & Ratio \\
      & & weights? & \\
      \hline
      \hline
    \cite{bansal2016lift}  & Semidefinite Program & $\checkmark$ & $1.5 - \eps$\\
\cite{li2020} & Time-indexed LP         & $\checkmark$ & $1.5 - \eps$ \\
\cite{im2020} & Time-indexed LP         & $\checkmark$ & $1.488$ \\
\cite{im2023}   & Time-indexed LP & $\checkmark$ & $1.45$ \\
\cite{harris2024dependent} & Semidefinite Program & $\checkmark$& $1.398$ \\
\cite{li2025approximating} & Configuration LP & \text{\sffamily X} & $1.36$ \\
\hline
This work & Semidefinite Program  & $\checkmark$ & $1.387^{\dagger}$ \\
\hline
    \end{tabular}
    \label{tab:my_label}
\end{center}
\end{table}

Plugging the Dirichlet mechanism into the algorithm of \cite{harris2024dependent} gives us a slightly stronger approximation ratio, essentially for free. (Since our analysis is very similar to \cite{harris2024dependent}, we used off-the-shelf numerical optimization methods instead of exact arithmetic for the computations. As a result, the numerical figure here is technically a pseudo-theorem, which we marked with a dagger.)
\section{Preliminaries}\label{sec: 02_Dirichlet}
In this section we review background on two key probabilistic notions used throughout the paper: (i) Negative Association of random variables, and (ii) the Dirichlet distribution and its properties.

At several points, we silently assume that relevant quantities are non-zero. For example, we may write expressions such as $\frac{1}{x} - 1$ without verifying that $x > 0$. In all cases, the formulas can be extended to the boundary case $x = 0$ in a straightforward way. This convention allows us to avoid cluttering the exposition with excessive edge cases.

\subsection{Negative Association}
We begin by reviewing the general properties of negatively associated (NA) random variables. 

\begin{definition}[Negatively Associated random variables]
    A finite collection of random variables $\chi = (X_1, X_2, \dots, X_n)$ is  \textit{negatively associated} (NA) if for every pair of disjoint subsets $A,B\subseteq[n]$, we have $$    \mathbb{C}\text{ov}(f(X_i, i \in A), g(X_j, j \in B)) \leq 0$$ for all functions $f$ and $g$ that are increasing in each argument.
\end{definition}

It is easy to see that if $\chi$ and $\chi'$ are collections of NA random variables, and $\chi$ is independent of $\chi'$, then their union $\chi \cup \chi'$ is also NA. We quote a few standard facts from \cite{Joagdev1983NegativeAO}.
\begin{theorem}\label{thm: NA properties}
Let \(X_1, \dots, X_k\) be NA random variables. Then:
\begin{enumerate}
    \item[(a)] If \(X_1, \dots, X_k\) are nonnegative, then \( \mathbb{E}[X_1 \cdots X_k] \leq \mathbb{E}[X_1] \cdots \mathbb{E}[X_k] \).
 
    \item[(b)] If \(f_1, \dots, f_\ell\) are functions defined on disjoint subsets of \(\{1, \dots, k\}\), where \(f_1, \dots, f_\ell\) are either all monotonically non-increasing or all monotonically non-decreasing, then the random variables \(f_i(\vec{X})\) for \(i = 1, \dots, \ell\) are NA. 
    \item[(c)] (Restatement of (a) + (b)) If  \(f_1, \dots, f_\ell\) are nonnegative functions defined on disjoint subsets of \(\{1, \dots, k\}\), where \(f_1, \dots, f_\ell\) are either all monotonically non-increasing or all monotonically non-decreasing, then
    $$
    \E \Bigl[ \prod_{i=1}^{\ell} f_i(\vec X) \Bigr] \leq \prod_{i=1}^{\ell} \E \Bigl[ f_i(\vec X) \Bigr]
    $$ 
\end{enumerate}
\end{theorem}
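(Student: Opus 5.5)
The statement to prove is \Cref{thm: NA properties}, the standard facts about NA variables quoted from \cite{Joagdev1983NegativeAO}. I will prove (b) first, then (a) by induction using (b), and obtain (c) by combining them.

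\emph{Part (b).} Let $f_1,\dots,f_\ell$ be non-decreasing, with $f_i$ depending only on coordinates in $S_i$, and the $S_i$ pairwise disjoint. Take disjoint index sets $A,B\subseteq[\ell]$ and non-decreasing $g,h$. Then $g(f_i(\vec X): i\in A)$ is a non-decreasing function of the coordinates in $\bigcup_{i\in A}S_i$, and similarly $h(f_j(\vec X):j\in B)$ is non-decreasing in the coordinates of $\bigcup_{j\in B}S_j$. These two coordinate sets are disjoint, so the definition of NA for $\vec X$ gives $\mathbb{C}\text{ov}\le 0$. If instead all the $f_i$ are non-increasing, compose with negation: $g(f_i)$ is non-increasing in the coordinates, so $-g(\cdot)$ and $-h(\cdot)$ are non-decreasing. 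Since $\mathbb{C}\text{ov}(-G,-H)=\mathbb{C}\text{ov}(G,H)$, the same bound holds.

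\emph{Part (a).} I will induct on $k$; the case $k=1$ is trivial. For $k\ge 2$, let $G = X_1\cdots X_{k-1}$ and $H = X_k$. Both are non-decreasing functions of disjoint coordinate sets, because nonnegativity makes the product monotone in each argument. NA then gives $\E[GH]\le \E[G]\E[H]$. Any subcollection of NA variables is NA, since the definition restricts to subsets. So the induction hypothesis gives $\E[G]\le\prod_{i<k}\E[X_i]$, and since $\E[X_k]\ge 0$ we can multiply through to conclude.

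The one technical point is integrability, because covariance requires finite second moments. I would handle it by applying the argument to truncations $\min(X_i,M)$. These are non-decreasing functions, so they remain NA by (b), and they are bounded. Letting $M\to\infty$, monotone convergence on both sides finishes the argument. I expect this truncation step to be the only real obstacle, and it is routine.

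\emph{Part (c).} By (b), the variables $Y_i=f_i(\vec X)$ are NA, and they are nonnegative by hypothesis. Applying (a) to the $Y_i$ gives (c). In the non-increasing case, (b) still gives that the $Y_i$ are NA, and nonnegativity is all that (a) needs.
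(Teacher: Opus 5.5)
Your proof is correct. The paper gives no argument for this theorem and simply quotes it from Joag-Dev and Proschan, so your proposal supplies a self-contained derivation where the paper has only a citation.

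Your route is essentially the standard argument from that reference, so nothing is lost relative to the citation. It has three steps:
\begin{enumerate}
\item Show that NA is preserved under coordinatewise non-decreasing functions of disjoint blocks; for the non-increasing case, negate and use $\mathbb{C}\text{ov}(-G,-H)=\mathbb{C}\text{ov}(G,H)$.
\item Prove the product inequality by induction on $k$, splitting $X_k$ off against $X_1\cdots X_{k-1}$.
\item Combine the two to obtain (c).
\end{enumerate}

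What your version adds is the explicit truncation step. The paper's definition of NA omits the usual proviso that the relevant covariances exist. Applying NA to the bounded variables $\min(X_i,M)$, which are NA by your part (b), and then passing to the limit by monotone convergence is the right way to make (a) valid for unbounded nonnegative variables. In all of the paper's applications the relevant variables ($A_e$, $A_e^{1/x_e-1}$, and the indicators $X_e$) take values in $[0,1]$, so this issue never arises there.

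Two cosmetic points:
\begin{itemize}
\item The function $x_1\cdots x_{k-1}$ is coordinatewise non-decreasing only on the nonnegative orthant. Formally you should use $\prod_{i<k}\max(x_i,0)$, which agrees with it almost surely.
\item In the limit $M\to\infty$, the convention $0\cdot\infty=0$ is harmless, because $\E[X_i]=0$ together with $X_i\ge 0$ forces $X_i=0$ almost surely.
\end{itemize}
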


\subsection{The Dirichlet distribution}
Our algorithms center around a probability distribution known as the \emph{Dirichlet distribution}. In order to define it, we first introduce the \emph{Beta function}, 
$$
\Be(a,b) = \frac{\Gamma(a) \Gamma(b)}{\Gamma(a+b)} = \int_0^1 t^{a-1} (1-t)^{b-1}  \ \mathrm{d}t
$$
where $\Gamma$ is the Gamma function. This is closely related to the generalized Binomial coefficient defined by
$$
\binom{x}{y} := \frac{\Gamma(x+1)}{\Gamma(y+1) \Gamma(x-y+1)} = \frac{1}{\be(x-y, y+1) (x-y)}
$$

There are a number of important variants and extensions of the Beta function that will be useful for our analysis.  The first is the \emph{Incomplete Beta function}, given by 
\[\Be(z; a, b) = \int_0^z t^{a-1} (1-t)^{b-1} \ \mathrm{d} t\]
The \emph{regularized Incomplete Beta function} is defined as
\[I(z; a, b) = \frac{\Be(z; a, b)}{\Be(a, b)},\]
which is normalized so that $I(0;a,b) =0, I(1;a,b) = 1$.

Another variant is the \emph{multivariate} Beta function, defined for a vector $\vec \rho = (\rho_0, \dots, \rho_n)$ by
 \[
    \Be(\vec \rho) = \frac{\prod_{i=0}^n \Gamma(\rho_i)}{\Gamma\left(\sum_{i=0}^n \rho_i\right)}
    \]

\begin{definition}[Dirichlet distribution]
    The \textit{Dirichlet distribution} is a family of continuous multivariate probability distributions parameterized by a vector of positive reals \(\vec \rho = (\rho_1, \rho_2, \dots, \rho_n)\). It is denoted as \(\operatorname{Dir}(\vec\rho)\) and defined on the \( (n-1) \)-dimensional simplex by the probability density function
    \[
    f(x_1, \dots, x_n; \rho_1, \dots, \rho_n) = \frac{1}{\Be(\vec\rho)} \prod_{i=1}^n x_i^{\rho_i - 1}
    \]
    where \(x_i \geq 0\), \(\sum_{i=1}^n x_i = 1\).
\end{definition}

The Dirichlet distribution can be viewed as a distribution over vectors with \(x_1 + \dots + x_n \leq 1\), with PDF:
$$
 f(x_1, \dots, x_n; \rho_1, \dots, \rho_n, \rho_0) = \frac{1}{\Be(\vec\rho)} (1 - x_1 - \dots - x_n)^{\rho_0 - 1} \prod_{i=1}^n x_i^{\rho_i - 1} 
 $$
 for $\vec\rho = (\rho_1, \dots, \rho_n, \rho_0)$. This effectively adds a dummy variable \(x_0\) as the $(n+1)^{\text{st}}$ dimension to lie on the $n$-dimensional simplex.

\begin{fact}[Properties of the Dirichlet distribution]\label{fact:prop-of-dirichlet}
Let $(X_1, \dots, X_n) \sim \operatorname{Dir}(\rho_1, \dots, \rho_n)$.
\label{dirichletfact}
    \mbox{}\\
    \vspace{-0.2in}
    \begin{enumerate}
        \item \textbf{Marginals have a Beta distribution:} Each coordinate $i$ satisfies $X_i \sim \operatorname{Beta}(\rho_i, \sum_{j \neq i} \rho_j)$. Its CDF is given by
        $$
        \Pr(X_i \leq z) = I(z; \rho_i, \sum_{j \neq i} \rho_j) \qquad \text{for $z \in [0,1]$}.
        $$
        
\item \textbf{Symmetry:} If $\pi$ is a permutation of $\{1, \dots, n \}$, then $(X_{\pi 1}, \dots, X_{\pi n}) \sim \operatorname{Dir}(\rho_{\pi 1}, \dots, \rho_{\pi n})$.

        \item \textbf{Aggregation:} If two components $X_i$ and $X_j$ are aggregated by replacing  $X_i$ and $X_j$ with $X_i + X_j$, the new vector $X' = (X_1, \ldots, X_i + X_j, \ldots, X_n)$ also follows a Dirichlet distribution, 
        $$X' \sim \operatorname{Dir}(\rho_1, \ldots, \rho_i + \rho_j, \ldots, \rho_n).$$

        \item \textbf{Neutrality:} For any $k \leq n$, the random variable $X_k$ is independent of 
        $$
        \left(\frac{X_1}{1-X_k}, \dots, \frac{X_{k-1}}{1-X_k}, \frac{X_{k+1}}{1-X_k}, \dots, \frac{X_n}{1-X_k}\right).
        $$

        \item \textbf{Negative association:} $X_1, \dots, X_n$ are NA r.v.'s  
        \cite{barthe2010generalized}
        
        \item \textbf{Moments:} For a vector of nonnegative reals $\vec \beta = (\beta_1, \dots, \beta_n)$, there holds $$\E\left[\prod_{i=1}^{n}X_i^{\beta_i}\right] = \frac{\Be(\vec \rho + \vec \beta)}{\Be(\vec \rho)} = 
 \frac{\Gamma\left(\sum_{i=1}^{n}\rho_i\right)}{\Gamma \left( \sum_{i=1}^{n} (\rho_i+\beta_i) \right) } \prod_{i=1}^{n}\frac{\Gamma(\rho_i + \beta_i)}{\Gamma(\rho_i)}.$$ 
    \end{enumerate}
\end{fact}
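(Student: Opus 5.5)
The statement to be proved is \Cref{fact:prop-of-dirichlet}. I would prove all six items from one representation of the Dirichlet law via independent Gamma variables. Let $G_1,\dots,G_n$ be independent with $G_i \sim \operatorname{Gamma}(\rho_i,1)$, put $S = \sum_i G_i$, and set $X_i = G_i/S$. The first step is to verify that $(X_1,\dots,X_n)\sim\operatorname{Dir}(\vec\rho)$ and that $S\sim\operatorname{Gamma}(\sum_i\rho_i,1)$ is independent of $\vec X$. For this I would change variables $(g_1,\dots,g_n)\mapsto(x_1,\dots,x_{n-1},s)$, whose Jacobian is $s^{n-1}$. The joint density then factors as
\[
\frac{1}{\Be(\vec\rho)}\prod_{i=1}^n x_i^{\rho_i-1}\cdot \frac{s^{\sum_i\rho_i-1}e^{-s}}{\Gamma(\sum_i\rho_i)}.
\]
Once this representation is in hand, most properties are short consequences.

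\textbf{Symmetry and aggregation.} Symmetry is immediate, because the density is invariant under permuting the coordinates together with the parameters. For aggregation, $G_i+G_j\sim\operatorname{Gamma}(\rho_i+\rho_j,1)$ is independent of the remaining $G_k$, and the normalizer $S$ does not change. Hence $X'$ is the Dirichlet vector built from the aggregated Gamma family.

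\textbf{Marginals.} Aggregating every coordinate except $i$ reduces to $(X_i,1-X_i)\sim\operatorname{Dir}(\rho_i,\sum_{j\ne i}\rho_j)$. This is exactly $\operatorname{Beta}(\rho_i,\sum_{j\ne i}\rho_j)$, whose CDF is $I(z;\rho_i,\sum_{j\ne i}\rho_j)$ by the definition of the regularized incomplete Beta function.

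\textbf{Neutrality.} For $j\ne k$ we have $X_j/(1-X_k) = G_j/\sum_{l\ne k}G_l$. This vector is a function of $(G_l)_{l\ne k}$ only, and by the first step it is independent of $T=\sum_{l\ne k}G_l$. It is also trivially independent of $G_k$. Therefore it is independent of the pair $(G_k,T)$, and hence of $X_k = G_k/(G_k+T)$. The one point requiring care is the joint independence: I would phrase it as independence of the $\sigma$-algebras generated by $G_k$, $T$, and the normalized vector, which follows from the first step applied to the family $(G_l)_{l\ne k}$.

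\textbf{Moments.} I would integrate directly:
\[
\E\Bigl[\prod_i X_i^{\beta_i}\Bigr] = \frac{1}{\Be(\vec\rho)}\int_{\Delta}\prod_i x_i^{\rho_i+\beta_i-1} = \frac{\Be(\vec\rho+\vec\beta)}{\Be(\vec\rho)}.
\]
This uses the fact that the Dirichlet density with parameters $\vec\rho+\vec\beta$ integrates to $1$. Expanding $\Be$ in terms of $\Gamma$ gives the stated formula.

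\textbf{Negative association (main obstacle).} Item 5 is the hardest part, and here I would cite \cite{barthe2010generalized} as the statement does. If a self-contained route is wanted, I would try the following. Independent Gamma variables are NA. Conditioning them on the event $\sum_i G_i = 1$ yields exactly $\vec X$, because $S$ is independent of $\vec X$. The Gamma density is log-concave when $\rho_i\ge1$, so Joag-Dev–Proschan's theorem on independent log-concave variables conditioned on their sum gives NA in that case. The difficulty is the range $\rho_i<1$, where log-concavity fails. For that range I would fall back on the cited generalized result rather than reprove it.
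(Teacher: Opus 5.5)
Your proposal is correct and does more than the paper, which gives no proof of this Fact. The paper treats items 1--4 and 6 as standard properties and cites \cite{barthe2010generalized} for item 5. Your Gamma representation $X_i=G_i/S$ with $S$ independent of $\vec X$ is the standard way to justify items 1--4 and 6, and your steps are sound: the Jacobian $s^{n-1}$, aggregation of independent Gammas, neutrality, and the moment integral against the Dirichlet normalization. For neutrality, the joint-independence point you flagged holds because $G_k$ is independent of the pair $(W,T)$, not merely of $W$. For item 5 you fall back on the same citation, so nothing is missing relative to the paper.

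Your self-contained route to negative association covers no case the paper uses. The Joag-Dev--Proschan conditioning theorem needs all Gamma densities to be log-concave, i.e.\ every $\rho_i\ge 1$. But \textsc{DirichletCopula} requires $\vec\rho\in(0,1)^n$, and its dummy parameter $1-\sum_i\rho_i$ is also strictly below $1$.

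You do not need that route. Your own representation gives NA for all positive parameters:
\begin{enumerate}
\item For disjoint $A,B$, aggregate the remaining coordinates into a block $C$. Let $W_A,W_B,W_C$ be the block sums and $U_A,U_B$ the within-block normalized vectors, so $X_A=W_AU_A$ and $X_B=W_BU_B$.
\item Apply your first step within each block. Since the blocks of $G$ are independent, $U_A$, $U_B$ and $(W_A,W_B,W_C)$ are mutually independent.
\item For increasing $f,g$, this gives $\mathrm{Cov}(f(X_A),g(X_B))=\mathrm{Cov}(\tilde f(W_A),\tilde g(W_B))$. Here $\tilde f(w)=\E[f(wU_A)]$ and $\tilde g(w)=\E[g(wU_B)]$ are nondecreasing.
\item By neutrality, $W_B=(1-W_A)V$ with $V$ independent of $W_A$. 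So $\E[\tilde g(W_B)\mid W_A]$ is a nonincreasing function of $W_A$.
\item The covariance of a nondecreasing and a nonincreasing function of the same variable is nonpositive, which gives NA.
\end{enumerate}
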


\section{The Dirichlet mechanism}
\label{sec:dir}
We introduce the \emph{Dirichlet mechanism} for bipartite rounding, in two parts. The first part is the generation of a certain type of copula (i.e. a multivariate probability distribution whose marginals are all $\text{Unif}[0,1]$) with negative correlation properties.  This is achieved by transforming multivariate Dirichlet variables. The second part is a method of bipartite selection using these copulas. This rounding step is exactly the same as the method of \cite{harris2024dependent}; for completeness, we include the analysis in this section.

\subsection{The Dirichlet copula}
This step is deceptively simple: generate a multivariate Dirichlet vector and transform each of its marginals to a Uniform. Here we use \Cref{fact:prop-of-dirichlet}(1), which gives the explicit marginal CDF in terms of the regularized Incomplete Beta function $I(z;\rho_i;1-\rho_i)$.

\begin{algorithm}[H]
\caption{\textsc{DirichletCopula}($\vec{\rho}$)}
\begin{algorithmic}[1]
\REQUIRE $\vec{\rho}\in(0,1)^n$ with $\sum_{i=1}^n\rho_i \leq 1$
\STATE Draw vector $(T_1,\ldots,T_n)\sim\text{Dir}(\rho_1,\ldots,\rho_n, 1 - \rho_1 - \dots - \rho_n)$ from a Dirichlet distribution.  \thickspace \thickspace  \thickspace \thickspace \thickspace \thickspace \thickspace \thickspace \thickspace \thickspace \thickspace \thickspace \thickspace \thickspace \thickspace \thickspace \thickspace \thickspace (See e.g \cite{gelman2003bayesian} for a discussion of how to do this efficiently.)  \\

\FOR{$i=1,\ldots, n$}
    \STATE Compute $A_i\gets I(T_i;\rho_i,1-\rho_i)$
\ENDFOR
\RETURN $(A_1,\ldots, A_n)$
\end{algorithmic}\label{alg:correlated-uniform-rvs}
\end{algorithm}
 
\begin{observation}
For $(A_1, \dots, A_n) \sim \text{DirichletCopula}(\vec \rho)$, each variable $A_i$ follows the distribution $\text{Unif}([0,1])$
\end{observation}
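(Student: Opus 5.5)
This follows from the probability integral transform, applied to the marginal law of each $T_i$. Fix an index $i$. The full parameter vector passed to the Dirichlet draw is $(\rho_1,\dots,\rho_n,\rho_0)$ with $\rho_0 = 1-\sum_j \rho_j$, so the parameters sum to $1$. By \Cref{fact:prop-of-dirichlet}(1), the marginal $T_i$ is $\operatorname{Beta}(\rho_i, \sum_{j\neq i}\rho_j)$, and here $\sum_{j \neq i} \rho_j$ (including $\rho_0$) equals $1-\rho_i$. So $T_i \sim \operatorname{Beta}(\rho_i,1-\rho_i)$, and its CDF is $F(z) = I(z;\rho_i,1-\rho_i)$ for $z\in[0,1]$. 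Both parameters are positive because $\rho_i\in(0,1)$.

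If $\rho_0 = 0$, the dummy coordinate is degenerate; I would simply drop it. Aggregation (\Cref{fact:prop-of-dirichlet}(3)) gives the same marginal either way, since the remaining parameters still sum to $1-\rho_i$.

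Next I check that $F$ is continuous and strictly increasing on $[0,1]$. It is the integral of the Beta density $t^{\rho_i-1}(1-t)^{-\rho_i}/\Be(\rho_i,1-\rho_i)$. This density is integrable, since both exponents exceed $-1$, and it is strictly positive on $(0,1)$. Hence $F$ is a continuous bijection from $[0,1]$ onto $[0,1]$ with $F(0)=0$ and $F(1)=1$, and it has a continuous inverse.

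Finally, for any $u\in[0,1]$,
\[
\Pr(A_i \le u) = \Pr(F(T_i)\le u) = \Pr(T_i \le F^{-1}(u)) = F(F^{-1}(u)) = u,
\]
so $A_i \sim \mathrm{Unif}([0,1])$.

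There is no real obstacle here. The only point needing care is confirming that the second Beta parameter is exactly $1-\rho_i$, which depends on the dummy coordinate $\rho_0$ in Algorithm \ref{alg:correlated-uniform-rvs}, and noting that continuity of $F$ gives the transform directly, with no atoms and so no tie-breaking needed.
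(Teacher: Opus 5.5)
Your proof is correct and follows the paper's argument. Both identify the marginal of $T_i$ as $\operatorname{Beta}(\rho_i,1-\rho_i)$ via \Cref{fact:prop-of-dirichlet}(1), then apply the probability integral transform using continuity of $I(\cdot;\rho_i,1-\rho_i)$; your extra checks (the dummy coordinate forcing the second parameter to equal $1-\rho_i$, the degenerate case $\rho_0=0$, and strict monotonicity of the CDF) fill in details the paper leaves implicit.
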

\begin{proof}
By \Cref{dirichletfact}(1), each $T_i$ is marginally distributed as $\text{Beta}(\rho_i; 1 - \rho_i)$ with CDF $I(x, \rho_i, 1 - \rho_i)$, which is a continuous function of $x$. Thus, the mapping from $T_i$ to $A_i$ is the well-known probability integral transform for continuous random variables.
\end{proof}

\begin{observation}
\label{dirichlet-subsample}
Suppose $(A_1, \dots, A_n) \sim \text{DirichletCopula}(\vec \rho)$, and $i_1, \dots, i_k$ are any distinct indices in $[n]$. Then $(A_{i_1}, \dots, A_{i_k}) \sim \textsc{DirichletCopula}(\rho_{i_1}, \dots, \rho_{i_k})$.
\end{observation}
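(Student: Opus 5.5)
The plan is to reduce the statement to the aggregation and symmetry properties of the Dirichlet distribution (\Cref{dirichletfact}(2),(3)), together with the fact that each $A_i$ is a fixed deterministic function of $T_i$ and $\rho_i$ alone.

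\emph{Step 1: the original vector.} Write $\rho_0 = 1 - \sum_{i=1}^n \rho_i$. In $\textsc{DirichletCopula}(\vec\rho)$ we draw $(T_1,\dots,T_n,T_0) \sim \operatorname{Dir}(\rho_1,\dots,\rho_n,\rho_0)$. Here $T_0 = 1 - \sum_i T_i$ is the dummy coordinate. Then we set $A_i = I(T_i;\rho_i,1-\rho_i)$. The key point is that the map $T_i \mapsto A_i$ depends only on $T_i$ and $\rho_i$, and not on the other parameters.

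\emph{Step 2: the subsampled vector.} Let $S=\{i_1,\dots,i_k\}$. By symmetry (\Cref{dirichletfact}(2)), reorder the coordinates so that $i_1,\dots,i_k$ come first. Then repeatedly apply aggregation (\Cref{dirichletfact}(3)) to merge $T_0$ and all $T_j$ with $j\notin S$ into a single coordinate $T'_0 = 1-\sum_{j\in S}T_j$. This gives
\[
(T_{i_1},\dots,T_{i_k},T'_0)\sim \operatorname{Dir}\Bigl(\rho_{i_1},\dots,\rho_{i_k},\,\rho_0+\textstyle\sum_{j\notin S}\rho_j\Bigr).
\]
The last parameter equals $1-\sum_{j\in S}\rho_j$. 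This is exactly the distribution drawn in line 1 of $\textsc{DirichletCopula}(\rho_{i_1},\dots,\rho_{i_k})$.

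There is a boundary case when $S=[n]$ and $\rho_0=0$, so that no dummy coordinate is present. In that case the claim is immediate, up to the paper's stated convention on degenerate parameters.

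\emph{Step 3: apply the transform.} Applying the same coordinatewise maps $T_{i_\ell}\mapsto I(T_{i_\ell};\rho_{i_\ell},1-\rho_{i_\ell})$ to the two identically distributed vectors produces identically distributed outputs. These outputs are $(A_{i_1},\dots,A_{i_k})$ on one side and the output of $\textsc{DirichletCopula}(\rho_{i_1},\dots,\rho_{i_k})$ on the other.

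The only mildly delicate point is Step 2. One has to check that repeated aggregation applies to an arbitrary set of coordinates, which follows by induction using symmetry to make the merged coordinates adjacent. One also has to note that the input condition $\sum_{j\in S}\rho_j\le 1$ is inherited from $\vec\rho$. No real obstacle is expected beyond this.
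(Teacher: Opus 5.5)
Your proposal is correct and takes the same approach as the paper. The paper's proof simply cites the symmetry and aggregation properties of the Dirichlet distribution. You spell out the details: you merge the dummy coordinate with the unselected coordinates, and you note that the map $T_i \mapsto I(T_i;\rho_i,1-\rho_i)$ depends only on $T_i$ and $\rho_i$.
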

\begin{proof}
Follows from symmetry and aggregation properties of the Dirichlet.
\end{proof}

The $\vec \rho$ vector is somewhat mysterious; roughly speaking, $\rho_i$ controls the amount of ``anti-correlation" of $A_i$ against the other coordinates $A_j$. If $\rho_i = 0$, then $A_i$ is just an independent uniform. The following result makes this picture more precise.
\begin{proposition}[Concordance monotonicity in $\rho$]
\label{prop:dirichlet-concordance}
Let $\vec A, \vec A'$ be generated by
$\textsc{DirichletCopula}$ for parameter vectors $\vec \rho, \vec \rho'$ on $n$ coordinates with $\rho_i \leq \rho'_i$ for all $i \in [n]$.  Then
$$
\vec A' \preceq_{\mathrm c} \vec A
$$
where $\preceq_{\mathrm c}$ denotes precedence in the concordance order; that is, for every $\vec u \in [0,1]^n$,
\[
\Pr \Bigl( \bigwedge_{i} A'_i\leq u_i \Bigr)
\leq
\Pr \Bigl( \bigwedge_{i} A_i\leq u_i \Bigr)  \qquad \text{ and } \qquad
\Pr \Bigl( \bigwedge_{i} A'_i>u_i \Bigr) \leq 
\Pr \Bigl( \bigwedge_{i} A_i>u_i \Bigr)
\]
\end{proposition}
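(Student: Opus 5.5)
The plan is to reduce to changing one coordinate at a time. The concordance order is transitive, so it suffices to treat $\vec\rho'$ that differs from $\vec\rho$ only in coordinate $1$, with $\rho'_1 = \rho_1+\delta$. By \Cref{dirichlet-subsample} and symmetry we may relabel coordinates freely. Write $\rho_0 = 1-\sum_i \rho_i$ for the slack parameter. The change $\rho_1 \mapsto \rho_1+\delta$ then forces $\rho_0 \mapsto \rho_0-\delta$, and nothing else changes.

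Next I would condition on the first coordinate using neutrality (\Cref{fact:prop-of-dirichlet}(4)). Write $T_j = (1-T_1)W_j$ for $j \ge 2$, where $\vec W \sim \operatorname{Dir}(\rho_2,\dots,\rho_n,\rho_0)$ is independent of $T_1 \sim \operatorname{Beta}(\rho_1,1-\rho_1)$. Let $F_{\rho}(t)=I(t;\rho,1-\rho)$. Then $A_1 = F_{\rho_1}(T_1)$ is uniform, and
\[
\Pr\Bigl(\bigwedge_i A_i \le u_i\Bigr) = \int_0^{u_1} h_{\rho_0}\bigl(F_{\rho_1}^{-1}(a)\bigr)\,\mathrm{d}a,
\qquad
h_{\rho_0}(t) = \Pr\Bigl(\bigwedge_{j\ge 2} F_{\rho_j}\bigl((1-t)W_j\bigr) \le u_j\Bigr).
\]
There is an analogous formula for the upper orthant, with $\int_{u_1}^1$ and the reversed events. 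It therefore suffices to show, pointwise in $a$, that the conditional law of the vector $(T_2,\dots,T_n)$ given $A_1=a$ becomes stochastically larger in every coordinate when $\rho_1$ increases. This single statement gives both inequalities, because $\{T_j \le \cdot\}$ is a down-set and $\{T_j>\cdot\}$ is an up-set.

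The argument has two monotone effects, and they pull in opposite directions. First, lowering $\rho_0$ makes $\vec W$ jointly larger. To see this, use the Gamma representation $W_j = G_j/(\sum_{k\ge2} G_k + G_0)$ and couple $G_0 \sim \operatorname{Gamma}(\rho_0)$ as $G_0^{-} + G_\delta$, where $G_0^{-}\sim\operatorname{Gamma}(\rho_0-\delta)$. Removing $G_\delta$ increases every $W_j$ simultaneously, so $h$ decreases; this is the desired direction. Second, $\operatorname{Beta}(\rho_1+\delta,1-\rho_1-\delta)$ dominates $\operatorname{Beta}(\rho_1,1-\rho_1)$ in likelihood ratio, since the density ratio $\propto (t/(1-t))^{\delta}$ is increasing. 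Hence $F^{-1}_{\rho_1'}(a)\ge F^{-1}_{\rho_1}(a)$, which shrinks the factor $1-t$ and works against us.

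This compensation is the main obstacle. My first attempt would be a single joint Gamma coupling of the whole configuration. Take $T_1 = G_1/(G_1+R)$ and $T_j = G_j/(G_1+R)$, where $R = G_0+\sum_{j\ge2}G_j$. Passing from $\vec\rho$ to $\vec\rho'$ then amounts to moving a $\operatorname{Gamma}(\delta)$ piece $G_\delta$ from $G_0$ into $G_1$. The quantity $T_1+\sum_{j\ge2}T_j$'s complement, namely $T_0=G_0/(G_1+R)$, shrinks under this move. So the question becomes whether, at a fixed quantile level of $T_1$, the remaining mass $\sum_{j\ge 2}T_j$ is stochastically nondecreasing.

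Concretely, conditional on $A_1=a$, I would compare the laws of $G_0/(G_1+R)$ under the two parameter choices via an infinitesimal perturbation $\delta\to0$. The derivative in $\delta$ splits into exactly the two competing terms above. I would try to show the net sign is correct by expressing both terms through the Beta density and the Incomplete Beta quantile derivative $\partial_\rho F_\rho^{-1}(a)$. If a clean coupling fails, the fallback is to verify that $\partial_{\rho_1}$ of the orthant probability is nonpositive by direct differentiation. In that computation, the positivity of one Incomplete Beta derivative identity would be the crux.
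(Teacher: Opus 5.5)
\textbf{There is a genuine gap.} The pointwise-in-$a$ statement you reduce to is false. Even if it held, it would give the second inequality in the wrong direction.

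A stochastic increase of the conditional law of $(T_2,\dots,T_n)$ given $A_1=a$ lowers down-set probabilities, which is what you want for $\int_0^{u_1}$. But it \emph{raises} up-set probabilities. Plugged into $\int_{u_1}^1\Pr(\bigwedge_{j\ge2}A_j>u_j\mid A_1=a)\,\mathrm{d}a$, it would yield $\Pr(\bigwedge_i A'_i>u_i)\ge\Pr(\bigwedge_i A_i>u_i)$, the opposite of the claim.

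The statement is also false outright. By \Cref{dirichlet-subsample}, the law of $(A_2,\dots,A_n)$ does not depend on $\rho_1$, and $A_1,A_1'$ are both uniform. So both conditional families average over $a\in[0,1]$ to the same marginal. A one-sided dominance for every $a$ would then force the conditional laws to coincide for a.e.\ $a$, so the copula would not depend on $\rho_1$ at all, which is absurd.

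Your own two-effects analysis shows what actually happens:
\begin{itemize}
\item For $a$ near $0$, both quantiles $F^{-1}_{\rho_1}(a)$ and $F^{-1}_{\rho_1'}(a)$ tend to $0$. The lighter slack $\rho_0-\delta$ then makes the other coordinates larger.
\item For $a$ near $1$, the tail $\Pr(T_1>t)\asymp(1-t)^{1-\rho_1}$ makes $1-F^{-1}_{\rho_1'}(a)$ vanish much faster than $1-F^{-1}_{\rho_1}(a)$. So the other coordinates become smaller.
\end{itemize}
The sign of your $\delta$-derivative therefore necessarily changes with $a$. Raising $\rho_1$ strengthens the negative dependence rather than shifting the conditional law. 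No pointwise computation with incomplete-Beta quantile derivatives can close the argument.

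\textbf{The missing idea is to condition on the right variable.} The Gamma coupling you wrote down, moving a $\mathrm{Gamma}(\delta)$ piece from $G_0$ into $G_1$, is exactly the paper's coupling. It leaves $T_2,\dots,T_n$ literally unchanged. It also gives $T_1=T_1'\cdot G_1/(G_1+G_\delta)$, where the $\operatorname{Beta}(\rho_1,\delta)$ factor is independent of $(T_1',T_2,\dots,T_n)$.

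Condition on $A_1'=w$ instead of on $A_1=a$. Then:
\begin{itemize}
\item $A_1$ is conditionally independent of $(A_2,\dots,A_n)$.
\item $M_+(w)=\Pr(\bigwedge_{j\ge2}A_j>u_j\mid A_1'=w)$ is nonincreasing in $w$. This is because $T_j=(1-T_1')W_j$ with $W\sim\operatorname{Dir}(\rho_2,\dots,\rho_n,\rho_0-\delta)$ independent of $T_1'$.
\item $p_+(w)=\Pr(A_1>u_1\mid A_1'=w)$ lies in $[0,1]$ and satisfies $\int_0^1p_+(w)\,\mathrm{d}w=1-u_1$.
\end{itemize}
The rearrangement (bathtub) inequality then gives
\[
\Pr\Bigl(\bigwedge_i A_i>u_i\Bigr)=\int_0^1 p_+(w)M_+(w)\,\mathrm{d}w\ \ge\ \int_{u_1}^1 M_+(w)\,\mathrm{d}w=\Pr\Bigl(\bigwedge_i A'_i>u_i\Bigr).
\]
The lower orthant is symmetric: $M_-$ is nondecreasing and $p_-(w)=1\{w\le u_1\}$ is the extremal choice. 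No differentiation in $\delta$ is needed.
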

\begin{proof}
By transitivity, it suffices to consider a case where $\rho, \rho'$ differ in just a single coordinate; say $\rho_1 < \rho_1'$ and $\rho_i = \rho'_i$ for $i = 2, \dots, n$.  Using the neutrality and aggregation properties of the Dirichlet distribution, we consider the following coupling construction to simultaneously draw $ \vec X \sim \text{Dir}(\vec \rho)$ and $\vec X' \sim \text{Dir}(\vec \rho')$: 
 \begin{enumerate}
 \item First, draw $X_1' \sim \text{Beta}(\rho_1', 1 - \rho_1')$.
 \item Then independently draw $(Z_2, \dots, Z_n) \sim \text{Dir}(\rho_2, \dots, \rho_n, 1 - \rho'_1 - \rho_2 - \dots - \rho_n)$ and set $$
 (X_2, \dots, X_n) = (1 - X_1') (Z_2, \dots, Z_n).
 $$
 \item Finally, independently draw $Z_1 \sim \text{Beta}(\rho_1, \rho_1' - \rho_1)$ and set $X_1 = Z_1 X_1'$
 \end{enumerate}

Accordingly, define
\[
A_1=I(X_1; \rho_1, 1- \rho_1)
\qquad
A_1'=I(X_1'; \rho_1', 1 - \rho_1')
\qquad
A_i=A_i'=I(X_i; \rho_i, 1 - \rho_i) 
\quad(2\leq i\leq n).
\]
All the variables $A_i$ and $A_i'$ are uniform on $[0,1]$.

We first compare upper-orthant probabilities.  Fix
$\vec u \in[0,1]^n$ and define
\[
M_+(w)
=
\Pr \Bigl( \bigwedge_{i=2}^n A_i>u_i      \mid A_1'=w \Bigr)
\]

Since, in our coupling construction, each term $X_i: i \geq 2$ is a decreasing function of $X_1'$ for fixed $Z$, and since each
$I(z; \rho_i, 1 - \rho_i)$ is increasing, $M_+(w)$ is nonincreasing in $w$.

Put
\[
p_+(w)=\Pr(A_1>u_1\mid A_1'=w).
\]
Because $A_1$ and $A_1'$ are uniform,
\begin{equation}
\label{ppbound}
0\leq p_+(w)\leq1,
\qquad
\int_0^1p_+(w)\,\dd w= \Pr( A_1 > u_1 ) = 1-u_1.
\end{equation}

Since $A_1$ and $A_2, \dots, A_n$ are conditionally independent given $A_1'$ (equivalently, given $X_1'$) in our coupling, we have
\[
\Pr \Bigl( \bigwedge_{i=1}^n A_i>u_i \Bigr)
=
\int_0^1p_+(w)M_+(w)\,\dd w.
\]

By the Hardy-Littlewood  Rearrangement Inequality, this integral is minimized, under the conditions (\ref{ppbound}), when $p_{+}(w)$ puts all its mass on the upper range of the interval, i.e. with $p_+(w) = 1 \{w > u_1 \}$. Hence 
$$
\int_0^1p_+(w)M_+(w)\,\dd w \geq \int_{u_1}^1 M_+(w) \,\dd w = \Pr \Bigl( \bigwedge_{i=1}^n A_i'>u_i \Bigr)
$$

The lower-orthant comparison follows a symmetric monotonicity and rearrangement argument, where we define
\[
M_-(w)
=
\Pr \Bigl( \bigwedge_{i=2}^n A_i\leq u_i \mid A_1' = w \Bigr), \qquad \text{and}  \qquad
p_-(w)=\Pr(A_1\leq u_1\mid A_1'=w). \qedhere
\]
\end{proof}

Most of the analysis can be stated in terms of a function $\Psi$, which measures the correlation between the random variables $A_i$ and $A_j$. Critically, because of \Cref{dirichlet-subsample}, this only depends on the values $\rho_i,  \rho_j, x_i, x_j$ (and not on any other coordinates $k \neq i,j$).

\begin{definition}[The $\Psi$ function]
\label{prop:psi}
For indices $i \neq j$ and $x_i, x_j \in (0,1)$, we have $$
\E[ A_i^{1/x_i - 1} A_j^{1/x_j - 1} ] = x_i x_j \Psi( x_i, x_j; \rho_i, \rho_j)
$$
for the function $\Psi$ defined as:
$$
\Psi(x_1,x_2; \rho_1,\rho_2)= \frac{ \E[ 
 A_1^{1/x_1-1} A_2^{1/x_2 - 1} ]  }{x_1 x_2 } \qquad \text{where $(A_1, A_2) \sim \text{DirichletCopula}(\rho_1, \rho_2)$}
$$
\end{definition}

\begin{proposition}[$\Psi$ monotonicity]
The function $\Psi(x_1, x_2; \rho_1, \rho_2)$ is nonincreasing in $\rho_1$ and $\rho_2$.
\end{proposition}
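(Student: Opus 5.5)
We plan to derive this from the concordance ordering in \Cref{prop:dirichlet-concordance}, specialized to $n = 2$. The key point is that the quantity $\E[A_1^{\alpha} A_2^{\beta}]$ with $\alpha = 1/x_1 - 1 \geq 0$ and $\beta = 1/x_2 - 1 \geq 0$ is the expectation of a product of two nondecreasing functions of the two coordinates. For bivariate vectors with identical marginals, such expectations are monotone in the concordance order. Concretely, suppose $\rho_1 \leq \rho_1'$ and $\rho_2 \leq \rho_2'$, and let $(A_1,A_2) \sim \textsc{DirichletCopula}(\rho_1,\rho_2)$ and $(A_1',A_2') \sim \textsc{DirichletCopula}(\rho_1',\rho_2')$. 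By \Cref{prop:dirichlet-concordance}, $\vec A' \preceq_{\mathrm c} \vec A$. It then suffices to show $\E[A_1'^{\alpha} A_2'^{\beta}] \leq \E[A_1^{\alpha} A_2^{\beta}]$, since dividing by $x_1 x_2 > 0$ gives the monotonicity of $\Psi$. Monotonicity in each argument separately follows by holding the other parameter fixed.

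To prove the expectation inequality, we use the layer-cake (Hoeffding-type) representation. For nonnegative random variables $Y,Z$,
$$\E[YZ] = \int_0^\infty \int_0^\infty \Pr(Y > s, Z > t) \, \dd s \, \dd t .$$
Take $Y = A_1^{\alpha}$ and $Z = A_2^{\beta}$, both with values in $[0,1]$. For $s,t \in (0,1)$, the event $\{Y > s, Z > t\}$ equals $\{A_1 > s^{1/\alpha}, A_2 > t^{1/\beta}\}$, because $z \mapsto z^{\alpha}$ is strictly increasing on $[0,1]$ when $\alpha > 0$. This is an upper-orthant event. By the second inequality in \Cref{prop:dirichlet-concordance},
$$\Pr(A_1' > u_1, A_2' > u_2) \leq \Pr(A_1 > u_1, A_2 > u_2)$$
for all $u_1, u_2$. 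Integrating this pointwise inequality over $(s,t) \in [0,1]^2$ gives the claim.

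The degenerate case $\alpha = 0$ (i.e.\ $x_1 = 1$) is handled separately. There the expectation reduces to $\E[A_2^{\beta}] = 1/(1+\beta)$ by uniformity of the marginals, which does not depend on $\rho$ at all.

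We expect no serious obstacle once \Cref{prop:dirichlet-concordance} is available. The only care needed is in measure-theoretic details: the layer-cake identity and the fact that strict versus non-strict inequalities are irrelevant, since the marginals are continuous uniforms. If one prefers to avoid the layer-cake formula, an alternative is to note that $f(a)=a^{\alpha}$ and $g(a)=a^{\beta}$ are nondecreasing and apply the standard fact that the concordance order implies $\E[f(A_1')g(A_2')] \leq \E[f(A_1)g(A_2)]$ for such functions when the marginals coincide. The layer-cake argument above is exactly the proof of that fact.
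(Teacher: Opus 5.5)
Your proposal is correct and follows the paper's proof. Both arguments invoke \Cref{prop:dirichlet-concordance} to get $\vec A' \preceq_{\mathrm c} \vec A$, then integrate the upper-orthant inequality against the nonnegative increasing functions $z \mapsto z^{1/x_i - 1}$; your layer-cake identity just spells out the integration step that the paper states in one line.
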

\begin{proof}
Let $\rho_1 < \rho_1'$. Take  $(A_1, A_2) \sim \text{DirichletCopula}(\rho_1, \rho_2)$ and $(A'_1, A_2') \sim \text{DirichletCopula}(\rho_1', \rho_2)$. By \Cref{prop:dirichlet-concordance}, we have $ (A_1', A_2') \preceq_{\mathrm c} (A_1, A_2) $ in the concordance order. Since the functions $z \mapsto z^{1/x_1 - 1}$  and $z \mapsto z^{1/x_2 - 1}$ are both nonnegative and increasing, integrating
the upper-orthant inequalities from \Cref{prop:dirichlet-concordance} gives
$$
\Psi(x_1, x_2; \rho_1', \rho_2) = \frac{ \E[ (A_1')^{1/x_1 - 1} (A_2')^{1/x_2-1}] }{x_1 x_2}  \leq \frac{ \E[ A_1^{1/x_1 - 1} A_2^{1/x_2-1}] }{x_1 x_2}  = \Psi(x_1, x_2; \rho_1, \rho_2)
$$

The monotonicity in $\rho_2$ is completely symmetric.
\end{proof}

The  $\Psi$ function is critical to analyzing the approximation factors  of the algorithms. Unfortunately, it is extremely complex, defined in terms of integrals and non-elementary functions. As a rough order-of-magnitude estimate, the following bound is useful: 
\begin{proposition}
\label{simp-corr}
There holds
$$
\Psi(x_1, x_2; \rho_1, \rho_2)  \leq  \binom{\rho_1 (1/x_1 - 1) + \rho_2 (1/x_2 - 1)}{\rho_1 (1/x_1 - 1)}^{-1}.$$
\end{proposition}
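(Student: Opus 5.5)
The plan is to factor each $A_i^{1/x_i-1}$ as a power of the underlying Dirichlet coordinate times a monotone correction factor. The Dirichlet moment formula (\Cref{dirichletfact}(6)) then produces exactly the binomial coefficient, and negative association controls the correction factors.

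Set $\alpha_i = 1/x_i - 1$ and $\beta_i = \rho_i \alpha_i$, and let $(T_1,T_2,T_0)\sim \operatorname{Dir}(\rho_1,\rho_2,\rho_0)$ with $\rho_0 = 1-\rho_1-\rho_2$. By \Cref{dirichlet-subsample} we may work with just two coordinates, so $A_i = I(T_i;\rho_i,1-\rho_i)$. Write $h_i(t) = I(t;\rho_i,1-\rho_i)^{\alpha_i}$ and $h_i(t) = t^{\beta_i} g_i(t)$, where
\[
g_i(t) = \Bigl(\frac{I(t;\rho_i,1-\rho_i)}{t^{\rho_i}}\Bigr)^{\alpha_i}.
\]

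\textbf{Step 1: each $g_i$ is nondecreasing.} Up to a constant, $I(t;\rho,1-\rho)/t^{\rho}$ equals
\[
\frac{\int_0^t s^{\rho-1}(1-s)^{-\rho}\,\mathrm{d}s}{\int_0^t s^{\rho-1}\,\mathrm{d}s},
\]
which is the average of the increasing function $(1-s)^{-\rho}$ under the density proportional to $s^{\rho-1}$ on $[0,t]$. Enlarging $t$ adds mass where the integrand is largest, so this average is nondecreasing in $t$.

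\textbf{Step 2: tilt the measure.} Define the tilted law $\tilde{\Pr}$ by $\mathrm{d}\tilde{\Pr}/\mathrm{d}\Pr = T_1^{\beta_1}T_2^{\beta_2}/\E[T_1^{\beta_1}T_2^{\beta_2}]$. Under $\tilde{\Pr}$ the vector $(T_1,T_2,T_0)$ is $\operatorname{Dir}(\rho_1+\beta_1,\rho_2+\beta_2,\rho_0)$, so it is NA by \Cref{dirichletfact}(5). Since $g_1,g_2$ are nonnegative, nondecreasing, and depend on disjoint coordinates, \Cref{thm: NA properties}(c) gives
\[
\E[h_1(T_1)h_2(T_2)] = \E[T_1^{\beta_1}T_2^{\beta_2}]\;\tilde{\E}[g_1(T_1)g_2(T_2)] \le \E[T_1^{\beta_1}T_2^{\beta_2}]\;\tilde{\E}[g_1(T_1)]\,\tilde{\E}[g_2(T_2)].
\]

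\textbf{Step 3: compare each tilted marginal with a single-coordinate tilt.} Under $\tilde{\Pr}$, $T_1\sim \operatorname{Beta}(\rho_1+\beta_1,\,1-\rho_1+\beta_2)$. Tilting the marginal $\operatorname{Beta}(\rho_1,1-\rho_1)$ by $T_1^{\beta_1}$ alone gives $\operatorname{Beta}(\rho_1+\beta_1,1-\rho_1)$. The former has a larger second parameter, so it is stochastically smaller; this follows because the likelihood ratio $(1-t)^{\beta_2}$ is decreasing. As $g_1$ is nondecreasing,
\[
\tilde{\E}[g_1(T_1)] \le \frac{\E[T_1^{\beta_1}g_1(T_1)]}{\E[T_1^{\beta_1}]} = \frac{\E[A_1^{\alpha_1}]}{\E[T_1^{\beta_1}]} = \frac{x_1}{\E[T_1^{\beta_1}]},
\]
using that $A_1$ is uniform. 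The same bound holds for index $2$.

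\textbf{Step 4: evaluate with the moment formula.} Combining the previous steps,
\[
\E[A_1^{\alpha_1}A_2^{\alpha_2}] \le x_1x_2\,\frac{\E[T_1^{\beta_1}T_2^{\beta_2}]}{\E[T_1^{\beta_1}]\,\E[T_2^{\beta_2}]}.
\]
By \Cref{dirichletfact}(6), with total parameter $1$,
\[
\E[T_1^{\beta_1}T_2^{\beta_2}] = \frac{\Gamma(\rho_1+\beta_1)\,\Gamma(\rho_2+\beta_2)}{\Gamma(\rho_1)\,\Gamma(\rho_2)\,\Gamma(1+\beta_1+\beta_2)}, \qquad \E[T_i^{\beta_i}] = \frac{\Gamma(\rho_i+\beta_i)}{\Gamma(\rho_i)\,\Gamma(1+\beta_i)}.
\]
The ratio is therefore $\Gamma(1+\beta_1)\Gamma(1+\beta_2)/\Gamma(1+\beta_1+\beta_2) = \binom{\beta_1+\beta_2}{\beta_1}^{-1}$, which is the claimed bound on $\Psi$.

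The main obstacle I expect is choosing the right factorization $h_i = t^{\beta_i}g_i$ so that two things hold at once: Step 1 (the monotonicity of $I(t;\rho,1-\rho)/t^\rho$) and Step 3 (which needs the tilt to push the marginal in the favorable stochastic direction). Both look clean with exponent $\rho_i\alpha_i$, but the sign conventions in Step 3 need care.
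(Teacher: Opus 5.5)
Your proof is correct, and it takes a genuinely different route from the paper's.

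\textbf{The paper's route.} The paper obtains \Cref{simp-corr} as the $0^{\text{th}}$-order case of \Cref{mainestlemma}. It expands
$$I_\rho(t)^q=\theta\, t^{\rho q}\phi_\rho(t)^q=\theta\sum_j b_j(\rho,q)\,t^{\rho q+j},$$
which turns $\Psi$ into the exact mixture $\sum_{j_1,j_2}\alpha_{j_1,j_2}\kappa_{1,j_1}\kappa_{2,j_2}$, where each $(\kappa_{i,j})_j$ is a probability vector (\Cref{mainestlemma0}). It then uses that the moment ratio $\alpha_{j_1,j_2}$ (the paper's notation) is nonincreasing in $j_1,j_2$, which gives $\Psi\le\alpha_{0,0}$. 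Nonnegativity of the $\kappa_{i,j}$ requires $b_j(\rho,q)\ge 0$ for real $q\ge 0$, i.e.\ nonnegative Taylor coefficients of $\phi_\rho^q$. The paper gets this from log-absolute monotonicity via \Cref{cor: ck>0} and \Cref{cor:monpower}.

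\textbf{Your route.} You need much weaker input. Your $g_i$ is exactly $\theta_i\phi_{\rho_i}^{q_i}$, and you only use that it is nondecreasing, which your averaging argument gives in one line. The decoupling then comes from negative correlation under the joint tilt, together with a likelihood-ratio comparison of $\operatorname{Beta}(\rho_1+\beta_1,1-\rho_1+\beta_2)$ against $\operatorname{Beta}(\rho_1+\beta_1,1-\rho_1)$. Your sign in Step 3 is right, since the density ratio $(1-t)^{\beta_2}$ is decreasing. For two coordinates you could even avoid invoking \Cref{dirichletfact}(5) at the tilted parameters. By neutrality, $T_2=(1-T_1)W$ with $W$ independent of $T_1$, so $\tilde{\E}[g_2(T_2)\mid T_1]$ is nonincreasing in $T_1$, and Chebyshev's association inequality for the single variable $T_1$ gives the decoupling.

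\textbf{What each buys.} Your argument yields only the $0^{\text{th}}$-order bound. The paper's exact series representation is what produces the $(k_1,k_2)$-order refinements of \Cref{mainestlemma} and \Cref{gratta1}, which are used for the numerical verification of \Cref{lem:correlation-factor}. It also gives the matching lower bound $\alpha_{0,0}\kappa_{1,0}\kappa_{2,0}$ behind \Cref{poisson-prop}. So yours is the cleaner, self-contained proof of this particular inequality, while the paper's machinery is what scales to the sharper quantitative bounds the applications need.
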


By contrast, note that \emph{independent} uniform random variables $U_1, U_2$ would have $$\E[U_1^{1/x_1 - 1} U_2^{1/x_2 - 1} ] = x_1 x_2.$$ Since $\binom{\rho_1 (1/x_1 - 1) + \rho_2 (1/x_2 - 1)}{\rho_1 (1/x_1 - 1)} > 1$,  \Cref{simp-corr} should be interpreted as a statement that the correlated uniform random variables $A_1$ and $A_2$ have strong negative correlation.

We show \Cref{simp-corr} in \Cref{sec: 03_Beta fcn}, along with other more-precise (but complicated) bounds on $\Psi$. For now, we calculate the algorithm behavior in terms of $\Psi$ viewed as a black-box function. We will later use the precise $\Psi$ bounds  to give explicit bounds for our algorithms.
\subsection{Bipartite rounding}\label{sec: 04_bipartite rounding}
The second part of the Dirichlet mechanism is to implement bipartite selection using the correlated uniform variables. We start with a bipartite graph $G = (U \cup V, E)$, with weight $x_e$ and a given, problem-specific correlation parameter $\rho_e$ for each edge $e = (u_e, v_e)$. Each left-node $u$ generates uniform random variables $A_e$ for the edges $e \in N(u)$ via \textsc{DirichletCopula}. These are then transformed into Exponential variables $Z_e$ with rate $x_e$. Each right-node $v$  selects the incident edge whose value $Z_e$ is the \emph{smallest.} (If $x(N(v)) < 1$ then there is a slight adjustment where $v$ may select no edges.)

If the Exponential random variables $Z_e$ were all \emph{independent}, then this would be equivalent to independent rounding; this is known as randomized rounding via exponential clocks in the literature~\cite{DBLP:journals/talg/AnNS17}. Intuitively, it is beneficial to negatively correlate the events that two adjacent edges are added, as this reduces the expected number of collisions and hence the need to subsequently remove edges to obtain a valid matching.

Combining the new sampling scheme with the algorithm from \cite{harris2024dependent}, we introduce \cref{alg:dependent-rounding}.

\begin{algorithm}[H]
\caption{\textsc{DepRound}($G, \vec{x},\vec{\rho}$): the Dirichlet rounding mechanism}
\begin{algorithmic}[1]
\REQUIRE $\vec x, \vec \rho \in [0,1]^E$ with $x(N(v)) \leq 1$ for all $v \in V$ and $\rho(N(u)) \leq 1$ for all $u \in U$.
\FOR{each left node $u\in U$ with edges $(u,v_1), \dots, (u,v_k)$}
\STATE \textcolor{gray}{//Sample $A_e$ via \textsc{DirichletCopula} and transform into Exponentials $Z_e$}
\STATE Draw $
(T_{(u,v_1)}, \dots, T_{(u,v_k)}) \sim \text{Dir}(\rho_{(u,v_1)}, \dots, \rho_{(u,v_k)}, 1 - \rho_{(u,v_1)} - \dots - \rho_{(u,v_k)}  )
$
\STATE Set $A_e \leftarrow I(T_e; \rho_e, 1 - \rho_e)$ and $Z_e = -\frac{\log A_e}{x_e}$ for each edge $e = (u,v_i) \in N(u)$.
\ENDFOR
    \FOR{each right-node \( v\in V \) and each edge $e \in N(v)$}
        \STATE \textcolor{gray}{//Choose an edge with (approximately) minimum $Z_f$ value to round to $1$.}
        \IF{edge $e$ satisfies the bound $
        (1 - x_e) Z_e < (x(N(v)) - x_e)  \min_{f \in N(v) \setminus \{e \}} Z_f$}
        \STATE Set $X_e = 1$         
        \ELSE 
        \STATE Set $X_e = 0$
        \ENDIF
    \ENDFOR
    \RETURN vector \(\vec{X}\)
\end{algorithmic}
\label{alg:dependent-rounding}
\end{algorithm}

Lines 3 and 4 are precisely \textsc{DirichletCopula} for the vector $(\rho_e: e \in N(u))$. Note that in Line 8, if $x(N(v))=1$, the condition reduces to choosing the minimum $Z_f$ value. To state the negative correlation results in greatest generality, we recall the following definition and lemma from \cite{harris2024dependent}.
\begin{definition}[Stable edge set]
An edge set $S \subseteq E$ of $G$ is \emph{stable} if it has no edges $e_1, e_2$ whose distance in the line graph of $G$ is precisely two. 
\end{definition}

\begin{lemma}[\cite{harris2024dependent}]
\label{prop:dep-round}
\cref{alg:dependent-rounding} satisfies the following properties:
\begin{enumerate}
        \item For any right-node $v \in V$, the random variables $Z_e: e \in N(v)$ are independent Exponentials, each of rate $x_e$.
        \item The random variables $A_e: e \in E$  are NA.
        \item For any edge $e \in E$, there holds $\Pr[X_e = 1] = x_e$.    
        \item Each right-node $v$ has at most one edge selected.  
        \item For a stable edge set $S$, there holds        
$\E\bigl[ \prod_{e \in S} X_e \bigr] \leq  \E \bigl[ \prod_{e \in S} A_e^{1/x_e - 1} \bigr] \leq \prod_{e \in S} x_e.
$
\end{enumerate}
    \end{lemma}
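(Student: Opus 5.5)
We prove the five properties in turn. Items 1, 3 and 4 use only the marginal structure. Item 2 comes from negative association of the Dirichlet, and item 5 is the substantive step.

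\textbf{Items 1 and 2.} For item 1, fix a right-node $v$. The edges of $N(v)$ have distinct left endpoints, because $G$ is a simple bipartite graph. Distinct left-nodes draw their Dirichlet vectors independently, so the variables $A_e$ for $e \in N(v)$ are independent. By the observation following \textsc{DirichletCopula}, each $A_e$ is $\text{Unif}[0,1]$. Hence $Z_e = -\log(A_e)/x_e$ is Exponential with rate $x_e$.

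For item 2, for a fixed $u$ the vector $(T_e)_{e\in N(u)}$, together with the dummy coordinate, is NA by \Cref{fact:prop-of-dirichlet}(5). Dropping a coordinate preserves NA. Each map $T_e \mapsto I(T_e;\rho_e,1-\rho_e)$ is increasing and acts on a single coordinate, so \Cref{thm: NA properties}(b) shows that $(A_e)_{e\in N(u)}$ is NA. Finally, the union of independent NA families is NA.

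\textbf{Items 3 and 4.} Both are now a computation with independent exponentials. Fix $e \in N(v)$ and write $s = x(N(v))$. Then $W = \min_{f \neq e} Z_f$ is Exponential with rate $s - x_e$ and is independent of $Z_e$. Therefore
\[
\Pr\bigl((1-x_e)Z_e < (s-x_e)W\bigr) = \E\bigl[e^{-(1-x_e)Z_e}\bigr] = \frac{x_e}{x_e + 1 - x_e} = x_e .
\]

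For item 4, suppose two edges $e,f\in N(v)$ were both selected. Assume without loss of generality that $Z_e \le Z_f$. Then $\min_{g\neq f} Z_g \le Z_e \le Z_f$, so the condition for $f$ would require
\[
(1-x_f)Z_f < (s-x_f)Z_f .
\]
This fails because $s \le 1$ (when $Z_f>0$). So at most one edge at $v$ is selected.

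\textbf{Item 5, first inequality.} The second inequality follows at once from item 2 and \Cref{thm: NA properties}(c): the maps $a\mapsto a^{1/x_e-1}$ are nonnegative and increasing, act on distinct coordinates, and satisfy $\E[A_e^{1/x_e-1}] = x_e$.

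For the first inequality, condition on the values $A_e$ for $e\in S$. Stability means the edges of $S$ are pairwise either sharing a left endpoint or at line-graph distance at least three. If two edges of $S$ shared a right node, at most one of them is selected and the product is zero, so assume they have distinct right nodes. The event $X_e = 1$ depends on $Z_e$ and on $Z_f$ for $f\in N(v_e)\setminus\{e\}$.

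By stability, no such $f$ is incident to the left endpoint of any edge of $S$. Indeed, if $f=(u',v_e)$ with $u'$ an endpoint of some $e'\in S$, then $e$ and $e'$ would be at line-graph distance exactly two. Hence the edges $f$ belong to left-nodes not touched by $S$, and different $e$ use disjoint sets of such edges. So, given $(A_e)_{e\in S}$, the competing minima $W_e = \min_{f \in N(v_e)\setminus\{e\}} Z_f$ are mutually independent, independent of $(A_e)_{e \in S}$, and each $W_e$ is Exponential with rate $s_e - x_e$.

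Then
\[
\Pr(X_e = 1 \mid A_e) = \exp\bigl(-(1-x_e)Z_e\bigr) = A_e^{(1-x_e)/x_e} = A_e^{1/x_e-1},
\]
and taking the product over $e \in S$ and then the expectation gives the claim, in fact with equality.

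The main obstacle is this bookkeeping: verifying from the stable-set definition that the competing $Z_f$ lie at left-nodes disjoint from those of $S$, so that the conditional independence holds. Everything else is routine.
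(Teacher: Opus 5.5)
Items 1--4 and the second inequality of item 5 are correct and follow the paper's argument. The gap is in the first inequality of item 5. You show correctly that the competing edges $f \in N(v_e)\setminus\{e\}$ have left endpoints outside the left endpoints of $S$, and that different $e \in S$ use disjoint sets of competing edges. But disjoint \emph{edge} sets do not give independence. A competing edge $f \in N(v_e)$ and a competing edge $f' \in N(v_{e'})$ can share a left node $w$. Then $A_f$ and $A_{f'}$ are coordinates of the same Dirichlet draw at $w$, and they are dependent.

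Stability does not rule this out. Take $e=(u,v_1)$, $e'=(u,v_2)$, $f=(w,v_1)$, $f'=(w,v_2)$. The set $\{e,e'\}$ is stable, since the two edges are at distance one, yet $W_e = Z_f$ and $W_{e'} = Z_{f'}$ are correlated. This is exactly the configuration used in \Cref{thm: E[U^q]} and in the online matching application, so it is not a degenerate case. In this example $\E[X_e X_{e'} \mid A_e, A_{e'}] = \Pr(A_f < a,\, A_{f'} < a')$ for thresholds $a, a'$ determined by $A_e, A_{e'}$. This is a lower-orthant probability of a Dirichlet copula, and it is generally strictly smaller than the product of the marginals. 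So your claim that the $W_e$ are mutually independent is false, and so is the conclusion that equality holds.

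The missing step is a second use of negative association. Let $S'$ be the set of competing edges. Conditional on $A_S$, the family $(A_f)_{f\in S'}$ keeps its unconditional distribution, because its left nodes are disjoint from those of $S$ (the part you proved). This family is NA by item 2. For fixed $A_e$, the indicator $X_e$ is a nonnegative, nonincreasing function of the block $(A_f)_{f\in N(v_e)\setminus\{e\}}$, since increasing $A_f$ decreases $Z_f$. These blocks are disjoint because the right nodes $v_e$ are distinct. So \Cref{thm: NA properties}(c) gives $\E\bigl[\prod_{e\in S} X_e \mid A_S\bigr] \le \prod_{e\in S}\E[X_e \mid A_S] = \prod_{e\in S} A_e^{1/x_e-1}$. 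Taking expectations finishes the proof with an inequality rather than an equality; this is precisely the paper's argument.
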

    \begin{proof}
The proofs are similar to \cite{harris2024dependent}. For completeness, we prove them fully here.
\begin{enumerate} 
    \item The random variables $T_e: e \in N(v) $ are generated independently, so $A_e: e \in N(v)$ are also independent. Since each $A_e$ is uniform, the inverse transform $Z_e = -\frac{\log A_e}{x_e}$ has Exponential distribution of rate $x_e$.
    \item Each variable $A_e$ is a monotone increasing function of $T_e$, so by \cref{thm: NA properties} the random variables $A_e: e\in E$ are NA. Edges corresponding to distinct nodes have independent values for $A_e$.
    \item The random variables $(Z_e)_{\{e \in N(v)\}}$ are independent Exponentials of rate $x_e$ respectively. By the well-known properties of exponential random variables, $Q := \frac{ x(N(v)) - x_e }{1 - x_e} \min_{f \in N(v) \setminus \{e \}} Z_f$ is an exponential random variable with rate $1 - x_e$. So $\Pr[X_e = 1] = \Pr [ Z_e <  Q ]  = \frac{x_e}{ x_e + (1 - x_e) } = x_e$.

    \item By hypothesis, we have $x(N(v)) \leq 1$. So if $X_e = 1$, then $Z_e < \frac{x(N(v)) - x_e}{1 - x_e} Z_f \leq Z_f$ for all other edges $f \in N(v) \setminus \{e \}$; clearly, there can be at most one edge with this property.
     \item 
   Each term $A_e^{1/x_e - 1}$ is an increasing nonnegative function of $A_e$. As the random variables $A_e$ are NA, by \cref{thm: NA properties}, we have $\E \bigl[ \prod_{e \in S} A_e^{1/x_e - 1} \bigr] \leq \prod_{e \in S} \E \bigl[ A_e^{1/x_e - 1} \bigr]$. Since each $A_e$ is a uniform random variable, it satisfies $\E[A_e^{1/x_e - 1}]  = x_e$.  This shows the bound $   \E \bigl[ \prod_{e \in S} A_e^{1/x_e - 1} \bigr] \leq \prod_{e \in S} x_e.
    $

    For the second bound, we may assume that all the right-nodes of the edges in $S$ are distinct, as otherwise $\prod_{e \in S} X_e = 0$ with probability one. Define $S'$ to be the set of edges outside $S$ that share a right-node with an edge in $S$, and let $W, W'$ denote the set of left-nodes of edge-sets $S$ and $S'$ respectively. We claim that $W, W'$ are disjoint. For, suppose that $u \in W \cap W'$. So there are edges
$(u, v) \in S, (u, v') \in S'$; by definition of $S'$, this implies that there is an edge $(u', v') \in S$. Then the edges $(u',v'), (u,v)$ have distance two in the line graph of $G$, contradicting that $S$ is stable.

  Now suppose we condition on all random variables corresponding to the nodes in $W$. In particular, this reveals all the random variables $A_S := ( A_{e} : e \in S )$. The random variables corresponding to nodes in $W'$ have their original unconditioned probability distributions.

    Consider an edge $e \in S$. Because we have assumed that edges in $S$ have distinct-right nodes, all edges $f \in N(v_e) \setminus \{e \}$ are in $S'$.  All such random variables correspond to nodes in $W'$ and retain their original, unconditioned probability distributions. In particular, the random variables $Z_{f}:f \in N(v_e) \setminus \{e\}$ are independent exponentials. So
 $Q := \frac{x(N(v_e)) - x_e}{1 - x_e} \min_{f \in N(v_e) \setminus \{e \}} Z_f$ is an exponential random variable with rate $1-x_e$,
 and hence
    \begin{align*}
        \E[ X_e \mid A_S ] & = \Pr [ Z_e < Q \; \big|\;A_{S} ]
= \mathrm e^{-(1-x_e) Z_{e}} = A_e^{1/x_e - 1}.
        \end{align*}
    
    Since the random variables $A_{S'}$ are NA, and each value $X_e$ is as a nonincreasing function of the distinct block of variables $A_f: f \in N(v_e) \setminus \{e \}$,  \cref{thm: NA properties} gives
    \begin{align*}
    \E \Bigl[ \prod_{e \in S} X_e \mid A_S \Bigr] \leq \prod_{e \in S} \E[ X_e \mid A_S ] = \prod_{e \in S} A_e^{1/x_e - 1}
    \end{align*}

    To finish, we apply iterated expectations with respect to random variables $A_S$.  \qedhere
    \end{enumerate}
\end{proof}

The strong negative correlation properties of the algorithm are stated entirely based on the $\Psi$ function.
    
\begin{theorem}\label{thm: E[U^q]}
    For any distinct edges $e_1 = (u,v_1)$, $e_2 =(u,v_2)$ with the same left-node $u$ we have
    $$
    \E[X_{e_1} X_{e_2}]\leq x_{e_1} x_{e_2} \Psi(x_{e_1}, x_{e_2}; \rho_{e_1}, \rho_{e_2}) 
    $$
\end{theorem}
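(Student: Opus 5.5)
We deduce the theorem from \Cref{prop:dep-round}(5) together with \Cref{dirichlet-subsample} and the definition of $\Psi$. The first step is to dispose of the trivial case. If $v_1 = v_2$, then $e_1, e_2$ are two distinct edges sharing a right-node. By \Cref{prop:dep-round}(4) at most one of them is selected, so $X_{e_1} X_{e_2} = 0$ almost surely and the bound holds because $\Psi \geq 0$. (In a simple graph $v_1 \neq v_2$ anyway, so this case can also be dismissed outright.)

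Next we check that $S = \{e_1, e_2\}$ is a stable edge set. The two edges share the left-node $u$, so they are adjacent in the line graph of $G$. Their distance there is therefore exactly one, never two, and the stability condition holds vacuously. Applying \Cref{prop:dep-round}(5) to $S$ then gives
$$
\E[X_{e_1} X_{e_2}] \leq \E\bigl[ A_{e_1}^{1/x_{e_1} - 1} A_{e_2}^{1/x_{e_2}-1} \bigr].
$$
The proof of that item should be re-read in this special situation to confirm it applies. There $W = \{u\}$, and $W'$ is the set of left-nodes of the other edges at $v_1$ and $v_2$. We need $u \notin W'$. This holds in a simple graph, because an edge $(u, v_1) \in S'$ would coincide with $e_1$ itself, and $S'$ consists only of edges outside $S$. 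So the conditioning argument from that proof goes through unchanged.

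Finally we identify the right-hand side with $\Psi$. The vector $(A_e : e \in N(u))$ is produced by \textsc{DirichletCopula}$((\rho_e)_{e \in N(u)})$ in Lines 3--4 of \Cref{alg:dependent-rounding}. By \Cref{dirichlet-subsample}, the pair $(A_{e_1}, A_{e_2})$ is distributed as \textsc{DirichletCopula}$(\rho_{e_1}, \rho_{e_2})$. Then by the definition of $\Psi$ (\Cref{prop:psi}),
$$
\E\bigl[ A_{e_1}^{1/x_{e_1} - 1} A_{e_2}^{1/x_{e_2}-1} \bigr] = x_{e_1} x_{e_2} \Psi(x_{e_1}, x_{e_2}; \rho_{e_1}, \rho_{e_2}),
$$
which concludes the proof.

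The main point to verify is the step invoking \Cref{prop:dep-round}(5). We must make sure that the stability hypothesis and the disjointness $W \cap W' = \emptyset$ really hold when the two edges share their left-node, since that is exactly the situation in which the $A$-variables are correlated. The remaining steps are direct substitutions.
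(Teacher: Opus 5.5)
Your proof is correct and is the same as the paper's: note that $\{e_1,e_2\}$ is a stable edge set, apply \Cref{prop:dep-round}(5), and identify $\E\bigl[A_{e_1}^{1/x_{e_1}-1}A_{e_2}^{1/x_{e_2}-1}\bigr]$ with $x_{e_1}x_{e_2}\Psi(x_{e_1},x_{e_2};\rho_{e_1},\rho_{e_2})$ using \Cref{dirichlet-subsample} and the definition of $\Psi$. Your extra checks (the case $v_1=v_2$, and $W\cap W'=\emptyset$ when both edges share the left-node $u$) are valid, though the general proof of \Cref{prop:dep-round}(5) already covers them.
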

\begin{proof} 
Note that $S = \{e_1, e_2\}$ is a stable edge-set, since $e_1$ and $e_2$ have distance one in the line graph of $G$. By \cref{prop:dep-round}, we have $
    \E[X_{e_1} X_{e_2} ] \leq \E \bigl[ A_{e_1}^{1/x_{e_1} - 1} A_{e_2}^{1/x_{e_2} - 1} \bigr].$ The random variables $A_{e_1}, A_{e_2}$ are generated by \textsc{DirichletCopula} with input $\vec{\rho}$. So the result follows from \cref{prop:psi}.
\end{proof}
\section{Online Matching algorithm}
\label{sec: 05_(online) matching}

Recall the problem setting: we have a bipartite graph $G = (U \cup V, E)$, where the offline nodes $U$ are fixed and known in advance. When each online node $v \in V$ appears, we learn the demands $g_{(u,v)} : u \in U$.  We are guaranteed that the weights $g_e$ form a fractional matching for the graph.

For edges $e = (u_e, v_e), f = (u_f, v_f)$, we say that $f < e$ if the online node $v_f$ is revealed before $v_e$.  For an edge $e = (u_e,v_e)$, we define $L(e) = \{ f \in N(u_e), f < e \}$, that is, edges which share a left-node with $e$ and come earlier in the ordering.

Our plan is to run $\textsc{DepRound}(G,\vec x,\vec \rho)$, where $\vec x$ and $\vec \rho$ are both determined in an online fashion from $\vec g$. For this, we take advantage of the fact that the generation of Dirichlet random variables, and the $\textsc{DepRound}$ algorithm, can both be implemented  online vertex-by-vertex. For full details see \cref{alg:online-rounding}.

\begin{algorithm}[H]
\caption{\textsc{ExponentialODRS}($\{g_e\in[0,1]: e\in E\}$)}
\begin{algorithmic}[1]
\STATE Define parameters $\alpha = 1.2337, \beta = 0.3$, and define function $F: [0,1] \rightarrow \mathbb R$ by
$$
F(t) = \frac{0.68145}{\sqrt{1 - 0.53562 t}}
$$

\STATE Initialize $M  = \emptyset$
\FOR{each online node $v\in V$ at its arrival}
    \FOR{each edge $e = (u,v) \in N(v)$}
    \STATE Calculate values
    $$
\hspace{-0.4in}    r_e = \sum_{f \in L(e)} g_f, \qquad y_e = \int_{t = r_e}^{r_e + g_e} F(t) \ \mathrm{d}t, \qquad  \rho_e =  \alpha \thinspace y_e, \qquad x_e = (1-\beta) F(r_e) g_e + \beta y_e
    $$
        \STATE Draw random variable $B_e \sim \text{Beta}(\rho_e, 1-\rho_e - \sum_{f \in L(e)} \rho_{f})$  and set
        $$
T_e = \Bigl( 1 - \sum_{f \in L(e)} T_{f} \Bigr) \cdot B_e, \qquad A_e = I(T_e, \rho_e, 1 - \rho_e), \qquad Z_e = -\frac{\log A_e}{x_e}
$$
\ENDFOR
\IF{there is any vertex $u \in U$ which is not matched in $M$ and there is an edge $e = (u,v) \in N(u)$ satisfying
$$
(1 - x_e) Z_e  <  (x(N(v)) - x_e)  \min_{f \in N(v)\setminus \{e \}} Z_f 
$$
}
\STATE Update $M \leftarrow M \cup \{e \}$ \COMMENT {Permanently commit edge $e$ to the matching}
\ENDIF
\ENDFOR 
\end{algorithmic}\label{alg:online-rounding}
\end{algorithm}

For intuition, suppose all values $g_e$ are infinitesimal. In this case, $x_e \approx F(r_e) g_e \approx y_e$ for each edge $e$.
The function $F$ serves as an ``attenuation factor" so that edge $e$ is selected with probability $x_e \approx F(r_e) g_e$, as opposed to $g_e$ in an offline algorithm.   Without this attenuation factor, the early arriving edges would go into the matching with a higher probability, as later ones can only be accepted if no earlier edge was selected. The factor $F(t)$ penalizes earlier edges (where $r_e \approx 0$ and $F(r_e) \approx F(0) = 0.68145$) compared to the final edges of a node (where $r_e \approx 1$ and $F(r_e) \approx 1$). This evens out the edges; as a side benefit, the attenuation factor increases the negative correlation strength of the Dirichlet mechanism.

To derive the specific functional form of $F$,  we assume the edge weights $g_e$ are infinitesimal, and solve a differential equation to maximize the minimum approximation ratio.  See \Cref{S-motivation-app} for further details.

Let us define
$$
Q(t_0,\Delta t) = \int_{t = t_0}^{t_0 + \Delta_t} F(t) \dt, \qquad \text{so that $y_e = Q(r_e, g_e)$.}
$$

By direct calculations, one can easily verify the following properties of the function $F$:
\begin{observation}\label{prop: numerical}
The function $F$ satisfies the following properties:
\begin{enumerate}
\item[(a)] $F$ is nonnegative, increasing, and concave-up;
\item[(b)] $Q(0,1) \leq 1/\alpha$;
\item[(c)] $F(1) \leq 1$.
\end{enumerate}
\end{observation}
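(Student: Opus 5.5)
The three claims are direct calculations with the explicit function $F(t) = c\,(1-kt)^{-1/2}$, where $c = 0.68145$ and $k = 0.53562$. I would handle them in order.

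For (a), note that $1 - kt \geq 1 - k > 0$ on $[0,1]$, so $F$ is well defined and strictly positive there. Differentiating gives
$$F'(t) = \tfrac{ck}{2}(1-kt)^{-3/2} > 0, \qquad F''(t) = \tfrac{3ck^2}{4}(1-kt)^{-5/2} > 0,$$
so $F$ is increasing and concave-up on $[0,1]$. Nothing more is needed here.

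For (b), integrate in closed form:
$$Q(0,1) = \int_0^1 c\,(1-kt)^{-1/2}\dt = \frac{2c}{k}\Bigl(1 - \sqrt{1-k}\Bigr).$$
Numerically, $1 - k = 0.46438$ and $\sqrt{0.46438} \approx 0.681455$. This gives $Q(0,1) \approx (2 \cdot 0.68145/0.53562)(0.318545) \approx 2.54452 \cdot 0.318545 \approx 0.81055$. Meanwhile $1/\alpha = 1/1.2337 \approx 0.81057$, so the inequality holds, but with a margin of only about $2\cdot 10^{-5}$.

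For (c), compute $F(1) = c/\sqrt{1-k} = 0.68145/0.681455\ldots$. Since $\sqrt{0.46438} > 0.68145$ (because $0.68145^2 = 0.4643741\ldots < 0.46438$), we get $F(1) \leq 1$, again with very little slack.

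The only real obstacle is the numerical tightness in (b) and (c). The constants were evidently chosen so that $F(1) \approx 1$ and $\alpha Q(0,1) \approx 1$. To make the argument rigorous, I would replace the floating-point estimates with interval bounds. The key fact is that $\sqrt{1-k} \geq 0.68145$, certified by squaring as above. Substituting this lower bound gives (c) immediately, and in (b) it gives
$$Q(0,1) \leq \frac{2c}{k}\,(1 - 0.68145) = \frac{2 \cdot 0.68145 \cdot 0.31855}{0.53562}.$$
It then remains to check by exact rational arithmetic that this quantity is at most $1/1.2337$, i.e. that $2 \cdot 0.68145 \cdot 0.31855 \cdot 1.2337 \leq 0.53562$.
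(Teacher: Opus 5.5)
Your proposal is correct and takes the same approach as the paper, which only asserts that (a)--(c) follow by direct calculation. The closed form $Q(0,1)=\frac{2c}{k}\bigl(1-\sqrt{1-k}\bigr)$ and the squared bound $\sqrt{1-k}\ge 0.68145$ make that calculation rigorous, and the final rational check also holds, since $2\cdot 0.68145\cdot 0.31855\cdot 1.2337 \approx 0.535613 \le 0.53562$.
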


We now examine a few basic properties of the algorithm. At a high level, we first show that the edge selection in \textsc{ExponentialODRS} is exactly the same as running \textsc{DepRound}, for suitable choice of parameters. This gives us \Cref{tre2}, which reduces the analysis of the algorithm to analyzing the $\Psi$ function.

\begin{proposition}
\label{tre4}
For an edge $e$ there holds $r_e = \sum_{f \in L(e)} g_f$ and $r_e + g_e \leq 1$ and $\sum_{f \in L(e)} y_f = Q(0,r_e) 
$
\end{proposition}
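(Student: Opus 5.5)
The first identity, $r_e = \sum_{f \in L(e)} g_f$, is just the definition made in Line 5 of \cref{alg:online-rounding}, so nothing needs to be proved. The other two claims follow by ordering the edges of the left-node $u_e$ by arrival time and telescoping the integrals defining the $y_f$.

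\textbf{Setup.} Fix an edge $e=(u,v)$ and list the edges of $N(u)$ in arrival order as $f_1 < f_2 < \dots < f_m$, with $e = f_k$. Then $L(e)=\{f_1,\dots,f_{k-1}\}$ and $L(f_i) = \{f_1,\dots,f_{i-1}\}$ for each $i$. Consequently $r_{f_i} = g_{f_1} + \dots + g_{f_{i-1}}$, so that $r_{f_{i}} + g_{f_{i}} = r_{f_{i+1}}$, with $r_{f_1} = 0$.

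\textbf{Bound $r_e + g_e \leq 1$.} We have $r_e + g_e = \sum_{i \leq k} g_{f_i} \leq \sum_{f \in N(u)} g_f \leq 1$. The first inequality uses $g \ge 0$. The second uses that $\vec g$ is a fractional matching, so the total weight at the offline node $u$ is at most $1$.

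\textbf{Identity $\sum_{f \in L(e)} y_f = Q(0,r_e)$.} By definition $y_{f_i} = \int_{r_{f_i}}^{r_{f_i}+g_{f_i}} F(t)\dt = \int_{r_{f_i}}^{r_{f_{i+1}}} F(t)\dt$. The intervals $[r_{f_i}, r_{f_{i+1}}]$ for $i=1,\dots,k-1$ tile $[0, r_{f_k}] = [0,r_e]$. Hence the sum telescopes to $\int_0^{r_e} F(t)\dt = Q(0,r_e)$. This integral is well-defined because $r_e \leq 1$ by the previous step and $F$ is defined on $[0,1]$.

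\textbf{Main obstacle.} There is no real obstacle. The only point needing care is the consecutive-interval structure $r_{f_{i+1}} = r_{f_i} + g_{f_i}$, which depends on $L(f_{i+1}) = L(f_i)\cup\{f_i\}$ and the arrival ordering.
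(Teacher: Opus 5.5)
Your proof is correct and follows essentially the same argument as the paper: order the edges at $u_e$ by arrival, bound $r_e+g_e$ using the fractional-matching constraint at $u_e$, and telescope the integrals $y_{f_i}=\int_{r_{f_i}}^{r_{f_{i+1}}}F(t)\dt$ into $Q(0,r_e)$. Your remark that $r_e\le 1$ keeps the integral inside the domain of $F$ is a small piece of care that the paper leaves implicit.
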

\begin{proof}
Let $u$ denote the left-node of $e$ and let the edges adjacent to $u$ be $f_1, f_2, \dots, f_k, e$ in order of arrival.  So $r_e = g_{f_1} + \dots + g_{f_k};$ since $g$ is a fractional matching this is at most $1 - g_e$.  We further compute:
    \begin{align*}
    &y_{f_1} + \dots + y_{f_k} = Q(r_{f_1},  g_{f_1}) + Q( r_{f_2}, g_{f_2}) + \dots + Q(r_{f_k}, g_{f_k})  \\
        & \qquad = Q(0, g_{f_1}) + Q(g_{f_1}, g_{f_2})  + \dots + Q(g_{f_1} + \dots + g_{f_{k-1}}, g_{f_k})  \\
        & \qquad = \int_{t=0}^{g_{f_1}} F(t) \dt + \int_{t=g_{f_1}}^{g_{f_1} + g_{f_2}} F(t) \dt  + \dots + \int_{t=g_{f_1} + \dots + g_{f_{k-1}}}^{g_{f_1} + \dots + g_{f_k}} F(t) \dt \\
        &\qquad = \int_{t=0}^{g_{f_1} + \dots + g_{f_k}} F(t) \dt = Q(0,r_e).    
        \qedhere
        \end{align*}
\end{proof}

\begin{proposition}
\label{tre1}
For each edge $e$ there holds $x_e \geq 0$ and $\rho_e \geq 0$. 

For each right-node $v$ there holds $\sum_{e \in N(v)} x_e \leq 1$. 

For each left-node $u$ there holds $\sum_{e \in N(u)} \rho_e \leq 1$.
\end{proposition}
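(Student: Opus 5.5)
Nonnegativity is immediate from the definitions. Since $F \geq 0$ by \Cref{prop: numerical}(a) and $g_e \geq 0$, we get $y_e = \int_{r_e}^{r_e+g_e} F(t)\dt \geq 0$. Then $\rho_e = \alpha y_e \geq 0$, and $x_e = (1-\beta)F(r_e)g_e + \beta y_e \geq 0$ because $\beta = 0.3 \in [0,1]$.

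For the left-node bound, fix $u$ and let $e$ be the last edge of $N(u)$ in arrival order. Then $N(u) = L(e) \cup \{e\}$, so by \Cref{tre4}
$$\sum_{f \in N(u)} \rho_f = \alpha\Bigl(\sum_{f\in L(e)} y_f + y_e\Bigr) = \alpha\bigl(Q(0,r_e) + Q(r_e,g_e)\bigr) = \alpha\, Q(0, r_e+g_e).$$
\Cref{tre4} also gives $r_e + g_e \leq 1$. Since $F \geq 0$, $Q(0,\cdot)$ is nondecreasing, so this is at most $\alpha Q(0,1) \leq 1$ by \Cref{prop: numerical}(b).

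For the right-node bound, I bound each $x_e$ by $g_e$ and then use that $g$ is a fractional matching. Because $F$ is increasing, $F(r_e) \leq F(t)$ for $t \geq r_e$, and hence $F(r_e)g_e \leq y_e$. On the other hand, $r_e + g_e \leq 1$ together with $F$ increasing and \Cref{prop: numerical}(c) gives $F(t) \leq F(1) \leq 1$ on the integration range, so $y_e \leq g_e$. Therefore
$$x_e \leq (1-\beta)y_e + \beta y_e = y_e \leq g_e,$$
and summing over $e \in N(v)$ yields $\sum_{e\in N(v)} x_e \leq \sum_{e \in N(v)} g_e \leq 1$.

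None of these steps is a real obstacle. The only place needing care is confirming that every integration range stays inside $[0,1]$, where the bounds on $F$ are valid; \Cref{tre4} supplies exactly this.
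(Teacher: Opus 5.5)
Your proof is correct and matches the paper's argument: nonnegativity from $F \geq 0$, and the right-node bound from $x_e \leq g_e$. You obtain that as $x_e \leq y_e \leq g_e$ where the paper writes $x_e \leq g_e F(1)$; both rely only on monotonicity of $F$, $F(1) \leq 1$, and $r_e + g_e \leq 1$. The left-node bound is the same telescoping via \Cref{tre4} and $\alpha Q(0,1) \leq 1$, except that you stop at the last actual edge of $u$ while the paper appends a dummy edge, which is a purely cosmetic difference.
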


\begin{proof}
Nonnegativity of $x$ and $\rho$ hold since $F$ is nonnegative. For each edge $e$, convexity and monotonicity of $F$ imply that $$
x_e = (1-\beta) g_e F(r_e) + \beta \int_{t=r_e}^{r_e+g_e} F(t) \dt \leq g_e F(1) \leq g_e.
$$

Therefore, for a right-node $v$, we have
$
        \sum_{e \in N(v)} x_e \leq \sum_{e \in N(v)} g_e  = g(N(v)) \leq 1
        $
        where the last inequality holds since $g$ is a fractional matching of $G$.

    For the second bound, consider adding an additional dummy edge $f$ at the end of $u$. By \Cref{tre4} and \Cref{prop: numerical}, we calculate $\sum_{e \in N(u)} \rho_e = \sum_{e \in L(f)} \alpha y_e = \alpha Q(0,r_f) \leq \alpha Q(0,1) \leq \alpha \cdot \frac{1}{\alpha} = 1.$
\end{proof}

Let us say now that an edge $e = (u,v)$ is \emph{selected} if $(1 - x_e) Z_e <  (x(N(v)) - x_e) \min_{f \in N(v)\setminus \{e \}} Z_f$, (i.e. it satisfies the inequality at Line 8); the edge may or may not be added to  the matching, depending on whether $u \in U$ has been matched earlier.

\begin{lemma}
\label{simlemma}
Let $X_e$ denote the indicator variable that edge $e$ is selected. The random variables $X_e: e \in E$ are as produced by  $\vec X \leftarrow \textsc{DepRound}(G, \vec x, \vec \rho)$.
\end{lemma}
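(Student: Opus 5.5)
We need to show that the joint law of the online-generated $(T_e)$ agrees with the offline one. Once that holds, $A_e$, $Z_e$ and the selection rule at Line 8 coincide with Lines 3--4 and Line 8 of \textsc{DepRound}($G,\vec x,\vec\rho$).

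First, note that the parameters $x_e,\rho_e$ computed online are deterministic functions of $\vec g$ and the arrival order. So $\vec x,\vec\rho$ form a fixed input vector, and by \Cref{tre1} it meets the preconditions of \textsc{DepRound}. The selection rule in \textsc{ExponentialODRS} is literally the inequality of Line 8 of \textsc{DepRound}, with the same $x(N(v))$ and the same $Z_e=-\log(A_e)/x_e$ where $A_e=I(T_e;\rho_e,1-\rho_e)$. It is applied edge by edge regardless of whether $u$ is already matched, and being matched only affects adding to $M$, not selection.

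It therefore suffices to show two things. For each left node $u$ with edges $f_1<\dots<f_k$, the vector $(T_{f_1},\dots,T_{f_k})$ produced by the stick-breaking in Line 6 has law $\mathrm{Dir}(\rho_{f_1},\dots,\rho_{f_k},1-\sum_i\rho_{f_i})$. And the blocks belonging to different left nodes are mutually independent. Independence is immediate, since each $B_e$ is drawn fresh and $T_e$ depends only on $B_f$ for $f\in N(u_e)$.

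For the Dirichlet law I would use induction on $k$ via the neutrality and aggregation properties in \Cref{fact:prop-of-dirichlet}. The standard stick-breaking representation of the Dirichlet says the following. If $(T_1,\dots,T_k,T_0)\sim\mathrm{Dir}(\rho_1,\dots,\rho_k,\rho_0)$ with $\sum\rho=1$, then $B_1=T_1\sim\mathrm{Beta}(\rho_1,1-\rho_1)$. Moreover, conditionally on $T_1$, the rescaled vector $(T_2,\dots,T_k,T_0)/(1-T_1)$ is independent of $T_1$ by neutrality, and has law $\mathrm{Dir}(\rho_2,\dots,\rho_k,\rho_0)$. Iterating, $T_j=(1-\sum_{i<j}T_i)B_j$ with $B_j\sim\mathrm{Beta}(\rho_j,1-\sum_{i\le j}\rho_i)$ independent. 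This is exactly Line 6 with $L(f_j)=\{f_1,\dots,f_{j-1}\}$.

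The only subtlety is that online we do not know $k$ or the future $\rho$'s. But the marginal of the first $j$ coordinates of $\mathrm{Dir}(\rho_1,\dots,\rho_k,\rho_0)$ is $\mathrm{Dir}(\rho_1,\dots,\rho_j,1-\sum_{i\le j}\rho_i)$ by aggregation, so the law of each prefix is consistent. Hence the whole process matches the offline draw with the final $\rho$ vector.

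The main obstacle I expect is carefully justifying the stick-breaking identity. One route is a change of variables on the density. The other is to combine neutrality, which gives independence of $T_1$ from the normalized remainder, with aggregation, which shows that the remainder's law is the stated Dirichlet, and then run the induction. Degenerate cases such as $\rho_e=0$ or a sum equal to $1$ are handled by the paper's boundary convention.
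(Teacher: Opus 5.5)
Your proposal is correct and follows essentially the same route as the paper. Like the paper, you use the stick-breaking representation via neutrality, with induction on prefixes and aggregation showing that the online draws, which only see $\rho_{f_1},\dots,\rho_{f_j}$, reproduce the prefix marginals of the offline Dirichlet. You also make explicit several points the paper leaves implicit: independence across left nodes, that $\vec x,\vec\rho$ are deterministic, and that selection is defined independently of matching status.

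One small caveat: aggregation alone does not give you the law of the rescaled remainder $(T_2,\dots,T_k,T_0)/(1-T_1)$. What does is your change-of-variables route, or neutrality combined with the moment formula in \Cref{fact:prop-of-dirichlet}(6). The paper's sketch glosses this same step by appealing to the standard stick-breaking construction.
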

\begin{proof}
The generation of $T$ in the algorithm follows from the well-known ``stick-breaking process" for the Dirichlet distribution \cite{gelman2003bayesian}. For completeness, we sketch a proof.

Let $u$  be a left-node with edges $e_1, e_2, \dots, e_k$ in order of arrival. By \Cref{tre1}, $\sum_{e \in N(u)} \rho_e \leq 1$. Let $T_{e_1}, \dots, T_{e_k}$ be random variables as defined in \textsc{ExponentialODRS} and let $(T'_{e_1}, \dots, T'_{e_k})$ be corresponding Dirichlet variables as in \textsc{DirichletCopula}.  We claim by induction that $$
(T_{e_1}, \dots, T_{e_i}) \sim \operatorname{Dir}(\rho_{e_1}, \dots, \rho_{e_i}, 1 - \rho_{e_1} - \dots - \rho_{e_i}).
$$

For, \Cref{dirichletfact}(4) shows that $\frac{ (T'_{e_i}| T'_{e_1}, \dots, T'_{e_{i-1}}) }{1 - \sum_{j=1}^{i} T'_{e_j}}$ has the same distribution as $\operatorname{Beta}(\rho_{e_i}, 1 - \sum_{j=1}^{i-1} \rho_{e_j})$, and $T_{e_i}$ is drawn according to this distribution. By \Cref{dirichletfact}(3) we aggregate the remaining coordinates $T'_{e_{i+1}}, \dots, T'_{e_k}$, hence $(T_{e_1}, \dots, T_{e_i})$ has precisely the same distribution as $(T'_{e_1}, \dots, T'_{e_i})$.

By \Cref{dirichletfact}(2), the overall distribution of \((T_e)_{e \in N(u)}\) is thus the same as in \textsc{DirichletCopula}.
\end{proof}

From \Cref{simlemma}, we get the following immediate consequences:
\begin{observation}
\label{tre2}
\begin{enumerate}
\item $M$ is a matching.
\item Each edge $e$ is selected with probability exactly $x_e$.
\item Edges $e,f$ sharing a left-node are jointly selected with probability at most $x_e x_f \Psi( x_e, x_f; \rho_e, \rho_f)$. 
\end{enumerate}
\end{observation}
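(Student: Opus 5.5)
The plan is to read all three items off \Cref{simlemma}, which identifies the selection indicators $X_e$ of \textsc{ExponentialODRS} with the output of $\textsc{DepRound}(G,\vec x,\vec\rho)$. Before using it, we check that the inputs are admissible for \cref{alg:dependent-rounding}. By \Cref{tre1}, $x_e,\rho_e\geq 0$, $x(N(v))\leq 1$ for every right-node $v$, and $\rho(N(u))\leq 1$ for every left-node $u$. The bound $x_e\leq g_e\leq 1$ was shown in the proof of \Cref{tre1}. For $\rho_e\leq 1$, note that $\rho_e\leq \rho(N(u))\leq 1$. Hence \cref{prop:dep-round} and \cref{thm: E[U^q]} apply verbatim to the selection indicators.

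For item (2), apply \cref{prop:dep-round}(3): $\Pr[X_e=1]=x_e$. For item (3), take two edges $e,f$ sharing a left-node $u$. By \cref{thm: E[U^q]},
$$
\Pr[X_e=X_f=1]=\E[X_eX_f]\leq x_ex_f\Psi(x_e,x_f;\rho_e,\rho_f).
$$

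For item (1), we argue directly from the online rule. An edge $e=(u,v)$ is added to $M$ only when $u$ is currently unmatched, so each left-node is matched at most once; once matched, later edges at $u$ are never committed. Each online node $v$ adds at most one edge at its arrival step, since the rule commits a single edge. Even without that, \cref{prop:dep-round}(4) says at most one edge of $N(v)$ is selected at all. Therefore $M$ is a matching.

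The only delicate point is the bookkeeping in the first step, and it is minor. The statement concerns selected edges, which follow the \textsc{DepRound} law, and not the edges actually added to $M$. We must also make sure the online definitions of $x_e,\rho_e$ agree with the values used in the \textsc{DepRound} call. They do, since $x_e$ depends only on the $g_f$ for $f\leq e$, which are known when $e$ arrives. Everything else is immediate.
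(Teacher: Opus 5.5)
Your proposal is correct and matches the paper's route. The paper simply declares all three items immediate from \Cref{simlemma}, which identifies the selection indicators with the output of \textsc{DepRound}, combined with \Cref{prop:dep-round}, \Cref{thm: E[U^q]}, and (for $M$ being a matching) the unmatched-$u$ check; your explicit verification of the admissibility conditions via \Cref{tre1} spells out what the paper leaves implicit.
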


The following calculation is now key to the approximation ratio:
\begin{lemma}\label{lem:correlation-factor}
For values $r_1, r_2, g_1, g_2 \geq 0$ with $g_2 + r_2 \leq r_1, g_1 + r_1 \leq 1$, and corresponding values $$
y_i = Q(r_i, g_i),\qquad x_i = (1 - \beta) F(r_i) g_i + \beta y_i,  \qquad \rho_i = \alpha y_i
$$
there holds
$$
 \Psi( x_1, x_2; \rho_1, \rho_2) x_1 x_2  \leq c \cdot F(r_1) g_1 y_2 + \frac{\beta (y_1 - g_1 F(r_1)) y_2  }{Q(0,r_1)} \qquad \qquad \text{for constant $c = 0.3939$}
 $$
\end{lemma}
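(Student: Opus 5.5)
Start from \Cref{simp-corr}, which gives the closed-form bound
\[
\Psi(x_1,x_2;\rho_1,\rho_2)\,x_1x_2 \;\le\; \frac{x_1x_2}{\binom{a_1+a_2}{a_1}},\qquad a_i=\rho_i(1/x_i-1)=\alpha y_i(1/x_i-1).
\]
This reduces the claim to an inequality among $r_1,r_2,g_1,g_2$. Since $x_i\approx F(r_i)g_i\approx y_i$ when $g_i$ is small, we have $a_i\approx\alpha$ in the infinitesimal regime. The right-hand side is then roughly $F(r_1)g_1y_2\cdot x_2/(y_2\binom{2\alpha}{\alpha})$. Numerically $\binom{2\alpha}{\alpha}$ at $\alpha=1.2337$ is about $2.55$, so $1/\binom{2\alpha}{\alpha}\approx 0.392<c$, and the constant $c=0.3939$ should come from exactly this regime. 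The second term on the right, $\beta(y_1-g_1F(r_1))y_2/Q(0,r_1)$, is nonnegative because $F$ is increasing (so $y_1\ge g_1F(r_1)$). It is there to absorb the loss when $g_1$ is not infinitesimal.

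The key steps, in order, are as follows.

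First, use monotonicity to put $\Psi$ in its worst case. By \Cref{prop:dirichlet-concordance} and its corollary, $\Psi$ is nonincreasing in $\rho$. The binomial bound is also monotone in the $a_i$, since $\binom{a_1+a_2}{a_1}$ is increasing in each $a_i$ for $a_i\ge 0$. So it suffices to lower-bound $a_1$ and $a_2$. Using $x_i\le y_i$ (from convexity of $F$, as in \Cref{tre1}), we get $a_i\ge \alpha(1-x_i)\cdot y_i/x_i\ge\alpha(1-x_i)$.

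Second, separate the roles of the two edges. For the factor $x_2$, bound $x_2\le y_2$. For $x_1$, write $x_1=F(r_1)g_1+\beta(y_1-F(r_1)g_1)$. This splits the left side into a part $F(r_1)g_1y_2/\binom{a_1+a_2}{a_1}$ and a part $\beta(y_1-F(r_1)g_1)y_2/\binom{a_1+a_2}{a_1}$. For the second part it suffices to show $\binom{a_1+a_2}{a_1}\ge Q(0,r_1)$. This should be easy: $Q(0,r_1)\le Q(0,1)\le 1/\alpha<1$ by \Cref{prop: numerical}, while the binomial is at least $1$.

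Third, for the first part, show $1/\binom{a_1+a_2}{a_1}\le c$ uniformly. This is where care is needed: when $x_1$ or $x_2$ is not small, the lower bound $a_i\ge\alpha(1-x_i)$ degrades. The compensating effect is that $y_i/x_i$ grows with $g_i$ (the convexity gap of $F$), and $x_i\le g_iF(1)$ stays bounded away from $1$ only through $F(1)\approx 1.0$. I would parametrize by $g_i$ and $r_i$, and use the explicit form of $F$ to compute $y_i/x_i$.

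The main obstacle is this last uniform bound, especially when $g_1$ or $g_2$ is large and $a_i$ can drop well below $\alpha$. The cruder \Cref{simp-corr} may then be insufficient. If so, I would instead use the sharper $\Psi$ bounds promised in \Cref{sec: 03_Beta fcn}, and shift the slack into the second term by using $Q(0,r_1)\ge r_1F(0)$ together with $r_1\ge g_2+r_2$. Failing a clean analytic argument, I would finish with a rigorous numerical verification over a grid in $(r_1,g_1,r_2,g_2)$, using Lipschitz bounds on the integrands to make it exact.
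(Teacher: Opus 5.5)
\textbf{There is a genuine gap: the term-by-term split in your second step is false, and no sharper bound on $\Psi$ can repair it.}

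Writing $x_1=F(r_1)g_1+\beta(y_1-F(r_1)g_1)$, you charge only the second piece to the term $\beta(y_1-g_1F(r_1))y_2/Q(0,r_1)$. What remains is the requirement $\Psi(x_1,x_2;\rho_1,\rho_2)\,x_2/y_2\le c$ on the whole feasible region. This fails. Take $r_1=1/2$, $g_1=0.35$, $r_2=0$, $g_2\to 0$. Then:
\begin{itemize}
\item $x_2/y_2\to 1$ and $\rho_2/x_2\to\alpha$;
\item $x_1\approx 0.285$, $\rho_1\approx 0.369$, and $a_1=\rho_1(1/x_1-1)\approx 0.927$.
\end{itemize}
Your zeroth-order bound gives $\binom{a_1+\alpha}{a_1}^{-1}\approx 0.468$. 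The exact value is also too large. In the expansion of \Cref{mainestlemma0} with $\kappa_{2,0}\to 1$, the first six nonnegative limiting terms $\sum_{j\le 5}\alpha_{j,0}\kappa_{1,j}$ already sum to about $0.400>c$. So $\Psi$ is not uniformly below $c$, and the compensation you hoped for from $y_i/x_i>1$ is insufficient.

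The lemma survives at this point only because of the second term. Let $A_1=y_1/(F(r_1)g_1)$. Measured in units of $F(r_1)g_1y_2$, that term contributes $\beta(A_1-1)/Q(0,r_1)\approx 0.060$. This brings the objective down to roughly $1.022\cdot 0.40-0.060\approx 0.35$. Your split wastes this: you use the term only to absorb $\beta(y_1-F(r_1)g_1)\Psi x_2$, where its factor $1/Q(0,r_1)$ is not needed.

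\textbf{What is needed.} Keep the inequality joint. With $A_i=Q(r_i,g_i)/(F(r_i)g_i)$, one must show
\[
\Psi\,(1-\beta+\beta A_1)\Bigl(\beta+\frac{1-\beta}{A_2}\Bigr)-\frac{\beta(A_1-1)}{Q(0,r_1)}\le c .
\]
The paper does this by partitioning $(r_1,r_2,g_1,g_2)$ into boxes. On each box it bounds $\Psi$ using $\rho$-monotonicity together with a fifth-order truncation of the series (\Cref{gratta1}, \Cref{psicompbound}). It bounds the remaining factors at a single box corner, using monotonicity of the expression in each variable (\Cref{prop:objective-box-bound}).

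\textbf{Your fallback.} It gestures at such a grid check, but two details point the wrong way.
\begin{itemize}
\item The inequality $Q(0,r_1)\ge r_1F(0)$ lower-bounds $Q(0,r_1)$. That only shrinks the slack term; extracting slack requires an upper bound on $Q(0,r_1)$.
\item A Lipschitz-type grid on the original quotient degenerates as $g_1,g_2\to 0$, because the denominator $F(r_1)g_1y_2$ vanishes. That is exactly where the extremum sits, at about $\binom{2\alpha}{\alpha}^{-1}\approx 0.3930$ (not $0.392$), less than $10^{-3}$ below $c$. The paper avoids this by working with $A_1,A_2$, which tend to $1$ there.
\end{itemize}
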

Showing this requires a grid search over the relevant parameters $g_1, g_2, r_1, r_2$ and massive computer calculation, as discussed in Appendix~\ref{app-numerical}. It immediately gives us the bound on the approximation ratio:
\begin{theorem}\label{thm: main apx}
Each edge $e$ goes into the matching with probability at least $0.68 \thinspace g_e$.
\end{theorem}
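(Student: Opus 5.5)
The plan is to lower-bound $\Pr(e \in M)$ by the probability that $e$ is selected, minus the probability that $e$ is selected together with some earlier edge at the same left-node. Then \Cref{lem:correlation-factor}, summed over those earlier edges, should collapse to a closed-form expression in $r_e$.

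Fix $e=(u,v)$. By construction, $e$ enters $M$ exactly when it is selected ($X_e=1$) and $u$ has not been matched before $v$ arrives. If $u$ was matched earlier, it was matched through some edge $f \in L(e)$ that was added to $M$, and such an $f$ must have been selected. Hence
$$
\Pr(e\in M) \geq \Pr(X_e=1) - \sum_{f\in L(e)} \Pr(X_e = X_f = 1).
$$
By \Cref{tre2} (which rests on \Cref{simlemma} and \Cref{thm: E[U^q]}), this gives
$$
\Pr(e\in M) \geq x_e - \sum_{f \in L(e)} x_e x_f \Psi(x_e,x_f;\rho_e,\rho_f).
$$

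For each $f \in L(e)$, I apply \Cref{lem:correlation-factor} with index $1$ equal to $e$ ($r_1=r_e$, $g_1=g_e$) and index $2$ equal to $f$ ($r_2=r_f$, $g_2=g_f$). The hypotheses hold: $f$ arrives before $e$ at $u$, so $r_f+g_f\leq r_e$, and $r_e+g_e\leq 1$ by \Cref{tre4}. Summing over $f$ and using $\sum_{f\in L(e)} y_f = Q(0,r_e)$ from \Cref{tre4}, the second term telescopes to exactly $\beta(y_e - g_eF(r_e))$. (If $L(e)=\emptyset$, the sum is empty and the bound is just $x_e$.) Substituting $x_e=(1-\beta)F(r_e)g_e+\beta y_e$, the $\beta y_e$ terms cancel, leaving
$$
\Pr(e\in M)\geq g_e F(r_e)\bigl(1 - c\,Q(0,r_e)\bigr).
$$

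It remains to check that $h(r):=F(r)(1-cQ(0,r))\geq 0.68$ for $r\in[0,1]$. This is a one-variable calculation with the explicit $F$. Write $k=0.53562$ and $s=\sqrt{1-kr}\in[\sqrt{1-k},1]$. Then
$$
Q(0,r)=\frac{2\cdot 0.68145}{k}(1-s),
$$
and with $\kappa := 2c\cdot 0.68145/k \approx 1.0023$ we get
$$
h(r) = 0.68145\bigl(\kappa - (\kappa-1)/s\bigr).
$$
This is minimized at the smallest value of $s$, namely $s=\sqrt{1-k}\approx 0.6815$, where $h \approx 0.68145\cdot 0.9989 \approx 0.6807 \geq 0.68$.

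The real difficulty lies entirely in \Cref{lem:correlation-factor}, which is assumed here. Given that lemma, the only delicate points are the union-bound reduction (that $u$ being matched earlier forces some $f\in L(e)$ to be selected jointly with $e$) and verifying the near-cancellation in the final numerical constant, which has very little slack.
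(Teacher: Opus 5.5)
Your proposal is correct and follows the paper's proof essentially step for step. The steps match: the union-bound reduction $\Pr(e\in M)\ge \E[X_e]-\sum_{f\in L(e)}\E[X_eX_f]$, applying \Cref{lem:correlation-factor} with $e$ as index $1$ and $f$ as index $2$, summing with $\sum_{f\in L(e)}y_f=Q(0,r_e)$ so that the $\beta y_e$ terms cancel, and reducing to the one-variable inequality $F(r)(1-cQ(0,r))\ge 0.68$. Your explicit check of that last inequality is also correct and fills in what the paper leaves as ``easily checked'': since $\kappa\approx 1.0023>1$, the minimum is at $r=1$, where the value is about $0.6807$.
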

\begin{proof}
We have $e \in M$ if $e$ is selected and there is no earlier edge $f \in L(e) $ which is selected. So,
$$
\Pr( e \in M) \geq  \E[X_e] - \sum_{f \in L(e)} \E[X_e X_f]
$$

Let $z = \beta (y_e - F(r_e) g_e)$ and let $s = Q(0,r_e)$. By \Cref{tre2} and \Cref{lem:correlation-factor}, we have
\begin{align*}
\E[X_e] &= x_e = F(r_e) g_e + z \\
\E[X_e X_f] &\leq  y_f ( c F(r_e) g_e + z / s)
\end{align*}

Thus, overall
\begin{align*}
\Pr( e \in M) &\geq F(r_e) g_e + z - \negthickspace \sum_{f \in L(e)} \negthickspace  y_f \bigl( c F(r_e) g_e + z / s \bigr) \\
&= z + g_e F(r_e) - s (c F(r_e) g_e + z/s) \tag{\Cref{tre4}}  \\
&= g_e F(r_e) (1 - c Q(0, r_e)) \tag{definition of $s$}
\end{align*}

It can be easily checked that $F(r) (1 - c Q(0,r)) \geq 0.68$ for all $r \in [0,1]$.
\end{proof}
\section{Weighted completion time on unrelated machines}\label{sec: 06_weighted completion time}

To reiterate the notation, we have a set of machines $\calM$ and a set of jobs $\calJ$. Each machine-job pair $(i,j)$ has a weight $w^{(i)}_j$ and processing time $p^{(i)}_j$. The objective is to schedule jobs to minimize $$
\text{OBJ} = \sum_i \sum_{\substack{\text{jobs $j$ assigned} \\ \text{to machine $i$}}}w_j^{(i)}C_j,
$$
where $C_j$ is the cumulative processing time on machine $i$ for all jobs assigned up to and including job $j$.

For a single machine, the problem can be solved with a well-known greedy heuristic \cite{smith1956}: schedule the jobs in a decreasing order of the \emph{Smith ratio} $\sigma_j=w_j/p_j$. Thus, the key task is to determine \emph{which} machine to allocate each task to.

Our algorithm starts with  a semidefinite-programming (SDP) relaxation, yielding fractional assignments $x_j^{(i)}$ for each $i,j$, as well as pairwise terms $x_{j,j'}^{(i)}$.  The precise form of the program is not relevant for us; see \cite{bansal2016lift} for details. Next, we use the clustering algorithm from \cite{harris2024dependent}: the jobs on each machine are partitioned into classes $\calP^{(i)}$ based on their processing times; to avoid worst-case behavior a random shift is used. Within each class $\calP^{(i)}$, the jobs are sorted in descending order of Smith ratios $\sigma_j^{(i)} = w_j^{(i)}/p^{(i)}_j$, and sequentially added to a cluster until the cumulative weight of the cluster, $\sum_{j\in\calC} x_j^{(i)}$, surpasses a threshold $\theta$. 

We obtain the integral solution $\vec X^{(i)}_j$ using the \textsc{DepRound} algorithm, where each edge  $e$ represents a possible assignment of a job $j$ to a cluster $\mathcal C^{(i)}_{k,\ell}$ (and hence to machine $i$).  Most of the rates $\rho_e$ are proportional to  $x^{(i)}_j$, however, the last job added to a cluster requires special handling when the cumulative weight of the cluster exceeds a larger threshold $\tau > \theta$. We refer to this as a \emph{truncated job} (there can be at most one per cluster), and the others as \textit{untruncated} jobs.

The full details are in \Cref{alg: scheduling}.

\begin{algorithm}[H]
    \caption{\textsc{Clustering} for the scheduling algorithm}
    \begin{algorithmic}[1]
    \STATE Define parameters $\pi=4.5, \theta = 0.56,\tau=0.608$.
    \STATE Draw a random variable $P_{\text {offset }}\sim \text{Unif}([0,1])$.
        \STATE
        Partition the jobs into processing time classes $\mathcal{P}_{k}^{(i)}=\bigl \{j: P_{\text {offset }}+\frac{\log p_{j}^{(i)}}{\log \pi} \in[k, k+1) \bigr \}$
        \FOR{each machine $i$ and class $\mathcal{P}_{k}^{(i)}$}
            \STATE
            Initialize cluster index $\ell\gets1$ and set $\mathcal{C}_{k, 1}^{(i)}=\emptyset$.
            \STATE Sort $\mathcal{P}_{k}^{(i)}$ in decreasing order of Smith ratio as $j_{1}, j_{2}, \ldots,  j_s$.
            \FOR{$t=1, \ldots, s$ and each job $j=j_{t}$}
          \STATE Update $C^{(i)}_{k,\ell} \leftarrow C^{(i)}_{k,\ell} \cup \{j \}$.
                \IF{$x^{(i)} ( \mathcal{C}_{k, \ell}^{(i)} ) > \tau$}
                \STATE Set $\mathcal T^{(i)}_{k,\ell} = \{ j \}$ and $\mathcal U^{(i)}_{k,\ell} = \mathcal C^{(i)}_{k,\ell} \setminus \{j \}$
                \STATE Set $\rho_{j'}^{(i)} =  x_{j'}^{(i)}/\tau$ for $j' \in \mathcal{U}_{k, \ell}^{(i)}$ and set $\rho_{j}^{(i)} = 1 - x^{(i)} ( \mathcal{U}_{k, \ell}^{(i)} ) / \tau$.
                \STATE Update $\ell \leftarrow \ell+1$ and initialize the new cluster $\mathcal{C}_{k, \ell}^{(i)}=\emptyset$.                                
                \ELSIF{$x^{(i)}( \mathcal{C}_{k,\ell}^{(i)})  \geq \theta$ or $t = s$}        
                \STATE Set $\mathcal T^{(i)}_{k,\ell} = \emptyset$ and $\mathcal U^{(i)}_{k,\ell} =  \mathcal C^{(i)}_{k,\ell}$.                
            \STATE Set $\rho_{j'}^{(i)} =  x_{j'}^{(i)} /x^{(i)}( \mathcal{C}_{k,\ell}^{(i)})$ for each $j' \in \mathcal{C}_{k, \ell}^{(i)}$ 
                              \STATE Update $\ell \leftarrow \ell+1$ and initialize the new cluster $\mathcal{C}_{k, \ell}^{(i)}=\emptyset$.
                \ENDIF                
            \ENDFOR
    \ENDFOR
    \STATE Form a bipartite graph $G$: there is a left-node for each cluster $\mathcal C^{(i)}_{k,\ell}$ and a right node for each job $j$. There is an edge $e = ( C^{(i)}_{k,\ell}, j)$ for each $j \in C^{(i)}_{k,\ell}$, with $\rho_e = \rho^{(i)}_j$ and $x_e = x^{(i)}_j$.

    \STATE Run algorithm $\textsc{DepRound} (G, \vec x, \vec \rho)$ to convert the fractional solution $\vec x$ into an integral solution $\vec X$.
    \end{algorithmic}\label{alg: scheduling}
\end{algorithm}

The subsequent discussion relies on the following key fact.
\begin{theorem}\cite{bansal2016lift}\label{thm: approx eta} For any machine $i^{*}$, suppose the jobs $1, \ldots, n$ are sorted in descending order of the Smith ratio $\sigma_{j}^{\left(i^{*}\right)}$. For each job $j^{*}$,  define
$$
\begin{aligned}
Z^{\left(i^{*}, j^{*}\right)} & =\frac{1}{2}\left(\sum_{j=1}^{j^{*}} X_{j}^{\left(i^{*}\right)}\left(p_{j}^{\left(i^{*}\right)}\right)^{2}+\sum_{j=1}^{j^{*}} \sum_{j^{\prime}=1}^{j^{*}} X_{j}^{\left(i^{*}\right)} X_{j^{\prime}}^{\left(i^{*}\right)} p_{j}^{\left(i^{*}\right)} p_{j^{\prime}}^{\left(i^{*}\right)}\right) \\
\mathrm{LB}^{\left(i^{*}, j^{*}\right)} & =\frac{1}{2}\left(\sum_{j=1}^{j^{*}} x_{j}^{\left(i^{*}\right)}\left(p_{j}^{\left(i^{*}\right)}\right)^{2}+\sum_{j=1}^{j^{*}} \sum_{j^{\prime}=1}^{j^{*}} x_{j, j^{\prime}}^{\left(i^{*}\right)} p_{j}^{\left(i^{*}\right)} p_{j^{\prime}}^{\left(i^{*}\right)}\right)
\end{aligned}
$$

    If $\E[Z^{\left(i^{*}, j^{*}\right)}] \leq \eta \cdot \mathrm{LB}^{\left(i^{*}, j^{*}\right)}$ for all $i^{*}, j^{*}$, then the schedule is an $\eta$-approximation in expectation.
\end{theorem}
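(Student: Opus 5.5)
The plan is to write both the algorithm's cost and the relaxation's objective, machine by machine, as nonnegative combinations of the quantities $Z^{(i^*,j^*)}$ and $\mathrm{LB}^{(i^*,j^*)}$, using the same coefficients for both. Linearity of expectation then turns the hypothesis into the approximation guarantee.

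First fix a machine $i^*$ and drop the superscript $(i^*)$. Order the jobs $1,\dots,n$ by decreasing Smith ratio, breaking ties consistently. Each machine processes its assigned jobs in this order, which is optimal for a single machine by \cite{smith1956}. Then
$$
\sum_j w_j X_j C_j = \sum_j w_j X_j \sum_{j' \leq j} X_{j'} p_{j'} .
$$
I will apply Abel summation to the weights. Writing $\sigma_{n+1} := 0$ and $w_j = \sigma_j p_j = p_j \sum_{k \geq j} (\sigma_k - \sigma_{k+1})$, the cost becomes
$$
\sum_{j^*=1}^n (\sigma_{j^*} - \sigma_{j^*+1}) \sum_{j \leq j^*} p_j X_j \sum_{j' \leq j} p_{j'} X_{j'} .
$$
The coefficients $\sigma_{j^*} - \sigma_{j^*+1}$ are nonnegative because the ratios are sorted in decreasing order.

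Next I will identify the inner sum with $Z^{(i^*,j^*)}$. For any $a_1,\dots,a_m$ we have $\sum_{j \le m} a_j \sum_{j' \leq j} a_{j'} = \tfrac12\bigl(\sum_j a_j^2 + (\sum_j a_j)^2\bigr)$. Apply this with $a_j = X_j p_j$ and $m = j^*$, and use $X_j^2 = X_j$ since $X_j \in \{0,1\}$. This gives exactly the expression defining $Z^{(i^*,j^*)}$. Running the same telescoping on the SDP objective of \cite{bansal2016lift}, with $x_j$ in place of $X_j$ and $x_{j,j'}$ in place of $X_jX_{j'}$, shows that the machine-$i^*$ part of the objective is $\sum_{j^*} (\sigma_{j^*} - \sigma_{j^*+1}) \mathrm{LB}^{(i^*,j^*)}$.

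Now sum over machines and take expectations. This gives
$$
\E[\mathrm{OBJ}] = \sum_{i^*} \sum_{j^*} (\sigma_{j^*} - \sigma_{j^*+1}) \E[Z^{(i^*,j^*)}] \leq \eta \sum_{i^*}\sum_{j^*} (\sigma_{j^*} - \sigma_{j^*+1}) \mathrm{LB}^{(i^*,j^*)} = \eta \cdot \mathrm{SDP} \leq \eta \cdot \mathrm{OPT},
$$
where the inequality uses the nonnegativity of the coefficients together with the hypothesis.

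I expect the main obstacle to be the second identification: checking that the SDP objective really has the symmetric form $\tfrac12\bigl(\sum_j x_j p_j^2 + \sum_{j,j'} x_{j,j'} p_j p_{j'}\bigr)$ after telescoping. The SDP is usually written with one-sided sums $\sum_{j'<j}$ and diagonal terms $x_{j,j} = x_j$, so the symmetrization must be checked against those conventions. Tie-breaking among equal Smith ratios also has to be handled consistently, so that the relaxation and the schedule use the same order.
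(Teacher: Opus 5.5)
Your proof is correct and is the standard argument behind this result, which the paper cites from \cite{bansal2016lift} without reproving: Abel summation over the Smith ratios writes both the schedule's cost and the relaxation's objective on machine $i^*$ as the same nonnegative combination $\sum_{j^*}(\sigma_{j^*}-\sigma_{j^*+1})(\cdot)$ of $Z^{(i^*,j^*)}$ and $\mathrm{LB}^{(i^*,j^*)}$, and linearity of expectation then finishes. The step you flag as the main obstacle is routine: the SDP imposes $x_{j,j}=x_j$ and $x_{j,j'}=x_{j',j}$, so $\mathrm{LB}^{(i^*,j^*)}=\sum_{j\le j^*}x_jp_j^2+\sum_{j'<j\le j^*}x_{j,j'}p_jp_{j'}$, which is exactly the telescoped prefix of the objective $\sum_j w_j\bigl(x_jp_j+\sum_{j'<j}x_{j,j'}p_{j'}\bigr)$.
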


In light of \cref{thm: approx eta}, we focus on a single pair $(i^*,j^*)$. From now on we omit the superscript $(i^*)$ and $(j^*)$ from all relevant variables, writing $Z$, $X_j$, $p_j$, $w_j$, $\sigma_j$, $\calP_k$,
instead of $Z^{(i^*,j^*)}$, $X_j^{(i^*)}$, $p_j^{(i^*)}$,  etc.. In addition, all summations $\sum_j$ are assumed to range over $j\in\mathcal J^*$, the set of jobs whose Smith ratio is at least that of  the target job $j^*$. Similarly, summations \(\sum_{j', j''}\) range over \textit{ordered} pairs $(j',j'')\in \mathcal{J}^*\times\mathcal{J}^*$.

For each processing time class \(\mathcal{P}_k\), we define $P_k = \pi^{k-P_{\text{offset}}}, \mathcal{P}_k^* = \mathcal{P}_k \cap \mathcal{J}^*$, and similarly for each cluster \(\mathcal{C}_{k, \ell}\), we set \(\mathcal{C}_{k, \ell}^* = \mathcal{C}_{k, \ell} \cap \mathcal{J}^*\). The last cluster opened within $\calP_k^*$ is called a \emph{leftover cluster} (it could be empty) and is denoted by \(\mathcal{C}_{k, \text{left}}^*\). For a job $j\in\mathcal{P}_k^{(i)}$, we define $H_j = p_j/P_k \in [1,\pi]$.

\begin{proposition}
\label{xjjprop}
For any pair of distinct jobs $j, j'$, we have $\E[ X_j X_{j'} ] \leq x_j x_{j'}$. Moreover, if $j, j'$ are both in the same cluster, then $\E[ X_j X_{j'} ] \leq x_j x_{j'} \Psi(x_j, x_{j'}; \rho_j, \rho_{j'})$.
\end{proposition}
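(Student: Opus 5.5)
Fix the machine $i^*$ throughout. Here $X_j$ means $X_j^{(i^*)}$, the indicator that job $j$ is assigned to machine $i^*$. Each job $j$ lies in exactly one class $\calP_k^{(i^*)}$ and exactly one cluster of that class. So in the bipartite graph $G$ built by \Cref{alg: scheduling}, $X_j$ is exactly the indicator $X_e$ of the single edge $e = (\calC^{(i^*)}_{k,\ell}, j)$, and $x_j = x_e$. A pair of distinct jobs $j,j'$ therefore corresponds to two edges $e,e'$ with distinct right-nodes $j \neq j'$. I will check that $\{e,e'\}$ is always a stable edge set and then apply \Cref{prop:dep-round}(5) and \Cref{thm: E[U^q]}. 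Before doing so I must confirm that \textsc{DepRound} is invoked legally. Its input requirements are $x(N(v)) \leq 1$ for each job, which holds by the SDP constraints $\sum_i x^{(i)}_j \leq 1$, and $\rho(N(u)) \leq 1$ for each cluster, which I check below.

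For the $\rho$-check, take a cluster closed at the ELSIF branch. Its $\rho$-values are $x_{j'}/x(\calC)$, and these sum to exactly $1$. Now take a cluster closed with a truncated job. The untruncated jobs contribute $x(\calU)/\tau$ and the truncated job contributes $1 - x(\calU)/\tau$, again summing to $1$. Nonnegativity follows from $x(\calU) \leq \tau$, which holds because the threshold $\tau$ had not been exceeded before the last job was added. Each $\rho_e$ lies in $[0,1]$. Degenerate zero entries are handled by the paper's boundary convention.

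\textbf{Case 1: $j,j'$ in different clusters.} The two edges have distinct left-nodes and distinct right-nodes, so they are not adjacent in the line graph. Their distance in the line graph is either $\geq 3$ or exactly $2$, the latter when some edge joins an endpoint of one to an endpoint of the other. Distance $2$ can occur, for instance when job $j'$ also has an edge to $j$'s cluster, so $\{e,e'\}$ need not be stable. Instead I argue directly that the $A$-variables of $e$ and $e'$ come from different left-nodes and are therefore independent. The key fact is that $X_e$ is a function of the exponentials $Z_f$ for $f \in N(j)$, and $X_{e'}$ is a function of those for $f \in N(j')$. The plan is to condition and use negative association. Given the $A$-values at $j$ and $j'$, we have $X_e = 1\{Z_e < Q_e\}$, and this is increasing in $A_e$ and decreasing in $A_f$ for the other edges $f \in N(j)$. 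The two blocks $N(j)$ and $N(j')$ are disjoint, since their right-nodes differ. But the monotonicity directions are mixed, so \Cref{thm: NA properties}(c) does not apply at once.

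To get around the mixed directions, I condition on everything except the left-nodes touching $N(j)$, and use the marginal-uniformity and independence-across-$N(v)$ of \Cref{prop:dep-round}(1). Concretely: condition on all $A_f$ with $f \in N(j)$. Then $\E[X_{e'} \mid \cdot]$ becomes a function of the $A$'s in $N(j')$. By NA, and since $X_e$ is a fixed indicator under this conditioning, I need $\E[X_{e'} \mid A_{N(j)}] \leq x_{e'}$ on average against $X_e$. I expect this step to be the main obstacle. If the direct NA argument stalls, I fall back on citing the analogous statement from \cite{harris2024dependent}, which asserts negative correlation for any pair of edges with distinct right-nodes under this rounding.

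\textbf{Case 2: $j,j'$ in the same cluster $u$.} Here $e,e'$ share the left-node $u$, so \Cref{thm: E[U^q]} applies directly and gives $\E[X_jX_{j'}] \leq x_jx_{j'}\Psi(x_j,x_{j'};\rho_j,\rho_{j'})$. The same-cluster case also satisfies the first claim of the statement: $\{e,e'\}$ is stable at distance one, so \Cref{prop:dep-round}(5) gives the bound $x_jx_{j'}$ as well.
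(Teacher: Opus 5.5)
Your Case 2 is correct and matches the paper. Case 1, however, has a genuine gap, and it comes from a false claim: when $j,j'$ lie in different clusters, the pair $\{e,e'\}$ \emph{is} stable. You state at the outset that on machine $i^*$ each job lies in exactly one cluster, but you never use this.

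Take $e=(\mathcal C,j)$ and $e'=(\mathcal C',j')$, where $\mathcal C\neq\mathcal C'$ are clusters of machine $i^*$. Distance two in the line graph would require a connecting edge $(\mathcal C,j')$ or $(\mathcal C',j)$; an edge joining the two left-nodes or the two right-nodes cannot exist, since $G$ is bipartite. But $(\mathcal C,j')$ is an edge only if $j'\in\mathcal C$. That would make $\mathcal C$ the unique $i^*$-cluster of $j'$, i.e.\ $\mathcal C=\mathcal C'$. The same argument rules out $(\mathcal C',j)$.

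So your example ``job $j'$ also has an edge to $j$'s cluster'' is exactly Case 2, not Case 1. Hence $\{e,e'\}$ is stable, and \Cref{prop:dep-round}(5) gives $\E[X_jX_{j'}]\le x_jx_{j'}$ at once. This one-line observation is the paper's whole argument.

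As written, your Case 1 does not establish the bound. The conditioning/NA sketch is left unfinished; as you note yourself, the monotonicity directions are mixed. The fallback is not a valid citation either, because negative correlation for arbitrary pairs of edges with distinct right-nodes is false for this rounding.

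Suppose $e=(u,v)$ and $e'=(u',v')$ are joined by $f=(u,v')$. Then $X_e$ is increasing in $A_e$, $X_{e'}$ is decreasing in $A_f$, and $A_e,A_f$ come from the same Dirichlet draw at $u$. For a concrete case, take $\rho_e=\rho_f=1/2$, so that $A_f=1-A_e$. Let $N(v)=\{e,g\}$ and $N(v')=\{f,e'\}$, with all four $x$-values equal to $1/2$, and put $g$ and $e'$ at two distinct left-nodes other than $u$. Then $\E[X_eX_{e'}]=\E[A_e^2]=1/3>1/4=x_ex_{e'}$.

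This is exactly why \Cref{prop:dep-round}(5) is restricted to stable sets. The structural fact that each job has one cluster per machine is the missing ingredient, not a detail. Your check of the \textsc{DepRound} preconditions is fine, but it is not needed for this statement.
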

\begin{proof}
If jobs $j, j'$ are in clusters $\mathcal C^{(i^*)}_{k,\ell}, \mathcal C^{(i^*)}_{k',\ell'}$ for machine $i^*$, then $\{ ( (i^*,k,\ell), j), ( (i^*, k', \ell'), j') \}$ is a stable edge set of $G$, since every job is assigned to a unique cluster per machine. So the first bound follows from \Cref{prop:dep-round}.  If jobs $j, j'$ are in the same cluster, then the edges $( (i^*,k,\ell), j), ( (i^*, k, \ell), j')$ share the same left-node of $G$. So the second bound follows from \Cref{thm: E[U^q]}.
\end{proof}

Now the objective becomes to bound the upper-bound and lower-bound variables
\begin{align*}
Z &=\frac{1}{2} \Bigl (\sum_{j} X_{j} p_{j}^{2}+\sum_{j, j^{\prime}} X_{j} X_{j^{\prime}} p_{j} p_{j^{\prime}}\Bigr)=\sum_{j} X_{j} p_{j}^{2}+\frac{1}{2} \sum_{j, j^{\prime}: j^{\prime} \neq j} X_{j} X_{j^{\prime}} p_{j} p_{j^{\prime}} \\
\mathrm{LB} &=\frac{1}{2}\Bigl(\sum_{j} x_{j} p_{j}^{2}+\sum_{j, j^{\prime}} x_{j, j^{\prime}} p_{j} p_{j^{\prime}}\Bigr)  \quad =\sum_{j} x_{j} p_{j}^{2}+\frac{1}{2} \sum_{j, j^{\prime}: j^{\prime} \neq j} x_{j, j^{\prime}} p_{j} p_{j^{\prime}}.
\end{align*}

We emphasize that $Z$ is a random variable which is determined by two sources of randomness: the offset $P_{\text{offset}}$ used to form the clusters, and the Dirichlet mechanism itself. Conversely, $\mathrm{LB}$ is just a scalar quantity.  In order to analyze the lower and upper bounds, we define parameters
$$
L = \sum_{j}x_jp_j, \qquad \qquad
Q = \sum_{j}x_jp_j^2,
$$

Note that $\mathrm{LB} \geq \max \left\{Q, \frac{1}{2}\left(Q+L^{2}\right)\right\}$. On the other hand, just using the negative correlation property $E[ X_j X_{j'} ] \leq x_j x_{j'}$, we have $\E[Z]\leq Q+\frac{1}{2}L^2$. So $\E[Z]\leq 3/2\cdot\mathrm{LB}$, hence, providing a $1.5$-approximation. \\
 
 Our argument closely follows \cite{harris2024dependent}, and only a small portion depends on the negative correlation itself. In particular, the strong negative correlation property, which is where the Dirichlet mechanism plays a role, leads to an upper bound on $\E[Z]$ expressed in terms of $L, Q$ and some additional parameters. A separate argument provides a lower bound on $\mathrm{LB}$ in terms of the same parameters. Finally, a numerical optimization determines the maximum ratio between these bounds, which is the upper bound on the approximation ratio. 
 
 \begin{theoremstar}\label{thm: sch_apx}
We have
\[
\E[Z] \le 1.387 \cdot \mathrm{LB}.
\]
In particular, \Cref{alg: scheduling} achieves an approximation ratio of $1.387$ in expectation, and the SDP relaxation has integrality gap at most $1.387$.
\end{theoremstar}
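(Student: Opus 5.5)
By \Cref{thm: approx eta}, it suffices to fix a machine $i^*$ and target job $j^*$ and show $\E[Z] \le 1.387\,\mathrm{LB}$. The plan follows the framework of \cite{harris2024dependent}, swapping in the Dirichlet correlation bound. Write
\[
\E[Z] = Q + \tfrac12 \sum_{j \ne j'} \E[X_j X_{j'}] p_j p_{j'} .
\]
By \Cref{xjjprop}, pairs in different clusters contribute at most $x_j x_{j'} p_j p_{j'}$. Pairs in the same cluster contribute at most $x_j x_{j'}\Psi(x_j,x_{j'};\rho_j,\rho_{j'}) p_j p_{j'}$. Hence
\[
\E[Z] \le Q + \tfrac12 L^2 - \tfrac12 \sum_{\mathcal C}\ \sum_{j\ne j' \in \mathcal C^*} \bigl(1-\Psi(x_j,x_{j'};\rho_j,\rho_{j'})\bigr) x_j x_{j'} p_j p_{j'} .
\]
The whole gain comes from the saving term on the right.

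\textbf{Step 1: lower-bound the saving per cluster.} Within a cluster, all jobs lie in one class $\mathcal P_k$, so $p_j = H_j P_k$ with $H_j \in [1,\pi]$. Untruncated jobs have $\rho_j = x_j/\tau$ or $\rho_j = x_j/x(\mathcal C)$, and $x(\mathcal C)\le \tau$; in either case $\rho_j \ge x_j/\tau$. By $\Psi$ monotonicity we may therefore replace every $\rho$ by its smallest value. The truncated job has $\rho = 1 - x(\mathcal U)/\tau$ and is handled as a separate term.

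We then bound $1-\Psi$ from below using \Cref{simp-corr} or the sharper bounds of \Cref{sec: 03_Beta fcn}. With $\rho_j = x_j/\tau$ the binomial bound depends only on $(1-x_j)/\tau$. Its worst case is the infinitesimal regime, giving a per-pair factor of roughly $1/\binom{2/\tau}{1/\tau}$; this improves on the factor $2/(1+e^{1/\tau})$ used in \cite{harris2024dependent}. For non-full (leftover) clusters, $\rho_j = x_j/x(\mathcal C)$ only improves things. Only full clusters, whose mass is at least $\theta$ inside $\mathcal J^*$ except for the leftover one, are credited.

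\textbf{Step 2: average over $P_{\text{offset}}$.} Copying \cite{harris2024dependent}, we express the expected saving through parameters describing how the mass $x_j p_j$ is distributed over $H$ values and over full versus leftover clusters. The random shift of the classes is what prevents adversarial concentration of mass at class boundaries. In parallel we reuse, unchanged, the lower bound on $\mathrm{LB}$ from \cite{harris2024dependent}: the SDP constraints give $\mathrm{LB}\ge \max\{Q,\tfrac12(Q+L^2)\}$ plus refinements in terms of the same parameters, since only the upper bound on $\E[Z]$ depends on the rounding scheme.

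\textbf{Step 3: numerical optimization.} We maximize the ratio of the resulting upper bound on $\E[Z]$ to the lower bound on $\mathrm{LB}$ over the finitely many parameters, with $\pi=4.5$, $\theta=0.56$, $\tau=0.608$ fixed, and obtain $1.387$.

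The main obstacle is Step 1 for non-infinitesimal $x_j$ and for the truncated job. There $\Psi$ has no closed form, and we must verify that the infinitesimal case really is worst or else carry a grid of bounds. I would first apply \Cref{simp-corr} to all pairs and check whether the slack suffices, falling back on the precise incomplete-Beta bounds only where needed. Because the final ratio comes from off-the-shelf numerical optimization rather than rigorous interval arithmetic, the claim is marked as a dagger statement.
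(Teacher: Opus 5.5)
Your architecture (a per-cluster bonus from $1-\Psi$, averaging over $P_{\text{offset}}$, the $\mathrm{LB}$ machinery of \cite{harris2024dependent}, then a numerical maximization) is the paper's. However, your opening inequality discards a term the argument needs.

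Since $\frac12\sum_{j\neq j'}x_jx_{j'}p_jp_{j'}=\frac12L^2-\frac12\sum_j x_j^2p_j^2$, the right starting point is $\E[Z\mid P_{\text{offset}}]\le Q+\frac12L^2-\sum_{k,\ell}P_k^2B_{k,\ell}$. Here the bonus $B_{k,\ell}$ contains the diagonal part $\frac12\sum_{j\in\calC^*_{k,\ell}}x_j^2H_j^2$ in addition to the pairwise savings. So it is not true that the whole gain comes from the pair terms.

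In \Cref{wclem}, this diagonal part is exactly what cancels the $-a x_j^2H_j^2$ left over from excluding $j'=j$, giving $B_{k,\ell}\ge as^2/2$. With pairs only you get $\frac a2\bigl(s^2-\sum_j x_j^2H_j^2\bigr)$. That is $0$ for a singleton cluster, such as a job with $x_j\ge\theta$ that closes its own cluster, or a lone truncated job. Hence the per-cluster estimate $B_{k,\ell}\ge c_1\sum_j x_j(H_j-\kappa)$ of \Cref{prop:non-leftover} fails.

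Concretely, take an integral solution with a single job, so $x_j=1$ and $\mathrm{LB}=p_j^2$. Your upper bound is then $Q+\frac12L^2=\frac32p_j^2$. Your chain therefore cannot certify anything below $1.5$, whatever lower bound on $\mathrm{LB}$ you use. Keeping the diagonal term, as the paper does, repairs this.

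Even with that fixed, several simplifications in your Step 1 would not reproduce $1.387$ at $\pi=4.5$, $\theta=0.56$, $\tau=0.608$:
\begin{itemize}
\item \textbf{Replacing every untruncated $\rho_j$ by $x_j/\tau$.} This throws away the gain from $\rho_j=x_j/x(\calC)$ in non-truncated full clusters. The paper takes $\lambda=\min\{\tau,x(\calC)\}$ and uses that $r\bigl(1-\binom{2/r}{1/r}^{-1}\bigr)$ is increasing, which gives $c_1=0.684$. With $\lambda=\tau$ you only get about $2\kappa\theta\bigl(1-\binom{2/\tau}{1/\tau}^{-1}\bigr)\approx0.653$.
\item \textbf{Relying on \Cref{simp-corr}.} The $0^{\text{th}}$-order bound is weakest, not strongest, at non-infinitesimal $x_j$, because its arguments $(1-x_j)/\tau$ shrink. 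So it cannot deliver $\Psi\le\binom{2/\lambda}{1/\lambda}^{-1}$. The paper obtains that bound, and also handles the truncated-job term of \Cref{wclem}, via non-rigorous numerics on the third-order expansion of \Cref{mainestlemma}.
\item \textbf{Leftover clusters and the $\mathrm{LB}$ side.} Leftover clusters are also credited, with $B_{k,\ell}\ge c_2(\sum_j x_jH_j)^2$. The $\mathrm{LB}$-side inequalities of \Cref{lem: sch tech} are not reused unchanged. They are phrased through the deficit $D=\sum_kP_k^2R_kf(R_k,T_k)$, which is built from $c_1$, $c_2$, $\kappa$. So $\gamma$, $c_5$, $c_6$ must be recomputed before the final maximization.
\end{itemize}
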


The technical details are deferred to \Cref{ap: sch}.

\section{Analysis of the function $\Psi$}\label{sec: 03_Beta fcn}

We consider a number of properties of the $\Psi$ function, via analysis of the Incomplete Beta function. This will allow us to compute various limits as well as  arbitrarily-accurate approximations over parameter ranges. 

Throughout this section, we define the key functions
\[
        I_\rho(t)=I(t;\rho,1-\rho), \qquad          \phi_\rho(t)
        =
        \frac{\rho B(t;\rho,1-\rho)}{t^\rho}
\]
related by the Beta function identity
\begin{equation}
\label{betaident}
    I_\rho(t)
        =
        \frac{t^\rho\phi_\rho(t)}{ \Gamma(1+\rho)\Gamma(1-\rho)}.
        \end{equation}
        
We begin by recording a few facts about the power series expansion of the incomplete Beta function and related functions.

\begin{lemma}[Series expansion of $\phi$]
\label{lem:seriesphi}
Let $\rho \in (0,1)$. The function $\phi_\rho$ extends to an analytic, zero-free function on
the open unit disk, with $\operatorname{Re}\phi_\rho(t)>0$ for $|t| < 1$.  There is an analytic branch of $\log\phi_\rho$ on the open unit disk,
normalized by $\log\phi_\rho(0)=0$, with
\[
        \log\phi_\rho(t)
        =
        \sum_{k\geq1}c_k(\rho)t^k, \qquad \text{with $c_k(\rho) > 0$ for $k \geq 1$}
\]

Moreover, for any $q \geq 0$, we have the series expansion
$$
\phi_{\rho}(t)^q = \sum_{k \geq 0} b_k(\rho, q) t^k, \qquad \text{with $b_k(\rho, q) \geq 0$ for all $k$},
$$
and $b_k$ values are given explicitly by the recurrence
$$
b_0(\rho,q)=1, \qquad   b_k(\rho,q)
=
\frac1k
\sum_{j=1}^k
\bigl((q+1)j-k\bigr)
\frac{\rho (\rho+j-1)!}{(\rho+j)(\rho-1)! j!} b_{k-j}(\rho,q) \quad \text{for $k \geq 1$}
$$
\end{lemma}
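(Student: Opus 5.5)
The plan is to write $\phi_\rho$ as an explicit integral, read off its Taylor coefficients, and then handle $\operatorname{Re}\phi_\rho>0$, the logarithm, and the powers $\phi_\rho^q$ in that order.

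\textbf{Step 1: integral form and coefficients.} Substituting $u = ts$ in $\Be(t;\rho,1-\rho)=\int_0^t u^{\rho-1}(1-u)^{-\rho}\,\mathrm{d}u$ gives
\[
\phi_\rho(t) = \rho\int_0^1 s^{\rho-1}(1-ts)^{-\rho}\,\mathrm{d}s .
\]
This formula defines an analytic function for $|t|<1$. Expanding $(1-ts)^{-\rho}=\sum_k \frac{(\rho)_k}{k!}(ts)^k$ and integrating term by term gives $\phi_\rho(t)=\sum_{k\ge0}a_k t^k$, where
\[
a_k=\frac{(\rho)_k}{k!}\cdot\frac{\rho}{\rho+k}=\frac{\rho\,(\rho+k-1)!}{(\rho+k)(\rho-1)!\,k!}.
\]
These are exactly the coefficients appearing in the stated recurrence, and $a_0=1$.

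\textbf{Step 2: positive real part.} For $|t|<1$ and $s\in(0,1]$ we have $\operatorname{Re}(1-ts)>0$, so $|\arg(1-ts)|<\pi/2$ for the principal branch. Hence $|\arg (1-ts)^{-\rho}|<\rho\pi/2<\pi/2$, and so $(1-ts)^{-\rho}$ has positive real part. Integrating against the positive weight $\rho s^{\rho-1}$ gives $\operatorname{Re}\phi_\rho>0$. Consequently $\phi_\rho$ is zero-free on the simply connected disk, and a branch of $\log\phi_\rho$ with $\log\phi_\rho(0)=0$ exists.

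\textbf{Step 3: positivity of $c_k$ (the main obstacle).} I would write $a_k$ as a Hausdorff moment sequence. We have $(\rho)_k/k!=\E[X^k]$ for $X\sim\mathrm{Beta}(\rho,1-\rho)$, and $\rho/(\rho+k)=\E[S^k]$ for $S\sim\mathrm{Beta}(\rho,1)$. Taking $X,S$ independent and $Y=XS\in[0,1]$ gives $a_k=\E[Y^k]$, so
\[
\phi_\rho(t)=\E\bigl[(1-Yt)^{-1}\bigr].
\]
Thus $\phi_\rho$ is a Stieltjes-type (Pick) function. By the exponential representation of Aronszajn--Donoghue, there is a measurable $h:[0,1]\to[0,1]$, not identically zero, with
\[
\log\phi_\rho(t)=\int_0^1 -\log(1-ty)\,\frac{h(y)}{y}\,\mathrm{d}y .
\]
This yields $c_k=\frac1k\int_0^1 y^{k-1}h(y)\,\mathrm{d}y$. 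We need $h$ to be positive on a set accumulating at $1$, which holds because $Y$ has support reaching $1$; this gives $c_k>0$.

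If invoking that representation is undesirable, the fallback is a direct argument. Using the Kaluza-type fact that a log-convex $a_k$ forces $t\phi'/\phi$ to be nonnegative termwise fails in general, so instead I would verify $c_k>0$ via Horn's theorem: moment sequences of measures on $[0,1]$ are infinitely divisible in the appropriate sense. Either way, this is the step needing an external result.

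\textbf{Step 4: the powers $\phi_\rho^q$.} Since $\phi_\rho^q=\exp\bigl(q\sum_k c_k t^k\bigr)$ and the exponent has nonnegative coefficients, expanding the exponential gives $b_k(\rho,q)\ge0$. For the recurrence, set $P=\phi_\rho^q$, so that $\phi_\rho P'=q\phi_\rho' P$. Comparing the coefficients of $t^{k-1}$ gives
\[
\sum_{j} a_j (k-j) b_{k-j} = q\sum_j j a_j b_{k-j}.
\]
Isolating the $j=0$ term, which is $k b_k$ since $a_0=1$, gives
\[
k b_k=\sum_{j=1}^k\bigl((q+1)j-k\bigr)a_j b_{k-j},
\]
which is the stated formula, with $b_0=1$.
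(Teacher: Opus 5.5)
Your Steps 1, 2 and 4 are correct. Steps 1--2 supply the integral argument behind the expansion and $\operatorname{Re}\phi_\rho>0$, which the paper only quotes as well known, and Step 4 matches the paper's derivation of the recurrence. The gap is in Step 3, which is the heart of the lemma.

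The representation $\log\phi_\rho(t)=\int_0^1 -\log(1-ty)\,\frac{h(y)}{y}\,\mathrm{d}y$ with $0\le h\le 1$ is not the Aronszajn--Donoghue formula, and it is false here. It would give $\frac{\mathrm{d}}{\mathrm{d}t}\log\phi_\rho(t)=\int_0^1\frac{h(y)}{1-ty}\,\mathrm{d}y\le \frac1t\log\frac{1}{1-t}$. But from your own integral form, $\phi_\rho'(t)=\rho^2\int_0^1 s^{\rho}(1-ts)^{-\rho-1}\,\mathrm{d}s$ grows like $\rho(1-t)^{-\rho}$ as $t\to1^-$, while $\phi_\rho(1)=\Gamma(1+\rho)\Gamma(1-\rho)$ is finite. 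A simpler failure: $1/(1-t)=\E[(1-Yt)^{-1}]$ with $Y\equiv 1$ has $c_k=1/k$, whereas your form forces $c_k\le 1/k^2$.

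The correct exponential representation uses that $\phi_\rho$ is a Pick function, analytic off $[1,\infty)$ and positive on $(-\infty,1)$. It reads $\log\phi_\rho(t)=\int_1^\infty\bigl(\frac{1}{u-t}-\frac{1}{u}\bigr)h(u)\,\mathrm{d}u$ with $0\le h\le 1$, and gives $c_k=\int_0^1 y^{k-1}h(1/y)\,\mathrm{d}y$. This is positive for every $k\ge 1$ simply because $h$ is not a.e.\ zero, since $\phi_\rho$ is nonconstant. So your route can be repaired, but as written the crucial step rests on a wrong formula, and the remark about support accumulating at $1$ plays no role.

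Your fallback is also misdirected. The ``Kaluza-type fact'' you dismiss as failing is true, and it is exactly the paper's argument (the log-absolute-monotonicity criterion in the appendix):
\begin{itemize}
\item From $\phi'=\phi\,(\log\phi)'$ one gets $ka_k=\sum_{j=1}^k jc_ja_{k-j}$.
\item If $(a_k)$ is positive and log-convex, then $a_{k-j}\le\frac{a_k}{a_{k-1}}a_{k-1-j}$ for $j<k$.
\item Strong induction, using $c_1,\dots,c_{k-1}\ge 0$, then gives $\sum_{j=1}^{k-1}jc_ja_{k-j}\le (k-1)a_k$, hence $kc_k\ge a_k>0$.
\end{itemize}
Log-convexity holds here in either of two ways. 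Directly, the ratio $a_{k+1}/a_k=(\rho+k)^2/\bigl((k+1)(\rho+k+1)\bigr)$ increases in $k$. More simply, your own representation $a_k=\E[Y^k]$ gives it by Cauchy--Schwarz: $\E[Y^{k+1}]^2\le\E[Y^k]\,\E[Y^{k+2}]$. With this substituted for Step 3, your proof is complete and fully elementary. The vague appeal to ``Horn's theorem'' is then unnecessary, and as stated it is not a usable proof step.
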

\begin{proof}
It is a well-known fact about the Beta function that $\phi$ has an analytic expansion
\begin{equation}
\label{phi-expansion}
\phi_\rho(t)
=
\sum_{k\geq0} \frac{\rho (\rho+k - 1)!}{(\rho+k) (\rho-1)! k!}t^k.
\end{equation}
which converges on the open unit disk with $\operatorname{Re}\phi_\rho(t)>0.$ Thus $\phi_\rho$ has no zeros in the open unit disk, and there is an
analytic branch of $\log\phi_\rho$ on this disk with
$\log\phi_\rho(0)=0$.

In \Cref{sec: ap1}, specifically \Cref{cor: ck>0}, we show a criterion for the logarithm of a given function to have positive derivatives. Here, we apply it specifically to the function $\phi_{\rho}$.  Here, from our expansion (\ref{phi-expansion}), we get
$$
a_k(\rho) = \phi_{\rho}^{(k)}(0)/k! = \frac{\rho (\rho+k - 1)!}{(\rho-1)! k! \rho+k}.
$$

Thus $a_0 = 1$, and $a_k(\rho)>0$ for every \(k\geq 0\).  Moreover,
\[
\frac{a_{k+1}(\rho)}{a_k(\rho)}
   =\frac{(\rho+k)^2}{(k+1)(\rho+k+1)}.
\]
which is increasing in $k$. Hence the sequence
\(\{a_k(\rho)\}_{k\geq0}\) is log-convex. Hence, the normalized derivatives $c_k$ of $\log \phi$ at zero are positive for $k \geq 1$. So $\phi_{\rho}$ is log-absolutely monotonic. 

We use \Cref{cor:monpower} to obtain the explicit recurrence formula for the derivatives of $\phi_{\rho}(t)^q$ in terms of the normalized derivatives $a_k(\rho)$ of $\phi_{\rho}$. These are nonnegative since $\phi_{\rho}$ is log-absolutely monotonic.  From analyticity, $\phi_{\rho}(t)^q$ is equal to its expansion $\sum_j b_j(\rho, q) t^j$.
\end{proof}
\begin{lemma}
\label{mainestlemma0}
Suppose that \(x_1,x_2\in(0,1]\) and
\(\rho_1,\rho_2\in(0,1)\), with \(\rho_1+\rho_2\leq1\).  For
\(i=1,2\), define
\[
\lambda_i=\rho_i/x_i,
\qquad
q_i=1/x_i - 1,
\qquad
\theta_i=
\bigl(\Gamma(1+\rho_i)\Gamma(1-\rho_i)\bigr)^{-q_i}.
\]
For $j \geq 0$ define
$$
\kappa_{i,j} = \theta_i \frac{\lambda_i}{\lambda_i + j} \binom{\lambda_i + j}{\rho_i} b_j(\rho_i, q_i)
$$
where $b_j(\rho, q)$ are the coefficients in \Cref{lem:seriesphi}.

For \(j_1,j_2\geq0\), define
\begin{align*}
\alpha_{j_1,j_2}
&=
\binom{\rho_1q_1+\rho_2q_2+j_1+j_2}
      {\rho_1q_1+j_1}^{-1} =
\binom{\lambda_1(1-x_1)+\lambda_2(1-x_2)+j_1+j_2}
      {\lambda_1(1-x_1)+j_1}^{-1}.
\end{align*}
Then the $\kappa_{i,j}$ values are nonnegative and satisfy $\sum_{j} \kappa_{i,j} = 1$ and 
\[
\Psi(x_1,x_2;\rho_1,\rho_2)
=
\sum_{j_1,j_2\geq0}
\alpha_{j_1,j_2}\kappa_{1,j_1}\kappa_{2,j_2}.
\]
\end{lemma}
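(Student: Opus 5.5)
Here is the plan. We write $\Psi$ as an explicit integral against the bivariate Dirichlet density, expand each factor $A_i^{q_i}$ as a power series via \Cref{lem:seriesphi}, and integrate term by term using the Dirichlet moment formula (\Cref{fact:prop-of-dirichlet}(6)).

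\textbf{Step 1: expanding $A_i^{q_i}$.} By \Cref{dirichlet-subsample}, $(A_1,A_2)=(I_{\rho_1}(T_1),I_{\rho_2}(T_2))$ with $(T_1,T_2,T_0)\sim\operatorname{Dir}(\rho_1,\rho_2,1-\rho_1-\rho_2)$. By the identity (\ref{betaident}),
\[
A_i^{q_i}=\theta_i\, T_i^{\rho_i q_i}\,\phi_{\rho_i}(T_i)^{q_i}=\theta_i \sum_{j\geq 0} b_j(\rho_i,q_i)\, T_i^{\rho_i q_i+j},
\]
where the series converges for $T_i\in[0,1)$ and has nonnegative coefficients by \Cref{lem:seriesphi}. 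Since $\Psi=\E[A_1^{q_1}A_2^{q_2}]/(x_1x_2)$, Tonelli's theorem lets us interchange the sums and the expectation:
\[
\Psi=\frac{\theta_1\theta_2}{x_1x_2}\sum_{j_1,j_2} b_{j_1}(\rho_1,q_1)\,b_{j_2}(\rho_2,q_2)\,\E\bigl[T_1^{\rho_1q_1+j_1}T_2^{\rho_2q_2+j_2}\bigr].
\]
The boundary point $T_i=1$ has probability zero, so it causes no trouble.

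\textbf{Step 2: evaluating the moment.} The total Dirichlet parameter is $1$, so \Cref{fact:prop-of-dirichlet}(6) gives
\[
\E\bigl[T_1^{\beta_1}T_2^{\beta_2}\bigr]=\frac{\Gamma(\rho_1+\beta_1)\,\Gamma(\rho_2+\beta_2)}{\Gamma(1+\beta_1+\beta_2)\,\Gamma(\rho_1)\,\Gamma(\rho_2)},
\qquad \beta_i=\rho_iq_i+j_i .
\]
Note that $\rho_i+\rho_iq_i=\lambda_i$, so $\rho_i+\beta_i=\lambda_i+j_i$. Multiply and divide by $\Gamma(1+\beta_1)\Gamma(1+\beta_2)$. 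This factors the moment as
\[
\binom{\beta_1+\beta_2}{\beta_1}^{-1}\cdot\prod_i \frac{\Gamma(\lambda_i+j_i)}{\Gamma(\rho_i)\,\Gamma(1+\beta_i)} .
\]
The first factor is exactly $\alpha_{j_1,j_2}$. For the product, we check
\[
\frac{1}{x_i}\cdot\frac{\Gamma(\lambda_i+j_i)}{\Gamma(\rho_i)\,\Gamma(1+\lambda_i+j_i-\rho_i)}=\frac{\lambda_i}{\lambda_i+j_i}\binom{\lambda_i+j_i}{\rho_i}.
\]
This follows from $\binom{\lambda+j}{\rho}=\frac{(\lambda+j)\,\Gamma(\lambda+j)}{\rho\,\Gamma(\rho)\,\Gamma(\lambda+j-\rho+1)}$ and $\lambda_i/\rho_i=1/x_i$. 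Collecting factors gives the claimed expansion of $\Psi$ with weights $\kappa_{i,j}$.

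\textbf{Step 3: the $\kappa$ weights.} Nonnegativity is immediate from $b_j\geq0$, $\theta_i>0$, and positivity of the Gamma values. For $\sum_j\kappa_{i,j}=1$, we avoid direct computation and run the same calculation with only one factor: $\E[A_i^{q_i}]=1/(q_i+1)=x_i$ because $A_i$ is uniform. Step 1 applied to $T_i\sim\operatorname{Beta}(\rho_i,1-\rho_i)$ alone, with the single-variable moment, gives $x_i\sum_j\kappa_{i,j}$ with exactly the same $\kappa$ expression. Equivalently, this is the case $j_2$-factor trivial, i.e.\ setting $q_2=0,\rho_2\to0$, where $\alpha\equiv1$. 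Hence $\sum_j\kappa_{i,j}=1$.

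The main obstacle is the bookkeeping in Step 2: matching the Gamma ratios exactly to the binomial form, including the $\lambda_i/(\lambda_i+j)$ factor and the absorption of $1/x_i$, together with justifying the interchange at the boundary. Both are routine once the nonnegativity from \Cref{lem:seriesphi} is in hand.
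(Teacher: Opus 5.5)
Your proposal is correct and follows the paper's route: expand $I_{\rho_i}(t)^{q_i}$ via the series for $\phi_{\rho_i}^{q_i}$, interchange sum and expectation by Tonelli, evaluate with the Dirichlet moment formula, and get $\sum_j \kappa_{i,j}=1$ from $\E[A_i^{q_i}]=1/(q_i+1)=x_i$. The only differences are cosmetic: you factor the joint moment directly, whereas the paper writes $\alpha_{j_1,j_2}$ as the ratio of the joint moment to the product of the marginal Beta moments. Your parenthetical ``$q_2=0$, $\rho_2\to 0$, $\alpha\equiv 1$'' is unnecessary and slightly imprecise, since it is really that only the $j_2=0$ term survives because $b_j(\rho_2,0)=0$ for $j\geq 1$.
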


\begin{proof}
Fix \(i\), and temporarily suppress the subscript \(i\).  From \Cref{lem:seriesphi} we have  $\phi_{\rho}(t)^q = \sum_{j \geq 0} b_j t^j$, 
where we write $b_j := b_j(\rho, q)$ for brevity. With our identity (\ref{betaident}), this gives
\begin{equation}
\label{yah0a}
I_{\rho}(t)^q=\theta \sum_{j\geq0}b_j t^{\rho q + j}.
\end{equation}

Now note that, for $Z \sim \text{Beta}(\rho, 1- \rho)$, we have 
\begin{align*}
\mathbb{E}\bigl[Z^{\rho q+j}\bigr] =  \int_0^1t^{\rho q+j}\,\dd I_\rho(t) = 
\frac{\lambda x}{\lambda+j}
\binom{\lambda+j}{\rho}
\end{align*}
and hence
\begin{equation}
\kappa_j
=
\frac{\theta}{x}b_j
 \E\bigl[Z^{\rho q+j}\bigr] =  \frac{\theta_i}{x_i}b_{j} \int_0^1t^{\rho q+j}\,\dd I_\rho(t).
\label{yah1}
\end{equation}
This is nonnegative since $b_j \geq 0$.

Since \(I_\rho\) is the continuous distribution function of \(Z\),
the random variable \(I_\rho(Z)\) is uniform on \([0,1]\). From  (\ref{yah0a}),  and Tonelli's theorem, we have 
\[
\frac{1}{q+1} = \E_{U \sim \text{Unif}[0,1]}[U^q] = \E[I_\rho(Z)^q\bigr]
=
\theta\sum_{j\geq0}b_j
\mathbb{E}\bigl[Z^{\rho q+j}\bigr].
\]
Therefore, by (\ref{yah1}), 
\[
\sum_{j\geq0}\kappa_j = \sum_j \frac{\theta}{x} b_j \E[ Z^{\rho q + j}] 
=
\frac1{x(q+1)}
=
1.
\]
This proves the asserted properties of \(\kappa_{i,j}\) for each
\(i=1,2\).

It remains to derive the formula for \(\Psi\).  Let $(X_1,X_2)
\sim
\operatorname{Dir}
\bigl(\rho_1,\rho_2,1-\rho_1-\rho_2\bigr)$. 
The marginal distribution of \(X_i\) is
\(\operatorname{Beta}(\rho_i,1-\rho_i)\), so (\ref{yah1}) above gives
\[
\kappa_{i,j}
=
\frac{\theta_i}{x_i}b_{i,j}
\mathbb{E}\bigl[X_i^{\rho_i q_i + j}\bigr].
\]

The Dirichlet moment formula \Cref{fact:prop-of-dirichlet}(6) gives
$$
\frac{\mathbb{E}[X_1^{\rho_1 q_1 + j_1}X_2^{\rho_2 q_2 + j_2}]}
     {\mathbb{E}[X_1^{\rho_1 q_1 + j_1}]
      \mathbb{E}[X_2^{\rho_2 q_2 + j_2}]} =
\binom{\rho_1 q_1+\rho_2 q_2 + j_1 + j_2}{\rho_1 q_1 + j_1}^{-1} =\alpha_{j_1,j_2}.
$$

Using the definition of \(\Psi\) and the expansion in (\ref{yah0a}),  we obtain
\begin{align*}
\Psi(x_1,x_2;\rho_1,\rho_2)
&=
\frac{1}{x_1x_2}
\mathbb{E}\left[
 I_{\rho_1}(X_1)^{q_1}
 I_{\rho_2}(X_2)^{q_2}
\right] \\
&=
\frac{\theta_1\theta_2}{x_1x_2}
\sum_{j_1,j_2\geq0}
b_{1,j_1}b_{2,j_2}
\mathbb{E}\left[
 X_1^{\rho_1 q_1 + j_1}X_2^{\rho_2 q_2 + j_2}
\right] \\
&=
\sum_{j_1,j_2\geq0}
\alpha_{j_1,j_2}
\left(
 \frac{\theta_1}{x_1}b_{1,j_1}
 \mathbb{E}[X_1^{\rho_1 q_1 + j_1}]
\right)
\left(
 \frac{\theta_2}{x_2}b_{2,j_2}
 \mathbb{E}[X_2^{\rho_2 q_2 + j_2}]
\right) \\
&=
\sum_{j_1,j_2\geq0}
\alpha_{j_1,j_2}\kappa_{1,j_1}\kappa_{2,j_2}.
\qedhere
\end{align*}
\end{proof}

\begin{lemma}
\label{mainestlemma}
With the notation of \Cref{mainestlemma0}, for any integers $k_1, k_2 \geq 0$ we have 
\begin{align*}
&\Psi(x_1, x_2; \rho_1, \rho_2) \leq \sum_{\substack{j_1<k_1 \\ j_2<k_2}} \alpha_{j_1,j_2}\kappa_{1,j_1} \kappa_{2,j_2} + \sigma_{2,k_2} \sum_{j_1<k_1} \alpha_{j_1,k_2}\kappa_{1,j_1}
    + \sigma_{1,k_1} \sum_{j_2<k_2} \alpha_{k_1,j_2}\kappa_{2,j_2}  +  \sigma_{1,k_1} \sigma_{2,k_2} \alpha_{k_1,k_2}
\end{align*}
where $\sigma_{i,k} = 1 - \sum_{j < k} \kappa_{i,j}$.
\end{lemma}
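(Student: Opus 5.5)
The plan is to start from the exact double-series representation of \Cref{mainestlemma0}, $\Psi=\sum_{j_1,j_2\geq 0}\alpha_{j_1,j_2}\kappa_{1,j_1}\kappa_{2,j_2}$. All terms are nonnegative and the $\kappa_{i,\cdot}$ form probability distributions on $\{0,1,2,\dots\}$. We split each index range into a head $j_i<k_i$ and a tail $j_i\geq k_i$. This gives four blocks: head--head, head--tail, tail--head and tail--tail. The head--head block already appears verbatim in the claimed bound. For the remaining three blocks, it suffices to show that $\alpha_{j_1,j_2}$ is nonincreasing in each of $j_1$ and $j_2$ separately. Then $\alpha_{j_1,j_2}\leq\alpha_{j_1,k_2}$ whenever $j_2\geq k_2$, and summing over the tail yields the factor $\sum_{j_2\geq k_2}\kappa_{2,j_2}=\sigma_{2,k_2}$, since $\sum_j\kappa_{2,j}=1$. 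The tail--head block is handled symmetrically. In the tail--tail block we use $\alpha_{j_1,j_2}\leq\alpha_{k_1,k_2}$ and get the factor $\sigma_{1,k_1}\sigma_{2,k_2}$.

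The key step is therefore the monotonicity of $\alpha$. Write $a=\rho_1q_1+j_1$ and $b=\rho_2q_2+j_2$. Then
\[
\alpha_{j_1,j_2}=\binom{a+b}{a}^{-1}=\frac{\Gamma(a+1)\Gamma(b+1)}{\Gamma(a+b+1)}=(a+b+1)\,\Be(a+1,b+1),
\]
and we need this to be nonincreasing in $a$ for fixed $b\geq 0$ (the case of $b$ is symmetric).

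The cleanest route is the unit-step ratio, since only integer increments of $j_1$ matter:
\[
\frac{\alpha_{j_1+1,j_2}}{\alpha_{j_1,j_2}}=\frac{\Gamma(a+2)\,\Gamma(a+b+1)}{\Gamma(a+1)\,\Gamma(a+b+2)}=\frac{a+1}{a+b+1}\leq 1,
\]
because $b\geq 0$. Equivalently, $\binom{a+b}{a}$ is nondecreasing in $a$ under unit steps. This step needs $a,b\geq 0$, which holds because $\rho_i\geq 0$ and $q_i=1/x_i-1\geq 0$. Iterating the inequality gives $\alpha_{j_1,j_2}\leq\alpha_{k_1,j_2}$ for $j_1\geq k_1$, and likewise in $j_2$.

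The only remaining care is bookkeeping. Nonnegativity of $\kappa_{i,j}$ and of $\alpha$ is supplied by \Cref{mainestlemma0}, and Tonelli's theorem justifies splitting and rearranging the nonnegative double series. Together these yield the stated inequality term by term. I expect no real obstacle beyond confirming the monotonicity of $\alpha$, and the unit-step ratio makes that immediate.
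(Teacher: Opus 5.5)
Your proposal is correct and follows essentially the same route as the paper. The paper likewise proves that $\alpha_{j_1,j_2}$ is nonincreasing in each index via the unit-step ratio $\frac{\rho_1 q_1 + j_1 + 1}{\rho_1 q_1 + \rho_2 q_2 + j_1 + j_2 + 1} \le 1$. It then splits the double series into head and tail blocks, replaces $\alpha$ by its value at the boundary index, and uses $\sum_j \kappa_{i,j} = 1$ to turn each tail sum of $\kappa$'s into $\sigma_{i,k_i}$.
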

\begin{proof}
Note that $\alpha_{j_1,j_2}$ is a \emph{decreasing} function of $j_1$ as $$
\alpha_{j_1+1,j_2} / \alpha_{j_1,j_2} = \frac{\rho_1 q_1 + j_1 + 1}{\rho_1 q_1 + \rho_2 q_2 + j_1 + j_2 + 1} \leq 1;$$
by a completely symmetric argument it is also a decreasing function of $j_2$.
 
     Using \Cref{mainestlemma0}, nonnegativity of $\kappa$, and monotonicity of $\alpha_{j_1,j_2}$, we thus get
    \begin{align*}
    &\Psi(x_1,x_2;\rho_1,\rho_2) = \sum_{j_1\ge 0, j_2\ge 0}\alpha_{j_1,j_2}\kappa_{1,j_1} \kappa_{2,j_2} \\
    & \quad \leq  \sum_{\substack{j_1<k_1 \\ j_2<k_2}} \alpha_{j_1,j_2}\kappa_{1,j_1} \kappa_{2,j_2} +\sum_{\substack{j_1<k_1 \\ j_2\geq k_2}}\alpha_{j_1,k_2}\kappa_{1,j_1} \kappa_{2,j_2}    +\sum_{\substack{j_1\geq k_1 \\ j_2<k_2}}\alpha_{k_1,j_2} \kappa_{1,j_1} \kappa_{2,j_2}     +\sum_{\substack{j_1\geq k_1 \\ j_2\geq k_2}} \alpha_{k_1,k_2}\kappa_{1,j_1} \kappa_{2,j_2} \\
& \quad =\sum_{\substack{j_1<k_1 \\ j_2<k_2}} \alpha_{j_1,j_2}\kappa_{1,j_1} \kappa_{2,j_2} + \sum_{j_1<k_1} \alpha_{j_1,k_2}\kappa_{1,j_1} \Bigl(1-\sum_{j_2<k_2}
    \kappa_{2,j_2}\Bigr)
    + \sum_{j_2<k_2} \alpha_{k_1,j_2}\kappa_{2,j_2}\Bigl(
  1  -\sum_{j_1<k_1}\kappa_{1,j_1}
    \Bigr)\\
    & \qquad \qquad + \alpha_{k_1,k_2}\Bigl(
    1 -  \sum_{j_2<k_2}\kappa_{2,j_2} -\sum_{j_1<k_1}\kappa_{1,j_1} + 
    \sum_{\substack{j_1< k_1 \\ j_2< k_2}} \kappa_{1,j_1} \kappa_{2,j_2}
    \Bigr)
    \end{align*}
    where the final line uses $\sum_j \kappa_{i,j} = 1$. With some rearrangement of terms, this gives the claimed bound.
    \end{proof}

We refer to \Cref{mainestlemma} as the $(k_1, k_2)$-order approximation of $\Psi$. If $k_1 = k_2 = k$, we refer to it simply as the $k^{\text{th}}$-order approximation.  \Cref{simp-corr} is simply the $0^{\text{th}}$-order approximation. 

\begin{corollary}
\label{poisson-prop}
Let $\lambda_1, \lambda_2 > 0$ be fixed positive reals.  For $x_1, x_2 > 0$, there holds
$$
\lim_{x_1, x_2 \rightarrow 0} \Psi( x_1, x_2; \lambda_1 x_1, \lambda_2 x_2) = \binom{\lambda_1 + \lambda_2}{\lambda_1}^{-1}
$$
\end{corollary}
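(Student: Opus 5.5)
Set $\rho_i = \lambda_i x_i$ and apply the series representation of \Cref{mainestlemma0}, $\Psi = \sum_{j_1,j_2} \alpha_{j_1,j_2}\kappa_{1,j_1}\kappa_{2,j_2}$. The claim should follow from two facts: the $\kappa_{i,\cdot}$ distributions concentrate at $j=0$ as $x_i\to0$, and $\alpha_{0,0}$ tends to the stated binomial.

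\textbf{Upper bound.} This is immediate from the $0^{\text{th}}$-order approximation (\Cref{simp-corr}): $\Psi \le \alpha_{0,0}$. Here
$$\alpha_{0,0} = \binom{\lambda_1(1-x_1)+\lambda_2(1-x_2)}{\lambda_1(1-x_1)}^{-1},$$
which converges to $\binom{\lambda_1+\lambda_2}{\lambda_1}^{-1}$ because the generalized binomial is continuous in both arguments, as a ratio of Gamma functions at positive arguments. Hence $\limsup \Psi \le \binom{\lambda_1+\lambda_2}{\lambda_1}^{-1}$.

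\textbf{Lower bound.} All terms are nonnegative, so $\Psi \ge \alpha_{0,0}\kappa_{1,0}\kappa_{2,0}$. It then suffices to show $\kappa_{i,0}\to 1$ as $x_i\to 0$ with $\rho_i=\lambda_i x_i$. Since $b_0=1$,
$$\kappa_{i,0} = \theta_i \binom{\lambda_i}{\rho_i}, \qquad \theta_i = \bigl(\Gamma(1+\rho_i)\Gamma(1-\rho_i)\bigr)^{-q_i}.$$

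For the binomial factor, $\rho_i\to 0$, so $\binom{\lambda_i}{\rho_i} = \Gamma(\lambda_i+1)/(\Gamma(\rho_i+1)\Gamma(\lambda_i-\rho_i+1)) \to 1$.

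For $\theta_i$, use $\Gamma(1+\rho)\Gamma(1-\rho) = \pi\rho/\sin(\pi\rho) = 1 + O(\rho^2)$. Then $\log\theta_i = -q_i\, O(\rho_i^2)$. Since $q_i \le 1/x_i$ and $\rho_i^2 = \lambda_i^2 x_i^2$, this is $O(\lambda_i^2 x_i) \to 0$, so $\theta_i\to 1$.

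Combining the two factors gives $\kappa_{i,0}\to 1$. Therefore $\liminf \Psi \ge \lim \alpha_{0,0} = \binom{\lambda_1+\lambda_2}{\lambda_1}^{-1}$.

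The only delicate step is $\theta_i \to 1$. The exponent $q_i$ blows up like $1/x_i$, so the argument needs the base to be $1+O(\rho_i^2)$ and not merely $1+O(\rho_i)$. The reflection formula $\Gamma(1+\rho)\Gamma(1-\rho) = \pi\rho/\sin(\pi\rho)$, with its Taylor expansion $1+\pi^2\rho^2/6+\cdots$, supplies exactly this.
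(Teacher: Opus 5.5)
Your proof is correct and matches the paper's argument: the upper bound comes from the $0^{\text{th}}$-order approximation $\Psi \le \alpha_{0,0}$, and the lower bound $\Psi \ge \alpha_{0,0}\kappa_{1,0}\kappa_{2,0}$ comes from nonnegativity of the series in \Cref{mainestlemma0}. Your check that $\kappa_{i,0} = \theta_i\binom{\lambda_i}{\rho_i} \to 1$, using $\Gamma(1+\rho)\Gamma(1-\rho) = \pi\rho/\sin(\pi\rho) = 1 + O(\rho^2)$ to offset the exponent $q_i \approx 1/x_i$, supplies exactly the ``routine calculations'' the paper leaves unstated.
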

\begin{proof}
For an upper bound on the limit, apply \Cref{mainestlemma} with $k_i = 0, \rho_i = \lambda_i x_i$; note that $\alpha_{0,0} \rightarrow \binom{\lambda_1 + \lambda_2}{\lambda_1}^{-1}$ as $x_1, x_2 \rightarrow 0$. For the lower bound, \Cref{mainestlemma0} and the fact that the terms $\kappa, \alpha$ are nonnegative gives $\Psi( x_1, x_2; \rho_1, \rho_2) \geq \alpha_{0,0} \kappa_{1,0} \kappa_{2,0}$. It can be checked by routine calculations that $\kappa_{1,0}, \kappa_{2,0} \rightarrow 1$ in the limit.
\end{proof}

\subsection{Numerical bounds on $\Psi$}
For algorithmic applications, we typically need to bound $\Psi$ within a given parameter range. The pointwise estimates, e.g. \Cref{mainestlemma}, are not sufficient for this task. We here discuss a few relevant issues on how to bound $\Psi$ tightly and accurately.

\begin{proposition}
\label{gratta1}
For integers $k_1, k_2 \geq 0$, suppose we are given upper-bound values \begin{eqnarray*}
&\kappa_{i,j}^+ \geq \kappa_{i,j} &  \quad (i=1,2,\ 0\le j<k_i) \\
&\alpha_{j_1, j_2}^+ \geq \alpha_{j_1,j_2} & \quad (0\le j_1\le k_1,\ 0\le j_2\le k_2).
\end{eqnarray*}

For $i = 1,2$ and $j = 0, 1, \dots, k_i$ define values
$$
 S_{i,j} = \min\{1, \kappa_{i,0}^+ + \dots, \kappa_{i,j-1 }^+ \}
$$
and define corresponding values $\hat \kappa$ by
\begin{align*}
\hat \kappa_{i,j} &= S_{i,j+1} - S_{i,j}, \qquad \text{for $j = 0, \dots, k_i-1 $} \\
\hat \kappa_{i,k_i} &= 1 - S_{i,k_i}
\end{align*}

Then 
\begin{align*}\Psi(x_1, x_2; \rho_1, \rho_2) &\leq \sum_{\substack{j_1 \leq k_1 \\ j_2 \leq k_2}} \alpha_{j_1,j_2}^+ \hat \kappa_{1,j_1} \hat \kappa_{2,j_2}
\end{align*}
\end{proposition}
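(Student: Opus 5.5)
We prove this by a stochastic-dominance (rearrangement) argument on the index distributions, starting from the exact representation of \Cref{mainestlemma0}. Regard $\kappa_{i,\cdot}$ as a probability distribution on $\{0,1,2,\dots\}$, with $\kappa_{i,j}\ge 0$ and $\sum_j\kappa_{i,j}=1$. Also regard $\hat\kappa_{i,\cdot}$ as a probability distribution on $\{0,\dots,k_i\}$. Since the $\kappa^+$ are nonnegative, $S_{i,j}$ is nondecreasing in $j$ and capped at $1$, so the $\hat\kappa_{i,j}$ are nonnegative and telescope to $1$. The key structural fact, already established in the proof of \Cref{mainestlemma}, is that $\alpha_{j_1,j_2}$ is nonincreasing in each index separately.

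\textbf{Step 1: the truncated distribution is stochastically smaller.} We first show that $\hat\kappa_{i,\cdot}$ is stochastically dominated by the distribution obtained from $\kappa_{i,\cdot}$ by lumping all mass at indices $\ge k_i$ onto $k_i$. Call this lumped distribution $\tilde\kappa_{i,\cdot}$. Its CDF is $\tilde F_i(m)=\sum_{j\le m}\kappa_{i,j}$ for $m<k_i$, and $1$ at $m=k_i$.

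The CDF of $\hat\kappa$ is $\hat F_i(m)=S_{i,m+1}=\min\{1,\sum_{j\le m}\kappa^+_{i,j}\}$ for $m<k_i$, and $1$ at $m=k_i$. Since $\kappa^+\ge\kappa$ and $\sum_{j\le m}\kappa_{i,j}\le 1$, we get $\hat F_i\ge\tilde F_i$ pointwise. Thus $\hat\kappa_i$ puts more mass on small indices, i.e., it is stochastically smaller.

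\textbf{Step 2: lumping only increases the bound.} Because $\alpha$ is nonincreasing in each index, replacing $j_i\ge k_i$ by $k_i$ can only increase $\alpha$. This gives
\[
\Psi=\sum_{j_1,j_2\ge0}\alpha_{j_1,j_2}\kappa_{1,j_1}\kappa_{2,j_2}\le\sum_{j_1\le k_1,\,j_2\le k_2}\alpha_{j_1,j_2}\tilde\kappa_{1,j_1}\tilde\kappa_{2,j_2}.
\]
This is exactly the content of \Cref{mainestlemma}.

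\textbf{Step 3: coupling to pass to $\hat\kappa$ and then to $\alpha^+$.} Take independent pairs $(J_1,\hat J_1)$ and $(J_2,\hat J_2)$, coupled monotonically so that $J_i\sim\tilde\kappa_i$, $\hat J_i\sim\hat\kappa_i$, and $\hat J_i\le J_i$ almost surely. The quantile coupling does this, using Step 1. Independence across $i=1,2$ is allowed because both sums are products over $i$.

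Monotonicity of $\alpha$ then gives $\alpha_{J_1,J_2}\le\alpha_{\hat J_1,\hat J_2}$ pointwise. Taking expectations yields
\[
\sum\alpha\,\tilde\kappa_1\tilde\kappa_2\le\sum\alpha\,\hat\kappa_1\hat\kappa_2.
\]
Finally, $\alpha\le\alpha^+$ and $\hat\kappa\ge0$ give the stated bound.

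\textbf{Main obstacle.} The only delicate point is Step 1 at the boundary: the cap at $1$ in $S_{i,j}$, and the last index $k_i$ absorbing all the residual mass $1-S_{i,k_i}$. We handle this by the pointwise CDF comparison above. One could instead avoid couplings by Abel summation: write $\sum_j\alpha_j\hat\kappa_j=\alpha_{k}+\sum_{m<k}(\alpha_m-\alpha_{m+1})\hat F(m)$, note that each difference $\alpha_m-\alpha_{m+1}$ is $\ge0$, and apply this one coordinate at a time.
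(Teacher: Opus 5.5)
Your proposal is correct and follows essentially the same route as the paper. Your CDF comparison $\hat F_i \ge \tilde F_i$ is exactly the paper's claim that $\hat\kappa_{i,\cdot}$ prefix-sum dominates the lumped vector $(\kappa_{i,0},\dots,\kappa_{i,k_i-1},\sigma_{i,k_i})$, your Step 2 matches the paper's rewriting of the $(k_1,k_2)$-order bound, and the final replacement of $\alpha$ by $\alpha^+$ is the same. The paper simply asserts the two-dimensional dominance step from the monotonicity of $\alpha$, whereas you justify it with a monotone quantile coupling using independent uniforms, which is a valid way to make that step explicit.
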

\begin{proof}
For $i=1,2$, define vectors
$p_i,w_i$ indexed by $\{0,\ldots,k_i\}$ by
\[
p_i(j)=
\begin{cases}
\kappa_{i,j},&j<k_i, \\
\sigma_{i,j}  &j = k_i
\end{cases}
\qquad
w_i(j)=
\begin{cases}
\hat \kappa_{i,j},&j<k_i,\\
1 - S_{i,k_i} &j=k_i.
\end{cases}
\]
By \Cref{mainestlemma0}, the vectors $p_i$ are nonnegative, while the vectors $w_i$ are nonnegative by definition of $S$.

Observe that for every $t<k_i$, we have
\[
\sum_{j=0}^t p_i(j)
=
\sum_{j=0}^t\kappa_{i,j}
\le
\min \Bigl \{1, \sum_{j=0}^t \kappa^+_{i,j} \Bigr \}
=
S_{i,t+1} = 
\sum_{j=0}^t \hat \kappa_{i,j}.
\]
while for $t=k_i$, we have
\[
\sum_{j=0}^{k_i}p_i(j)=1
=
\sum_{j=0}^{k_i}w_i(j).
\]
Hence $w_i$ prefix-sum dominates $p_i$.

\Cref{mainestlemma} can be written as
\begin{equation}
\label{bit0}
\Psi(x_1,x_2;\rho_1,\rho_2)
\le
\sum_{j_1=0}^{k_1}\sum_{j_2=0}^{k_2}
\alpha_{j_1,j_2}p_1(j_1)p_2(j_2).
\end{equation}
Since $w_1$ prefix-sum dominates $p_1$ and $w_2$ prefix-dominates $p_2$, while the values $\alpha_{j_1, j_2}$ are nonnegative and nonincreasing in both coordinates,  we have
\begin{equation}
\sum_{j_1,j_2}
\alpha_{j_1,j_2}p_1(j_1)p_2(j_2)
\le
\sum_{j_1,j_2}
\alpha_{j_1,j_2}w_1(j_1)w_2(j_2).
\end{equation}

Since $\alpha^+_{j_1,j_2}\ge\alpha_{j_1,j_2}$ and all
entries of $w_1,w_2$ are nonnegative,
\begin{equation}
\sum_{j_1,j_2}
\alpha_{j_1,j_2}w_1(j_1)w_2(j_2)
\le
\sum_{j_1,j_2}
\alpha_{j_1,j_2}^+ w_1(j_1)w_2(j_2).
\label{bit3}
\end{equation}
Combining Eqs.~\labelcref{bit0}--\labelcref{bit3} and expanding the final double sum
yields the claimed bound.
\end{proof}

\begin{proposition}
\label{psicompbound}
Let $\lambda_i^-$ and $x_i^+ \lambda_i^- < 1$ for $i = 1,2$. Suppose we are given a box $$
B = [x_1^-, x_1^+] \times [x_2^-, x_2^+] \times [\lambda_1^-, \infty) \times [\lambda_2^-, \infty)
$$
and we want to upper-bound $\Psi$ over $(x_1, x_2, \lambda_1, \lambda_2) \in B$, i.e. where $x_i \in [x_i^-, x_i+], \rho_i/x_i \in [\lambda_i^-, \infty)$. 

We first use the $\rho$-monotonicity bound 
$$
\Psi(x_1, x_2; \rho_1, \rho_2) \leq \Psi(x_1, x_2; \lambda_1^{-} x_1, \lambda_2^{-} x_2)
$$

Next, after fixing $\rho_i = \lambda_i^- x_i$, we can further bound the terms as 
\begin{align*}
\alpha_{j_1, j_2} &\leq \binom{\lambda_1^{-} (1 - x_1^+) + \lambda_2^{-} (1 - x_2^+) + j_1 + j_2}{\lambda_1^{-1} (1 - x_1^+) + j_1}^{-1} \\
\theta_i &\leq \theta_i^+ := \begin{cases}
1 & \text{if $x_i^- = 0$} \\
(\Gamma(1+x_i^{-} \lambda_i^{-}) \Gamma(1-x_i^{-} \lambda_i^{-}))^{1-1/x_i^-} & \text{if $x_i^- > 0, x_i^+ \leq 1/2$} \\
(\Gamma(1+x_i^{-} \lambda_i^{-}) \Gamma(1-x_i^{-} \lambda_i^{-}))^{1-1/x_i^+} & \text{if $x_i^- > 0, x_i^+ > 1/2$}
\end{cases} \\
\kappa_{i,j}  &\leq \theta_i^+ \cdot \frac{\lambda_i^-}{\lambda_i^- + j} \cdot \binom{\lambda_i^-+j}{\lambda^-_i x^*_{i,j}} \cdot \max_{z \in [x_i^-, x_i^+]} b_j(\lambda_i^- z, 1/z - 1) \\
& \qquad \quad \text{for } x^*_{i,j} := \text{Clamp} \Bigl( \frac{\lambda_i^- + j}{2 \lambda_i^-} ; [x_i^-, x_i^+] \Bigr)
\end{align*}

Observe that $b_j$ is a rational function for every fixed $j$; we can maximize it over the interval using standard algebra algorithms.
\end{proposition}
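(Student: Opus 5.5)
We begin with the $\rho$-monotonicity of $\Psi$ (the proposition on $\Psi$ monotonicity): since $\rho_i/x_i\ge\lambda_i^-$ on $B$, we have $\rho_i\ge\lambda_i^-x_i$, which gives the first displayed inequality. From then on we fix $\rho_i=\lambda_i^-x_i$, so $\lambda_i=\lambda_i^-$ and $\rho_i q_i=\lambda_i^-(1-x_i)$. Each remaining claim is then a monotonicity or unimodality statement in the single variable $x_i\in[x_i^-,x_i^+]$, applied to the explicit formulas of \Cref{mainestlemma0}. The hypothesis $x_i^+\lambda_i^-<1$ guarantees $\rho_i\in(0,1)$, so all these formulas are valid. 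The bounds are products of nonnegative factors, so it suffices to bound each factor separately.

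\emph{The $\alpha$ bound.} Write $\binom{a+b}{a}=\Gamma(a+b+1)/(\Gamma(a+1)\Gamma(b+1))$. Its logarithmic derivative in $a$ is $\psi(a+b+1)-\psi(a+1)\ge0$, where $\psi$ is the digamma function, which is increasing; symmetrically in $b$. Here $a=\lambda_1^-(1-x_1)+j_1$ and $b=\lambda_2^-(1-x_2)+j_2$ are both decreasing in the $x_i$. Hence the binomial is minimized, and $\alpha_{j_1,j_2}$ is maximized, at $x_i=x_i^+$. (The ``$\lambda_1^{-1}$'' in the displayed lower index should be read as $\lambda_1^-$.)

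\emph{The $\theta$ bound, which I expect to be the main obstacle.} Set $g(\rho)=\log(\Gamma(1+\rho)\Gamma(1-\rho))=\log\frac{\pi\rho}{\sin\pi\rho}$. Its Taylor series is even with nonnegative coefficients, namely $\sum_k \zeta(2k)\rho^{2k}/k$. Hence $g\ge0$, $g$ is increasing, and $\rho g'(\rho)\ge 2g(\rho)$. We have $\log\theta_i=-(1/x-1)g(\lambda^-x)$.
\begin{itemize}
\item If $x_i^-=0$, then $\theta_i\le1$ because $g\ge0$.
\item In general, $g(\lambda^-x)\ge g(\lambda^-x^-)\ge0$ and $-(1/x-1)\le 0$. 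So $\theta_i\le G^{1-1/x}$ with $G=e^{g(\lambda^-x^-)}\ge1$. This is increasing in $x$, hence at most $G^{1-1/x^+}$, which is the third case.
\item For $x^+\le1/2$ I will instead show that $h(x)=(1-x)g(\lambda^-x)/x$ is nondecreasing on $(0,1/2]$. Writing $\rho=\lambda^-x$, one gets $x^2h'(x)=(1-x)(\rho g'(\rho)-g(\rho))-x\,g(\rho)$. This is $\ge(1-x)g-xg\ge0$, using $\rho g'-g\ge g$ and $1-x\ge x$. So $h$ is minimized at $x^-$, giving $\theta_i\le e^{-h(x^-)}=G^{1-1/x^-}$, the second case.
\end{itemize}
The delicate point is precisely this derivative sign, and it is where the restriction $x^+\le1/2$ is needed.

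\emph{The $\kappa$ bound.} We have $\kappa_{i,j}=\theta_i\frac{\lambda^-}{\lambda^-+j}\binom{\lambda^-+j}{\lambda^-x}b_j(\lambda^-x,1/x-1)$. The factor $\frac{\lambda^-}{\lambda^-+j}$ does not depend on $x$. The map $k\mapsto\binom{n}{k}=\Gamma(n+1)/(\Gamma(k+1)\Gamma(n-k+1))$ is log-concave in $k$, because $\log\Gamma$ is convex, and it is symmetric about $k=n/2$. Hence over $k=\lambda^-x$ with $x\in[x^-,x^+]$ it is maximized at the point closest to $n/2=(\lambda^-+j)/2$, i.e. at $x=x^*_{i,j}$. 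The factor $b_j$ is nonnegative by \Cref{lem:seriesphi}, and we simply bound it by its maximum over the interval; it is a rational function of $z$ via the recurrence in \Cref{lem:seriesphi}. Multiplying the three nonnegative upper bounds with $\theta_i^+$ gives the stated $\kappa$ bound. These bounds are exactly the inputs $\kappa^+$ and $\alpha^+$ required by \Cref{gratta1}.
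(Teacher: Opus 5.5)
Your proof is correct, and it follows the route the statement itself lays out; the paper gives no separate written proof. You apply $\rho$-monotonicity first, then bound the terms one at a time in $x_i$: monotonicity for $\alpha_{j_1,j_2}$, the sign of the derivative of $(1/x-1)\log\bigl(\Gamma(1+\lambda^- x)\Gamma(1-\lambda^- x)\bigr)$ (which explains the case split at $x_i^+=1/2$), and log-concavity plus symmetry of the binomial (which explains the clamp point $x^*_{i,j}$); your digamma and $\sum_k \zeta(2k)\rho^{2k}/k$ arguments supply exactly the verifications the paper leaves implicit.
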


Using such bounds, we can use \Cref{gratta1} to obtain an overall upper bound on $\Psi$. Note that we always have the generic bound $\Psi \leq 1$, which is useful in some edge cases.

\section{Acknowledgments}
We thank Mark Jacobson for suggesting the use of a Dirichlet distribution, and for explaining its definitions and properties.

\appendix

\section{Criterion for log-absolutely monotonic functions}
\label{sec: ap1}

We need to show that a function related to the Incomplete Beta function is log-absolutely monotonic. We will do this by deriving a general criterion for functions with fast-growing derivatives. This may be of independent interest. 

\begin{fact}[Fa\'{a} di Bruno formula]
\label{faa}

The $j^{\text{th}}$ derivative of a composite function \( \xi(\eta(x)) \) is given by:
\[
\frac{{\mathrm d} ^j}{{\mathrm d} x^j} \xi(\eta(x)) = \sum_{\substack{ k_1, \dots, k_j \geq 0: \\ k_1 + 2 k_2 + \dots + j k_j = j}} \frac{j!}{k_1! k_2! \cdots k_j! } \xi^{(k_1 + k_2 + \cdots + k_j)}(\eta(x)) \prod_{i=1}^j \left( \frac{\eta^{(i)}(x)}{i!} \right)^{k_i},
\]
\end{fact}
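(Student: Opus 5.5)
The plan is to read off the $j^{\text{th}}$ derivative as $j!$ times the $h^j$ coefficient of a Taylor expansion of $\xi(\eta(x+h))$ in powers of $h$, rather than differentiating $j$ times directly. Fix $x$, and assume $\xi$ is $j$ times differentiable at $\eta(x)$ and $\eta$ is $j$ times differentiable at $x$. This covers the case we need, since in \Cref{sec: ap1} both functions are analytic near the relevant points.

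First, expand the inner function. Write
\[
\eta(x+h) = \eta(x) + u(h), \qquad u(h) = \sum_{i=1}^{j} \frac{\eta^{(i)}(x)}{i!} h^i + o(h^j).
\]
In particular $u(h) = O(h)$.

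Next, expand the outer function around $\eta(x)$:
\[
\xi(\eta(x) + u) = \sum_{m=0}^{j} \frac{\xi^{(m)}(\eta(x))}{m!} u^m + o(u^j).
\]
Because $u = O(h)$, the remainder $o(u^j)$ is $o(h^j)$. Substituting the expansion of $u(h)$, each $u^m$ with $m \leq j$ equals the $m$-th power of the polynomial part plus $o(h^j)$.

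Then collect the $h^j$ terms. Apply the multinomial theorem to
\[
\Bigl(\sum_{i=1}^j \frac{\eta^{(i)}(x)}{i!} h^i\Bigr)^m = \sum_{k_1+\dots+k_j = m} \frac{m!}{k_1!\cdots k_j!} \prod_{i=1}^{j} \Bigl(\frac{\eta^{(i)}(x)}{i!}\Bigr)^{k_i} h^{\sum_i i k_i}.
\]
Keep the terms with $\sum_i i k_i = j$. The factor $m!$ cancels the $1/m!$ from the outer expansion, and $m = k_1 + \dots + k_j$ is exactly the order of the derivative of $\xi$ in the claimed formula. Every other term either has degree below $j$ or can be absorbed into $o(h^j)$.

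Finally, identify the derivative. The result is a polynomial $P(h)$ of degree $\le j$ with $\xi(\eta(x+h)) = P(h) + o(h^j)$. The composite $\xi \circ \eta$ is $j$ times differentiable at $x$ by the chain rule applied inductively, so its Taylor polynomial of order $j$ is the unique polynomial with this property. Hence $\frac{d^j}{dx^j}\xi(\eta(x)) = j!\,[h^j]P(h)$, which is exactly the stated sum.

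I expect the main obstacle to be the bookkeeping of little-$o$ remainders under composition, together with the appeal to uniqueness of Taylor polynomials (Peano form). In the analytic setting used here this disappears: the expansions are convergent power series, the substitution is legitimate for $|h|$ small, and coefficient comparison is exact. As an alternative route, one could instead induct on $j$, differentiating the formula once and checking that each multi-index $(k_1,\dots,k_{j+1})$ with $\sum_i i k_i = j+1$ receives the right multiplicity. That recombination of coefficients is messier, so I would use the series argument.
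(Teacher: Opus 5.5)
The paper does not prove this statement. It is quoted as a standard fact and used once, in \Cref{cor:monpower}, to conclude that $h^{(k)}(0)\geq 0$ for $h=\exp(qg)$ when $q\geq 0$, $g(0)=0$ and all $g^{(i)}(0)\geq 0$. So there is no argument of the paper to compare against, and your Taylor-coefficient proof is a correct, standard derivation.

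The individual steps check out:
\begin{itemize}
\item The multinomial bookkeeping is right. The constraint $\sum_i i k_i=j$ forces $m=\sum_i k_i\leq j$, so the order-$j$ outer expansion suffices, and $m!$ cancels as you say.
\item Passing from $o(u^j)$ to $o(h^j)$ is valid because $u(h)\to 0$ with $|u(h)|=O(|h|)$.
\item Replacing $u^m$ by $p^m$, where $p$ is the polynomial part of $u$, costs only $o(h^j)$. Each cross term $p^{m-l}r^l$ with $l\geq 1$ is $O(1)\cdot o(h^{jl})$.
\item You correctly note that an expansion $P(h)+o(h^j)$ alone does not imply $j$-fold differentiability. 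That is why the separate inductive chain-rule step is needed before you invoke uniqueness of the Peano--Taylor polynomial.
\end{itemize}

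One small inaccuracy: \Cref{cor:monpower} only assumes $f$ is infinitely differentiable at $0$, so $\eta=\log f$ need not be analytic there. It is analytic in the actual application to $\phi_\rho$. This does no harm, since your main argument uses only finite differentiability.

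Your viewpoint also fits the paper's single use well. It shows that the Taylor coefficients of $\xi\circ\eta$ are obtained by formally composing the two Taylor series. Nonnegativity of the coefficients of $\exp(qg)$ then follows at once from nonnegativity of the coefficients $q^m/m!$ of $\exp(q\,\cdot)$ and of $g$, without the explicit closed form.
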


\begin{lemma}[Log-absolute monotonicity]
\label{cor: ck>0}
Let \(f\) be infinitely differentiable in a neighborhood of zero, with
\(f(0)=1\), and let \(g=\log f\).  For \(i\geq0\), define $a_i=\frac{f^{(i)}(0)}{i!}$ and $c_i=\frac{g^{(i)}(0)}{i!}$.

Suppose that the coefficients \(a_k\) are positive and log-convex, i.e. $$
a_{k+2} a_k \geq a_{k+1}^2 \qquad \text{for all $k \geq 0$}
$$

  Then \(c_k>0\) for every \(k\geq1\).
\end{lemma}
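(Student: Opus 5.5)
The plan is to use the standard recurrence linking the Taylor coefficients of $f$ and of $g=\log f$, and then argue by strong induction on $k$. From $f'=fg'$ and comparing coefficients of $t^{k-1}$, we get for $k\ge1$
\[
k\,a_k=\sum_{j=1}^{k} j\,c_j\,a_{k-j},
\qquad\text{equivalently}\qquad
k\,c_k = k\,a_k-\sum_{j=1}^{k-1} j\,c_j\,a_{k-j}
\]
(using $a_0=1$). Log-convexity means the ratios $r_k:=a_{k+1}/a_k$ are nondecreasing in $k$, and positivity of the $a_k$ makes them all positive.

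The cleanest route I would try first is the classical Kaluza argument. It shows that a log-convex sequence with $a_0=1$ has a reciprocal series $1/f$ whose coefficients beyond index $0$ are all nonpositive. Then I would pass from the reciprocal to the logarithm. Concretely, write $f(t)=1/(1-h(t))$, where $h(t)=\sum_{k\ge1}h_kt^k$. Kaluza's theorem gives $h_k\ge0$. The recurrence to verify is $a_k=\sum_{j=1}^k h_j a_{k-j}$, and one proves $h_k\ge0$ inductively by rewriting $a_{k+1}-r_k a_k$ as a nonnegative combination of earlier terms using $r_j\le r_k$.

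With $h_k\ge0$ in hand, we have $g=-\log(1-h)=\sum_{m\ge1}h^m/m$. This is a power series in $h$ with positive coefficients, applied to a series with nonnegative coefficients, so every $c_k\ge0$. Equivalently, apply Fa\'a di Bruno (\Cref{faa}) with $\xi(y)=-\log(1-y)$, all of whose derivatives at $0$ are positive, and $\eta=h$. For strict positivity, note that the $m=1$ term contributes $h_k$. Also, the $m=k$ term contributes $h_1^k/k$, where $h_1=a_1>0$. So $c_k\ge h_1^k/k>0$.

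The main obstacle is Kaluza's step, namely showing $h_k\ge0$. I would prove it by induction using
\[
a_{k+1}=\sum_{j=1}^{k+1}h_j a_{k+1-j}
\quad\text{and}\quad
r_k a_k=\sum_{j=1}^{k}h_j\, r_k a_{k-j}.
\]
Subtracting these gives
\[
h_{k+1}=a_{k+1}-r_k a_k+\sum_{j=1}^{k}h_j\bigl(r_k a_{k-j}-a_{k+1-j}\bigr).
\]
The first difference $a_{k+1}-r_ka_k$ is zero by the definition of $r_k$. Each bracket equals $a_{k-j}(r_k-r_{k-j})$, which is nonnegative by monotonicity of the ratios, and $h_j\ge0$ by the induction hypothesis. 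This closes the induction. The remaining care is purely formal: differentiability only near $0$ suffices, since all the identities are identities of formal power series determined by finitely many derivatives.
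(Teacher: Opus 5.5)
Your argument is correct, but it is organized differently from the paper's.

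The paper never passes through $1/f$. It runs the strong induction directly on the log-derivative recurrence $k a_k=\sum_{j=1}^k j c_j a_{k-j}$. It uses $a_{k-j}\le \frac{a_k}{a_{k-1}}a_{k-1-j}$ together with $c_1,\dots,c_{k-1}\ge 0$ to get $\sum_{j=1}^{k-1} j c_j a_{k-j}\le \frac{a_k}{a_{k-1}}(k-1)a_{k-1}=(k-1)a_k$, and hence $k c_k\ge a_k>0$.

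Your Kaluza step uses the same device: rescale the previous-level recurrence by a ratio and compare term by term via monotone ratios. You apply it to the renewal recurrence $a_k=\sum_j h_j a_{k-j}$ instead. There the leading terms cancel exactly ($a_{k+1}-r_k a_k=0$), so you only get $h_k\ge 0$; indeed $h=a_1t$ when $a_k=a_1^k$. You therefore have to recover strictness afterwards from the $h_1^k/k$ term of $-\log(1-h)$. In the paper's version, the weights $k$ versus $k-1$ leave a surplus of exactly $a_k$. Positivity and the bound $c_k\ge a_k/k$ thus come out in a single pass. This bound is at least your $a_1^k/k$, since log-convexity with $a_0=1$ gives $a_k\ge a_1^k$.

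In exchange, your route proves a stronger intermediate fact: $1/f$ has nonpositive coefficients after the constant term (the Kaluza property). This implies log-absolute monotonicity but is not implied by it (e.g.\ $f=e^{t+t^2}$), and it connects the lemma to classical results. The cost is an extra step, plus justifying that Taylor jets compose formally for $1/f$ and $-\log(1-h)$, which you correctly note is routine.
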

\begin{proof}
Here, we have the key identity
\begin{equation}
\label{kid}
f' = f g'
\end{equation}

We can use the Leibniz rule to differentiate both sides of (\ref{kid}), and evaluate at zero, to get: 
\[
f^{(k)}(0)
=
\sum_{j=1}^k
\binom{k-1}{j-1}
f^{(k-j)}(0) \, g^{(j)}(0).
\]
for any $k \geq 1$.  Dividing by \((k-1)!\) gives the recurrence
\begin{equation}
\label{eq:log-coefficient-recurrence}
k a_k
=
\sum_{j=1}^k j c_j a_{k-j}.
\end{equation}

We now show the claimed result by strong induction on \(k\). The case $k = 1$ holds since $c_1 = a_1 > 0$. For $k \geq 2$,   log-convexity gives
\[
a_{k-j}
\leq
\frac{a_k}{a_{k-1}}a_{k-1-j} \qquad \text{for $j < k$}
\]
Consequently, using \eqref{eq:log-coefficient-recurrence} at $k-1$, and using the induction hypothesis $c_0, \dots, c_{k-1} \geq 0$, gives
\begin{align*}
\sum_{j=1}^{k-1}j c_j a_{k-j}
&\leq
\frac{a_k}{a_{k-1}}
\sum_{j=1}^{k-1}j c_j a_{k-1-j} =
\frac{a_k}{a_{k-1}}(k-1)a_{k-1}
=
(k-1)a_k.
\end{align*}
Applying \eqref{eq:log-coefficient-recurrence} a second time at \(k\)  gives
\[
k c_k
=
k a_k-\sum_{j=1}^{k-1}j c_j a_{k-j}
\geq a_k>0.
\]
This completes the induction.
\end{proof}

\begin{proposition}\label{cor:monpower}
Let $f(x)$ be a function which is infinitely differentiable at zero with $f(0) = 1$, and let $h(x) = f(x)^q$ for  $q \geq 0$.   For $i \geq 0$, let $a_i = f^{(i)}(0)/i!$ and $b_i = h^{(i)}(0)/i!$.

Suppose that $f$ is log-absolutely monotonic, i.e. all derivatives of $\log f$ are nonnegative. Then $b_i \geq 0$ for all $i$, and the coefficients satisfy the recurrence
$$
b_0 = 1, \qquad b_k = \frac{1}{k} \sum_{j=1}^k \bigl( (q+1) j - k \bigr) a_j b_{k-j} \qquad \text{for $k \geq 1$}
$$
\end{proposition}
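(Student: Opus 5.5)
The plan is to set $g=\log f$ and $h=f^q=e^{qg}$, and to derive the recurrence from the logarithmic-derivative identity for $h$. Nonnegativity of the $b_i$ will come from Faà di Bruno, \Cref{faa}. Write $c_i=g^{(i)}(0)/i!$, so that $c_i\ge 0$ for $i\ge 1$ by log-absolute monotonicity; also $c_0=0$ since $f(0)=1$.

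\textbf{Nonnegativity.} Apply \Cref{faa} with $\xi(y)=e^{qy}$ and $\eta=g$. Every derivative $\xi^{(m)}(g(0))=q^m e^{0}=q^m$ is nonnegative, because $q\ge 0$. Every factor $g^{(i)}(0)/i!=c_i$ with $i\ge1$ is also nonnegative. Each term in the Faà di Bruno sum for $h^{(k)}(0)$ is therefore nonnegative, so $b_k\ge 0$. Clearly $b_0=f(0)^q=1$.

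\textbf{Recurrence.} Start from the two identities
$$h'=q\,g'\,h, \qquad f'=f\,g'.$$
Eliminate $g'$ to get $f\,h'=q\,f'\,h$, which is a polynomial relation among Taylor coefficients. Take the coefficient of $t^{k-1}$ on both sides for $k\ge1$. Using $f=\sum a_i t^i$, $h=\sum b_i t^i$, $f'=\sum i a_i t^{i-1}$ and $h'=\sum i b_i t^{i-1}$, this gives
$$\sum_{j=0}^{k} a_j\,(k-j)\,b_{k-j} \;=\; q\sum_{j=1}^{k} j\,a_j\,b_{k-j}.$$
Isolate the $j=0$ term on the left, which is $k\,b_k$ because $a_0=1$. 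Moving the remaining terms to the right yields
$$k\,b_k=\sum_{j=1}^k\bigl(qj+j-k\bigr)a_jb_{k-j},$$
and dividing by $k$ gives exactly the claimed formula.

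\textbf{Main obstacle.} The only subtle point is regularity. If $f$ is merely smooth at $0$ rather than analytic, the manipulations should be read as identities of formal Taylor jets. The Leibniz rule applied to $f h'=q f' h$ at $0$ justifies this directly, with no convergence needed. One also needs $f>0$ near $0$ so that $f^q$ and $\log f$ are defined; this follows from continuity and $f(0)=1$.
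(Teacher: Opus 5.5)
Your proposal is correct and matches the paper's proof essentially step for step. Nonnegativity of the $b_k$ comes from Fa\`{a} di Bruno (\Cref{faa}) applied to $h=\exp(qg)$, using $g(0)=0$, $q\ge 0$ and $g^{(i)}(0)\ge 0$. The recurrence comes from $f h' = q f' h$: the paper differentiates it $(k-1)$ times by the Leibniz rule at $0$, which is the same as your coefficient-of-$t^{k-1}$ extraction, and then isolates the $j=0$ term using $a_0=1$.
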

\begin{proof}
Let \(g(x)=\log f(x)\), so that $h(x)=\exp(qg(x)).$ Since \(g(0)=0\), \(q\geq0\), and all derivatives of \(g\) at zero
are nonnegative, the Faà di Bruno formula gives $h^{(k)}(0)\geq0$ for $k \geq 0$.

Differentiating \(h=f^q\) gives
\[
f(x)h'(x)=qf'(x)h(x).
\]
Apply the \((k-1)\)-st derivative to both sides and evaluate at zero.
By the Leibniz rule,
\[
\sum_{j=0}^{k-1}
\binom{k-1}{j}
f^{(j)}(0)h^{(k-j)}(0)
=
q\sum_{j=0}^{k-1}
\binom{k-1}{j}
f^{(j+1)}(0)h^{(k-1-j)}(0).
\]
Dividing by \((k-1)!\) and using $f^{(j)}(0)=j!a_j, h^{(j)}(0)=j!b_j$ 
gives
\[
\sum_{j=0}^{k-1}(k-j)a_jb_{k-j}
=
q\sum_{j=1}^k ja_jb_{k-j}.
\]
Since \(a_0=1\), isolating the term \(kb_k\) yields
\[
kb_k
=
\sum_{j=1}^k
\bigl((q+1)j-k\bigr)a_jb_{k-j}.
\]
Dividing by \(k\) proves the recurrence.
\end{proof}

\section{Numerical computation details: proof of \Cref{lem:correlation-factor}}
\label{app-numerical}
Here, we have parameters $r_1,r_2,g_1,g_2\in[0,1]$ where $r_2+g_2\le r_1$ and $r_1+g_1\le 1$. We define the following functions and expressions
$$
\alpha = 1.2337, \qquad \beta = 0.3
$$
$$
F(t) = \frac{0.68145}{\sqrt{1 - 0.53562 t}}, \qquad Q(t_0,\Delta t) = \int_{t = t_0}^{t_0 + \Delta_t} F(t) \dt 
$$
$$
y_i = Q(r_i,g_i), \qquad \rho_i = \alpha y_i, \qquad x_i =(1-\beta)F(r_i) g_i + \beta y_i
$$

Define 
\begin{align}
\label{fdefeqn}
f(r_1, r_2, g_1, g_2)  := \Bigl(\Psi( x_1, x_2; \rho_1, \rho_2) x_1 x_2 - \frac{\beta (y_1-F(r_1)g_1)y_2  }{Q(0,r_1)} \Bigr)/ \Bigl( F(r_1) g_1 y_2 \Bigr)
\end{align}

Our goal is to show that
$$
f(r_1, r_2, g_1, g_2) \leq c:= 0.3939
$$
for all $r_1, r_2, g_1, g_2$ in the region.  So we have a function maximization problem. We follow the standard procedure of partitioning the search space into small boxes of the form
$$
B = [r_1^{-},r_1^{+}]\times [r_2^{-},r_2^{+}]\times [g_1^{-},g_1^{+}]\times [g_2^{-},g_2^{+}]
$$
and upper bounding the function within each box via monotonicity properties.  This has three parts:
\begin{enumerate}
\item Within each box, we obtain bounds $x_i^{\pm}$ and $\lambda_i^-$ on $x_i, \lambda_i$.
\item We use these via \Cref{psicompbound} to obtain an upper bound $\Psi^+$ over the box.
\item We obtain an upper bound on $$
\Bigl(\Psi^+ x_1 x_2 - \frac{\beta (y_1-F(r_1)g_1)y_2  }{Q(0,r_1)} \Bigr)/ \Bigl( F(r_1) g_1 y_2 \Bigr) 
$$
within the box --- here, note that $\Psi$ is no longer regarded as a function of $x_i, \rho_i$.
\end{enumerate}

Throughout this section, it is convenient to define a function
$$
A(r,g)=\frac{Q(r,g)}{F(r)g},
$$
with \(A(r,0)=1\) by continuity. It is straightforward to check that, since $F(r)$ is positive increasing and log-convex, $A(r,g)$ is nondecreasing in both arguments. 

\begin{proposition}
Let $\lambda_i = \rho_i/x_i$. Within the feasible region of $B$, there holds $x_i \in [x_i^-, x_i^+]$ and $\lambda_i \geq \lambda_i^-$, for
\begin{align*}
x_i^- &= (1-\beta) F(r_i^-) g_i^- + \beta Q(r_i^-, g_i^-) \\
x_i^+ &= \begin{cases}
(1-\beta) F(r_i^+) g_i^+ + \beta  Q(r_i^+, g_i^+)  & \text{if $r_i^+ + g_i^+ \leq 1$}  \\
(1-\beta) F(r_i^*) (1 - r_i^*) + \beta  Q(r_i^*, 1-r_i^*)  & \text{if $r_i^+ + g_i^+ > 1$} 
\end{cases} \\
& \qquad \qquad \text{where $r_i^* = \max\{r_i^-, 1 - g_i^+ \}$} \\
\lambda_i^- &= \begin{cases}
\alpha Q(r_i^-, g_i^-) / x_i^- & \text{if $g_i^- > 0$} \\
\alpha & \text{if $g_i^- = 0$}
\end{cases}
\end{align*}
\end{proposition}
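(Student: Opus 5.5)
The proof reduces to monotonicity of the two maps $(r,g)\mapsto x(r,g) := (1-\beta)F(r)g + \beta Q(r,g)$ and $(r,g)\mapsto \lambda(r,g) := \alpha Q(r,g)/x(r,g)$ over the feasible part of the box, namely $r\in[r_i^-,r_i^+]$, $g\in[g_i^-,g_i^+]$ with $r+g\le 1$. I will treat each index $i$ separately, since $x_i$ and $\lambda_i$ depend only on $(r_i,g_i)$. The constraint $r_2+g_2\le r_1$ only shrinks the feasible set further, so ignoring it is harmless for upper and lower bounds.

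\textbf{Bounds on $x_i$.} By \Cref{prop: numerical}(a), $F$ is nonnegative and increasing. Hence $F(r)g$ is nondecreasing in both $r$ and $g$. Likewise $Q(r,g)=\int_r^{r+g}F$ is nondecreasing in $g$ (the integrand is nonnegative) and in $r$ (the window slides toward larger values of $F$). So $x$ is coordinatewise nondecreasing.
\begin{itemize}
\item The lower bound $x_i\ge x_i^-$ follows by evaluating at $(r_i^-,g_i^-)$.
\item If $r_i^++g_i^+\le 1$, the corner $(r_i^+,g_i^+)$ is feasible and gives $x_i^+$.
\item Otherwise, any feasible $(r,g)$ has $g\le \min\{g_i^+,1-r\}$, so $x(r,g)\le x(r,\min\{g_i^+,1-r\})$. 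I then maximize $x(r,1-r)$ over $r\ge r_i^*=\max\{r_i^-,1-g_i^+\}$, which is the range where $1-r\le g_i^+$. For $r<1-g_i^+$ the value is dominated by $x(1-g_i^+,g_i^+)$, which is itself of the form $x(r^*,1-r^*)$ when $r^*=1-g_i^+$.
\end{itemize}

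\textbf{The main obstacle.} In the last case I need $r\mapsto x(r,1-r)$ to be nonincreasing. Its derivative is $(1-\beta)\bigl(F'(r)(1-r)-F(r)\bigr)+\beta\bigl(F(1)\cdot 0 - F(r)\bigr)$, using $\frac{d}{dr}Q(r,1-r)=-F(r)$. So it suffices to check $F'(r)(1-r)\le F(r)$. For the explicit $F$ we have $F'/F = 0.26781/(1-0.53562r)$, and $(1-r)\cdot 0.26781/(1-0.53562 r)\le 0.26781<1$. This is a direct computation.

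\textbf{Bound on $\lambda_i$.} Write $\lambda=\alpha\big/\bigl((1-\beta)/A(r,g)+\beta\bigr)$ with $A(r,g)=Q(r,g)/(F(r)g)$. The text asserts that $A$ is nondecreasing in both arguments. I would justify this for $g$ by noting that $A$ is the average of $F(r+s)/F(r)$ over $s\in[0,g]$, an average of increasing functions of $s$. For $r$, I would use log-convexity of $F$, which makes $F(r+s)/F(r)$ nondecreasing in $r$. Thus $\lambda$ is nondecreasing in both arguments, so $\lambda_i\ge\lambda(r_i^-,g_i^-)=\alpha Q(r_i^-,g_i^-)/x_i^-$. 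When $g_i^-=0$, I use the limit $A=1$, which gives $\lambda\ge\alpha/((1-\beta)+\beta)=\alpha$.
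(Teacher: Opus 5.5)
Your proposal is correct and follows the paper's proof essentially step for step. You use coordinatewise monotonicity of $x$ in $(r,g)$, and in the case $r_i^+ + g_i^+ > 1$ you reduce to the boundary curve $r \mapsto x(r,1-r)$ on $[r_i^*, r_i^+]$. For $\lambda = \alpha A/(1-\beta+\beta A)$ you use monotonicity in $A$, with $A$ nondecreasing by log-convexity of $F$.

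The only difference is that you supply checks the paper leaves to the reader. These are the derivative bound $F'(r)(1-r) \le 0.26781\,F(r) < F(r)$, the running-average argument for monotonicity of $A$ in $g$, and the separate handling of $r < 1-g_i^+$. This makes your version slightly more complete.
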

\begin{proof}
Since \(F\) is positive and increasing,
\[
\frac{\partial Q(r,g)}{\partial g}=F(r+g)>0,
\qquad
\frac{\partial Q(r,g)}{\partial r}=F(r+g)-F(r)\geq0.
\]
Thus both \(F(r)g\) and \(Q(r,g)\), and hence $x(r,g)=(1-\beta)F(r)g+\beta Q(r,g)$ are nondecreasing in \(r\) and \(g\).  This shows the bound $x_i^-$, and if  $r^+ + g^+ \leq 1$, it also shows the bound on $x_i^+$. If $r^+ + g^+ > 1$, then this shows that $x_i$ attains its maximum value at 
$$
(1-\beta) F(r)(1 - r) + \beta Q(r,1-r) 
$$
for some $r \in [ r_i^*, r_i^+]$. It can be checked that this expression is decreasing in $r$, so it is maximized at the left endpoint $r_i^*$.

For \(\lambda\), observe that
$$
\lambda(r,g) =\frac{\alpha A(r,g)}
       {1-\beta+\beta A(r,g)},
$$
which is  nondecreasing in \(A\).  Consequently \(\lambda_i\) is minimized at $r_i^-,g_i^-$, giving the stated lower bound.
\end{proof}

\begin{proposition}
\label{prop:objective-box-bound}
Suppose that
\[
\Psi(x_1,x_2;\rho_1,\rho_2)\leq \Psi^+
\qquad\text{throughout \(B\),}
\]
for some given scalar parameter $\Psi^+  \geq 0$.  Then, throughout the feasible part of \(B\), 
\begin{align*}
f(r_1,r_2,g_1,g_2)
\leq {}&
\Psi^+
\bigl(1-\beta+\beta A(\widehat r_1,g_1^-)\bigr)
\left(
  \beta+\frac{1-\beta}{A(r_2^-,g_2^-)}
\right) 
-\frac{\beta\bigl(A(\widehat r_1,g_1^-)-1\bigr)}
       {Q(0,\widehat r_1)} \\
       & \qquad \quad  \text{where $\widehat r_1=\min\{r_1^+,1-g_1^-\}$}
\end{align*}
\end{proposition}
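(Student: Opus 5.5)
We rewrite $f$ in terms of $A_i := A(r_i,g_i)$ and then bound each factor using the monotonicity of $A$ in its two arguments. By definition $y_i = A_i F(r_i) g_i$ and
\[
x_i = (1-\beta)F(r_i)g_i + \beta y_i = F(r_i) g_i\,(1-\beta+\beta A_i) = y_i\Bigl(\beta + \frac{1-\beta}{A_i}\Bigr).
\]
We use the first form for $x_1$ and the second for $x_2$. Dividing by $F(r_1)g_1 y_2$ then gives
\[
f(r_1,r_2,g_1,g_2) = \Psi(x_1,x_2;\rho_1,\rho_2)\,(1-\beta+\beta A_1)\Bigl(\beta+\frac{1-\beta}{A_2}\Bigr) - \frac{\beta (A_1 - 1)}{Q(0,r_1)} .
\]
Since $\Psi \le \Psi^+$ on $B$ and both multiplying factors are nonnegative, we may replace $\Psi$ by $\Psi^+$. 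It then remains to bound the resulting expression over the feasible part of the box.

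For the $A_2$-factor, $A$ is nondecreasing in both arguments, so $A_2 \ge A(r_2^-,g_2^-)$. Hence $\beta + (1-\beta)/A_2 \le \beta + (1-\beta)/A(r_2^-,g_2^-)$, which is exactly the second factor in the claimed bound.

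The $A_1$ terms need more care, because $A_1$ appears with a positive sign in $\Psi^+(1-\beta+\beta A_1)(\cdots)$ and a negative sign in $-\beta(A_1-1)/Q(0,r_1)$. Substituting a single worst value of $A_1$ therefore does not obviously work. My plan is to rearrange the $A_1$-dependence as
\[
\beta (A_1 - 1)\Bigl[\Psi^+\Bigl(\beta+\frac{1-\beta}{A(r_2^-,g_2^-)}\Bigr) - \frac{1}{Q(0,r_1)}\Bigr] + \Psi^+\Bigl(\beta+\frac{1-\beta}{A(r_2^-,g_2^-)}\Bigr).
\]
Note that $A_1 - 1 \ge 0$. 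I then need to show that the bracket is nonpositive, or handle the case where it is not.

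To handle $r_1$, I use feasibility: $r_1 \le \min\{r_1^+, 1-g_1\} \le \min\{r_1^+,1-g_1^-\} = \widehat r_1$. Since $Q(0,\cdot)$ is increasing, $1/Q(0,r_1) \ge 1/Q(0,\widehat r_1)$.

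The remaining question is the monotonicity of $A_1$ in $g_1$ and $r_1$ together with the sign of the bracket, and I expect this to be the main obstacle. The stated bound uses $A(\widehat r_1, g_1^-)$ in both places. This is consistent with $A_1 \le A(\widehat r_1, g_1^+)$, but using $g_1^-$ suggests the authors exploit the negative coefficient: if the bracket is negative, a larger $A_1$ only helps, which would favor taking $A_1$ small. I do not yet see how this reconciles with using $\widehat r_1$ inside $A$. What I would try first is to argue via the $r_2+g_2\le r_1$ constraint that $\beta(A_1-1)/Q(0,r_1)$ dominates $\Psi^+$ times the positive part. Failing that, I would treat the expression jointly as a function of $(r_1, g_1)$ and check monotonicity directly. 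Concretely, increasing $r_1$ raises $A_1$ and $Q(0,r_1)$ together, and I would verify that the combined effect is bounded by the value at $(\widehat r_1, g_1^-)$. This is the step I expect to require the most checking, and I may need to correct the choice of $g_1^\pm$ there.
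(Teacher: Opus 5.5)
Your rewrite of $f$ in terms of $A_1,A_2$ is correct and agrees with the paper, as are the replacement of $\Psi$ by $\Psi^+$ and the bound on the $A_2$-factor. But the proposal stops at the crux and leaves two things unproved.

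First, you never show that the bracket $\Psi^+\bigl(\beta+\frac{1-\beta}{A_2}\bigr)-\frac{1}{Q(0,r_1)}$ is nonpositive, and the constraint $r_2+g_2\le r_1$ is not what does it. The missing observation is that $\Psi\le 1$ always, by the negative-association bound in \Cref{prop:dep-round}. The claimed right-hand side is nondecreasing in $\Psi^+$, so you may assume $\Psi^+\le 1$ without loss of generality. Then $A_2\ge 1$ gives $\beta+(1-\beta)/A_2\le 1$, and $Q(0,r_1)\le Q(0,1)\le 1/\alpha<1$, so the bracket is $\le 0$. Hence, at fixed $r_1$, the expression is nonincreasing in $A_1$, and therefore nonincreasing in $g_1$. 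This is exactly why $g_1^-$ (not $g_1^+$) appears.

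Second, and more seriously, you decouple $r_1$. You bound $1/Q(0,r_1)\ge 1/Q(0,\widehat r_1)$ separately, and then, with a nonpositive bracket, take $A_1$ as small as the box allows. This yields only a bound with $A(r_1^-,g_1^-)$ in place of $A(\widehat r_1,g_1^-)$. Since the bracket is nonpositive and $A(\widehat r_1,g_1^-)\ge A(r_1^-,g_1^-)$, the stated bound is at least as strong as this, and in general strictly stronger. So the decoupled route cannot prove the proposition.

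What is needed is monotonicity in $r_1$ with the coupling kept, which the paper gets from two facts:
\begin{itemize}
\item $\Psi^+(1-\beta+\beta A_1)\bigl(\beta+\frac{1-\beta}{A_2}\bigr)$ is nondecreasing in $r_1$, since $A$ is.
\item $-\beta\bigl(A(r_1,g_1)-1\bigr)/Q(0,r_1)$ is nondecreasing in $r_1$, because $\bigl(A(r_1,g_1)-1\bigr)/Q(0,r_1)$ is nonincreasing in $r_1$. This is a mechanical check on the explicit $F$.
\end{itemize}
With these, the argument is a sweep. Lower $r_2,g_2$ to $r_2^-,g_2^-$ and $g_1$ to $g_1^-$ at fixed $r_1$. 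Then raise $r_1$ to its largest feasible value $\min\{r_1^+,1-g_1^-\}=\widehat r_1$. Every step preserves feasibility (including $r_2+g_2\le r_1$) and does not decrease the expression. Also $r_1\le\widehat r_1$ holds because $r_1\le 1-g_1\le 1-g_1^-$.

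Your plan to check ``the combined effect'' of raising $r_1$ gestures at this. But it never isolates the monotonicity of $(A_1-1)/Q(0,r_1)$, which is the one genuinely nontrivial ingredient the statement needs.
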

\begin{proof}
Since the claimed bound is monotonic in $\Psi$, and we know that $\Psi \leq 1$ everywhere, it suffices to show this under the assumption that $\Psi^+ \leq 1$. Writing \(A_i=A(r_i,g_i)\), we have
\[
\frac{x_1}{F(r_1)g_1}=1-\beta+\beta A_1,
\qquad
\frac{x_2}{y_2}
=\beta+\frac{1-\beta}{A_2},
\qquad
\frac{y_1-F(r_1)g_1}{F(r_1)g_1}=A_1-1.
\]
and hence
$$
f(r_1, r_2, g_1, g_2) = \Psi(x_1, x_2; \rho_1, \rho_2) (1-\beta+\beta A_1)
\left(\beta+\frac{1-\beta}{A_2}\right)
-\frac{\beta(A_1-1)}{Q(0,r_1)}.
$$
Consequently, we have the convenient upper bound
\begin{equation}
\label{eq:objective-reduced}
f(r_1, r_2, g_1, g_2) 
\leq
\Psi^+(1-\beta+\beta A_1)
\left(\beta+\frac{1-\beta}{A_2}\right)
-\frac{\beta(A_1-1)}{Q(0,r_1)}.
\end{equation}

Since \(A(r,g)\) is nondecreasing in both arguments, 
the RHS of \eqref{eq:objective-reduced} is clearly
nonincreasing in \(r_2\) and \(g_2\). It is also nonincreasing in \(g_1\), since with \(r_1,r_2,g_2\)
fixed, its derivative with respect to \(A_1\) is
\[
\beta\left[
\Psi^+\left(\beta+\frac{1-\beta}{A_2}\right)
-\frac1{Q(0,r_1)}
\right]\
\]
which is nonpositive because \(A_2\geq1\), \(\Psi^+\leq1\), and
\(Q(0,r_1)\leq1\).

Finally, we claim that the RHS of \eqref{eq:objective-reduced} is nondecreasing in \(r_1\).  The term $\Psi^+(1-\beta + \beta A_1)(\beta + (1-\beta)/A_2)$ is clearly nondecreasing in $r_1$, since $A_1$ is nondecreasing in $r_1$ while $\Psi^+ \geq 0, A_2 \geq 1$. So it suffices to show that the final term
$$
\frac{A(r_1, g_1) - 1}{Q(0,r_1)}
$$
is nonincreasing in $r_1$. This can be checked mechanically since $A(r,g)$ and $Q(r,g)$ are both algebraically defined.

Overall, the RHS of \eqref{eq:objective-reduced} is nonincreasing in $r_2, g_2, g_1$ and nondecreasing in $r_1$, 
so we may decrease \(g_1,r_2,g_2\) to their lower endpoints
and increase \(r_1\) as far as feasibility permits.
\end{proof}

We divided the search space for each $r_i$ into boxes of size $\epsilon_r = 1/50$ and divided the search space for each $g_i$ into boxes of size $\epsilon_g = 1/1000$. \Cref{psicompbound} was used to obtain an $5$th-order upper bound on $\Psi$ in each box. For efficiency, we precomputed the $\kappa$ upper bounds, which only depend on the pair $(r_1, g_1)$ or $(r_2, g_2$) and not on the full 4-tuple $(r_1, r_2, g_1, g_2)$. The overall computation took about 3 days (wall-clock time) in Mathematica. 

Over the full search space, we found that the largest function value satisfied
$$
f(r_1, r_2, g_1, g_2) < 0.393825...
$$
as claimed.

\section{Derivation of function $F$ in \Cref{alg:online-rounding}}
\label{S-motivation-app}

Here we give the motivation for the function $F$. 
Let us suppose that all the edge demands $g_e$ are infinitesimal. We ignore all quadratic terms in $g$, and just focus on the first-order terms.  When we do this, we get $$
x_e \approx y_e \approx F(r_e) g_e, \qquad \rho_e \approx \alpha x_e
$$

We will leave $F$ as a function to be determined, and $\alpha$ as a free parameter. As a starting point, these should satisfy the properties in \cref{prop: numerical}:
\begin{enumerate}
\item[(a)] $F$ is nonnegative, increasing, and concave-up;
\item[(b)] $\int_{t=0}^1 F(t) \dt = 1/\alpha$;
\item[(c)] $F(1) = 1$.
\end{enumerate}

(Note that, in \cref{prop: numerical}, $\alpha \int_{t=0}^1 F(t) \mathrm{d}t$ and $F(1)$ are very slightly less than $1$; these are artifacts caused by our use of decimal values for the numerical constants.)

By \Cref{poisson-prop} and \Cref{tre2}(3), we can approximate the $\Psi$ function for edges $e,f$ by:
$$
\Psi(x_e, x_f; \rho_e, \rho_f) \approx \binom{2 \alpha}{\alpha}^{-1} 
$$
where we again ignore all high-order terms. By the same argument as in \cref{thm: main apx},  the probability that edge $e$ is matched is then at least
$$
g_e F(r_e) \Bigl( 1 - c \int_{t=0}^{r_e}F(t) \dt \Bigr) \qquad \text{for constant $c = \binom{2 \alpha}{\alpha}^{-1}$}. 
$$

Define $Q(r) = \int_{t=0}^r F(t) \dt$, so the probability that $e$ is matched is $g_e Q'(r_e) (1 - c Q(r_e))$. We want an approximation ratio irrespective of $r_e$; hence, the quantity $Q'(r_e) (1 - c Q(r_e))$ should not depend on $r_e$:
$$
\frac{\mathrm{d}}{\mathrm{d} t} \Bigl[ Q'(t) (1 - c Q(t)) \Bigr] = 0
$$

So we have a second-order differential equation
$$
Q''(t) - c (Q'(t))^2 - c Q(t) Q''(t) = 0
$$
along with boundary conditions $Q(0) = 0, Q'(1) = 1, Q'(t) > 0$, which has the solution
\begin{equation}
\label{afat}
Q(t) = \frac{1 - \sqrt{1 - t \bigl( 2 c \sqrt{c^2+1} -2 c^2  \bigr) }}{c}
\end{equation}

At this point, we can choose $\alpha = 1.2337$ (with corresponding value of $c$) in order to make $\alpha Q(1)$ slightly smaller than one. We can also replace the constant terms in (\ref{afat}), which are functions of $\alpha$, with nearby numerical values to give $Q'(1) < 1$ and $Q(1) < 1/\alpha$. We finally define $F(t) = Q'(t)$.

    \section{Proof of \Cref{thm: sch_apx}}\label{ap: sch}
We complete the verification of \Cref{thm: sch_apx} by following the scheduling framework of \cite{harris2024dependent}, with updated correlation bounds from using the Dirichlet rounding mechanism. We state the main intermediate results and omit routine numerical calculations and technical details that follow directly from the reference.

\noindent \textbf{Note:  For the purposes of this section, the numerical analysis and calculations are non-rigorous.}

\subsection{Upper bound on $\E[Z]$}\
As the first step, we use \Cref{xjjprop} and write the upper bound in terms of the \textit{bonus terms} $B_{k,\ell}$:
\begin{align}\label{eq: upper + bonus terms}
    E[Z|P_{\text{offset}}]\leq Q+\frac{L^2}{2} - \sum_{k,\ell} P_k^2B_{k,\ell},
\end{align}
where $$
 B_{k,\ell} = \frac{1}{2} \Bigl(
 \sum_{j\in \mathcal{C}^*_{k,\ell}}x_j^2 H_j^2 + \sum_{\substack{j,j'\in \calC_{k,\ell}^* \\j\neq j'}} x_j x_{j'} H_jH_{j'} \bigl( 1 - \Psi(x_j, x_{j'}, \rho_j, \rho_{j'}) \bigr) 
 \Bigr).
 $$
\cref{eq: upper + bonus terms} enables us to focus on obtaining a lower bound for the expectation of $B_{k,\ell}$.

\begin{lemma}[cf Lemma 24 of \cite{harris2024dependent}]
\label{wclem}
For a cluster $\mathcal C_{k,\ell}^{*}$, let $\mathcal U^* = \mathcal U_{k,\ell} \cap \mathcal J^*$ denote the set of untruncated jobs in the cluster,  and define parameters
$$
\lambda = \min\{ \tau, x(\mathcal C_{k,\ell}) \}, \quad a = 1 - \frac{1}{\binom{2/\lambda}{1/\lambda}}, \quad r = \sum_{j \in \mathcal U^*} x_j, \quad s = \sum_{j \in \mathcal U^*} x_j H_j
$$

(Note that the definition of $\lambda$ is in terms of the full cluster $\mathcal C_{k,\ell}$, not the partial cluster $\mathcal C^{*}_{k,\ell}$)

If $\mathcal C_{k,\ell}^{*}$ has no truncated job, then 
$$B_{k,\ell}\geq\frac{as^2}{2}.$$

If $\mathcal C_{k,\ell}^{*}$ has a truncated job $j'$ with $y = x_{j'}, d = x_{j'} H_{j'}$, then
$$
B_{k,\ell} \geq  \frac{a s^2}{2} + \frac{d^2}{2} +  \frac{s}{2}  \inf_{x\in (0,r]} \Bigl(x(1-a)+2d (1- \Psi(x,y; x/\tau,  1 - r/\tau)) \Bigr).
$$
\end{lemma}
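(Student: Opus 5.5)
The plan is to bound the pairwise correlation factor $1-\Psi$ from below for every pair of jobs in the cluster, using only the monotonicity of $\Psi$ in $\rho$ and the $0^{\text{th}}$-order bound of \Cref{simp-corr}, and then to reduce $B_{k,\ell}$ to a quadratic form in the quantities $x_jH_j$.

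\textbf{Untruncated pairs.} Take $j,j'\in\mathcal U^*$. Then $\rho_j = x_j/\lambda$ with $\lambda=\min\{\tau,x(\mathcal C_{k,\ell})\}$: if the cluster has a truncated job, $\rho_j=x_j/\tau$ and $\lambda=\tau$; otherwise $\rho_j=x_j/x(\mathcal C_{k,\ell})$ and $x(\mathcal C_{k,\ell})\le\tau$. By \Cref{simp-corr},
$$\Psi(x_j,x_{j'};\rho_j,\rho_{j'})\le \binom{(1-x_j)/\lambda+(1-x_{j'})/\lambda}{(1-x_j)/\lambda}^{-1}.$$
The target is $\Psi\le\binom{2/\lambda}{1/\lambda}^{-1}=1-a$ uniformly, which is the key obstacle. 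The $0^{\text{th}}$-order bound has arguments $(1-x)/\lambda<1/\lambda$, and the binomial is increasing in its arguments, so it points the wrong way. I would therefore aim for a sharper statement: $\Psi(x_1,x_2;x_1/\lambda,x_2/\lambda)\le\Psi(0,0;\cdot)$, i.e.\ that the infinitesimal limit of \Cref{poisson-prop} is the worst case. First I would try aggregation. Splitting an edge into many infinitesimal pieces with the same total $\rho/x$ ratio, via \Cref{dirichlet-subsample} and \Cref{fact:prop-of-dirichlet}(3), together with the concordance monotonicity of \Cref{prop:dirichlet-concordance}, should show that $\Psi$ with $\lambda_i=\rho_i/x_i$ fixed is nondecreasing as $x_i\downarrow0$. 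This mirrors how \cite{harris2024dependent} handles Lemma 24. If that argument fails, I would fall back to the series of \Cref{mainestlemma0} and compare term by term. Granting this step, the diagonal terms $x_j^2H_j^2\ge a\,x_j^2H_j^2$ combine with the off-diagonal terms to give
$$2B_{k,\ell}\ge a\Bigl(\sum_{j\in\mathcal U^*}x_jH_j\Bigr)^2=as^2$$
when there is no truncated job.

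\textbf{Truncated job $j'$.} Here $y=x_{j'}$, $d=yH_{j'}$ and $\rho_{j'}=1-r_{\mathrm{full}}/\tau$, where $r_{\mathrm{full}}=x(\mathcal U_{k,\ell})\ge r$. By $\rho$-monotonicity we may use $1-r/\tau$ instead, which can only increase $\Psi$. $B_{k,\ell}$ has three additional contributions:
\begin{itemize}
\item the diagonal term $d^2/2$;
\item the untruncated part, which is at least $as^2/2$ as above;
\item the cross terms $\sum_{j\in\mathcal U^*}x_jH_j\,d\,(1-\Psi(x_j,y;x_j/\tau,1-r/\tau))$, counted twice in the ordered sum and halved.
\end{itemize}
The untruncated analysis also leaves some unused slack, namely the part of $\sum_j x_j^2H_j^2$ not absorbed into $a s^2$, which is $(1-a)\sum_j x_j^2H_j^2$.

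Writing $\sum_j x_j^2H_j^2 \ge \sum_j x_jH_j\cdot x_j$ (since $H_j\ge1$), the leftover plus the cross terms become $\frac12\sum_j x_jH_j\bigl(x_j(1-a)+2d(1-\Psi(x_j,y;\cdot))\bigr)$. Each $x_j\in(0,r]$, so each bracket is at least the infimum. Factoring out $\sum_j x_jH_j=s$ then gives the stated bound.
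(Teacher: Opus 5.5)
\textbf{Gap: the uniform bound on $\Psi$.} Your bookkeeping matches the paper's proof exactly: the diagonal terms are absorbed into $a s^2$, the slack $(1-a)\sum_j x_j^2H_j^2\ge(1-a)\sum_j x_jH_j\,x_j$ comes from $H_j\ge1$, and the truncated-job cross terms are bounded by the infimum over $x\in(0,r]$.

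The gap is the step you single out, $\Psi(x_1,x_2;x_1/\lambda,x_2/\lambda)\le\binom{2/\lambda}{1/\lambda}^{-1}=1-a$. The route you propose cannot work, because the statement it relies on is false in general. That statement is that $\Psi$ with $\rho_i/x_i$ held fixed is nondecreasing as $x_i\downarrow 0$, so that the limit in \Cref{poisson-prop} is the worst case.
\begin{itemize}
\item Take $\rho_i/x_i=1$ and $x_1=x_2=1/2$. This is admissible, since $\rho_1+\rho_2=1$. The Dirichlet vector is $\operatorname{Dir}(1/2,1/2,0)$, so $T_2=1-T_1$ and $A_2=1-A_1$. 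This gives $\Psi=4\,\E[A_1(1-A_1)]=2/3$, while the limiting value is $\binom{2}{1}^{-1}=1/2$.
\item Likewise, $\rho_i/x_i=3/2$ with $x_1=x_2=1/3$ gives $\Psi=9\,\Be(3,3)=0.3>\binom{3}{3/2}^{-1}\approx 0.2945$.
\end{itemize}
These examples correspond to cluster parameters $\lambda=1$ and $\lambda=2/3$, both above $\tau=0.608$. The desired inequality holds only because of the restriction $\lambda\le\tau$, and even then with little room: at $\lambda=\tau$ and $x_1=x_2=\tau/2$ one gets $\Psi\approx0.230$ against $1-a\approx0.251$.

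So an aggregation or concordance argument that is blind to the value of $\rho_i/x_i$ would prove a false statement. Any valid argument has to be quantitative in $\lambda$.

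The paper does not prove this step rigorously either. It asserts the bound on the strength of a non-rigorous numerical evaluation of the $3^{\text{rd}}$-order bound of \Cref{mainestlemma}, over $\lambda\in[0,\tau]$ and $x_1+x_2\le\lambda$. This is consistent with that appendix being explicitly flagged as non-rigorous.

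Your term-by-term fallback via \Cref{mainestlemma0} runs into the same issue. For $x_i>0$ the leading coefficient $\alpha_{0,0}$ already exceeds $1-a$. The inequality can therefore only come from a quantitative accounting of how much $\kappa$-mass sits at higher indices, which is exactly what the paper's numerics do.

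\textbf{Smaller point: the direction of $\rho$-monotonicity.} Your treatment of $\rho_{j'}$ has the monotonicity reversed. $\Psi$ is nonincreasing in $\rho$, and $r_{\mathrm{full}}\ge r$ gives $1-r_{\mathrm{full}}/\tau\le 1-r/\tau$. Replacing $\rho_{j'}$ by $1-r/\tau$ can therefore only \emph{decrease} $\Psi$, which does not give a valid lower bound on $1-\Psi$.

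What is actually true is $r_{\mathrm{full}}=r$. The truncated job is the last job of its cluster in decreasing Smith-ratio order. So if it lies in $\mathcal J^*$, then all of $\mathcal U_{k,\ell}$ does too, and $\rho_{j'}=1-r/\tau$ exactly.
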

\begin{proof} 
Let $\mathcal{T}^* = \mathcal T_{k,\ell} \cap \mathcal{J}^*$ denote the set of truncated jobs in the cluster $\mathcal C_{k,\ell}^{*}$. Since $|\mathcal{T}^*| \leq 1$, we can expand the formula for $B_{k,\ell}$ to get:
\begin{align*}
2B_{k,\ell} &= \sum_{j\in \mathcal{T}^*}x_j^2 H_j^2 + \sum_{j\in \mathcal{U}^*} \Bigl( x_j^2 H_j^2 +\sum_{j'\in \calU^*:j'\neq j} \bigl( x_j x_{j'} H_j H_{j'} (1- \Psi(x_j, x_{j'}, \rho_j, \rho_{j'}) \bigr)  \\
& \qquad +    2\sum_{j' \in \mathcal{T}^*}  x_j x_{j'} H_j H_{j'} (1 - \Psi(x_j, x_{j'}, \rho_j, \rho_{j'})) \Bigr)
\end{align*}

Here $\rho_j = x_j/\lambda$ for each untruncated job. Based on non-rigorous numerical analysis of the $3^{\text{rd}}$-order approximation of $\Psi$ (specifically, using \Cref{mainestlemma} with $k_1,k_2 = 3$) over range $\lambda\in[0,\tau]$ and  $x_1,x_2$ with $x_1 + x_2 \leq \lambda$, the following bound appears to hold:
$$
\Psi(x_1, x_2, x_1/\lambda, x_2/\lambda) \leq  \binom{2/\lambda}{1/\lambda}^{-1} = 1-a.
$$

Now consider a job $j \in \mathcal{U}^*$. We can bound the sum over $j' \in \cal U^*$ as: 
\begin{align*}
&\sum_{j'\in \calU^*:j'\neq j} \bigl( x_j x_{j'} H_j H_{j'} (1 - \Psi(x_j, x_{j'}, \rho_j, \rho_{j'})) \geq \sum_{j'\in \calU^*:j'\neq j} a  x_j x_{j'}  H_j H_{j'} = a x_j H_j s - a x_j^2 H_j^2
\end{align*}

Thus, if $\mathcal T^* = \emptyset$, we sum over all jobs to get:
\begin{align*}
2B_{k,\ell} &\geq \sum_{j\in \mathcal{U}^*} \bigl( x_j^2 H_j^2 + a x_j H_j s - a x_j^2 H_j^2 \bigr) \geq \sum_{j\in \mathcal{U}^*_{k,\ell}} x_j H_j \bigl( a s \bigr) = a s^2
\end{align*}
where the last inequality holds since $a \leq 1$ and $H_j \geq 1$. 

Otherwise, if there is a truncated job $j'$, then $\rho_{j'} = 1-r/\tau$ and $\lambda = \tau$ and $\rho_j = \frac{x_j}{\tau}$. This job $j'$ has 
$$
x_j x_{j'}  (1 - \Psi(x_j, x_{j'}, \rho_j, \rho_{j'})) H_j H_{j'}  = d x_j H_j \bigl( 1 - \Psi( x_j, y, x_j/\tau, 1 - r/\tau) \bigr)
$$
and then summing over all jobs $j \in \mathcal U^*$ gives:
\begin{align*}
2B_{k,\ell} &\geq d^2 + \sum_{j\in \mathcal{U}^*} \Bigl( x_j^2 H_j^2 + a x_j H_j s - a x_j^2 H_j^2 + 2 x_j H_j d \bigl( 1 - \Psi(x_j, y, x_j/\tau, 1 - r/\tau) \bigr) \Bigr) \\
&\geq d^2 + \sum_{j\in \mathcal{U}^*} x_j H_j \Bigl( x_j H_j + a s - a x_j H_j  +  2 d \bigl( 1 - \Psi( x_j, y, x_j/\tau, 1 - r/\tau) \bigr) \Bigr) \\
&\geq d^2 + \sum_{j\in \mathcal{U}^*} x_j H_j \Bigl( a s + (1-a) x_j +  2 d \bigl( 1 - \Psi( x_j, y, x_j/\tau, 1 - r/\tau) \bigr) \Bigr)
\end{align*}
The result now follows since $\sum_{j \in \mathcal U^*} x_j H_j = s$ and each term $(1-a) x_j + 2  d ( 1 - \Psi( x_j, y, x_j/\tau, 1 - r/\tau)) $ is at least $\inf_{x\in (0,r]} \bigl(x(1-a)+2 d (1- \Psi(x, y; x/\tau,  1 - r/\tau)) \bigr)$.
\end{proof}

This gives two main estimates for cluster bonuses:
\begin{proposition}\label{prop:non-leftover}
Under the assumptions of \Cref{wclem}, a non-leftover cluster $C^*_{k, \ell}$ satisfies
    \begin{align}\label{eq: c1}
        B_{k,\ell}\geq c_1\sum_{j\in \calC_{k,\ell}^*}x_j(H_j-\kappa)\qquad\text{for $c_1=0.684$ and $\kappa = 0.778.$}
    \end{align}
\end{proposition}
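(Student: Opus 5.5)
The plan is to reduce \Cref{prop:non-leftover} to \Cref{wclem} plus a low-dimensional numerical check. First I would record the structure of a non-leftover cluster. Since $\mathcal J^*$ is a prefix of each class in Smith-ratio order and clusters are formed greedily in that same order, every cluster other than the last one opened in $\mathcal P_k^*$ lies entirely inside $\mathcal J^*$. Hence $\mathcal C^*_{k,\ell}=\mathcal C_{k,\ell}$, and this cluster was closed either by crossing $\theta$ or by crossing $\tau$. Writing $r=\sum_{j\in\mathcal U^*}x_j$ and $s=\sum_{j\in\mathcal U^*}x_jH_j$, we have $H_j\in[1,\pi]$ and therefore $r\le s\le \pi r$. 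The right-hand side of (\ref{eq: c1}) is $c_1(s-\kappa r)$, plus $c_1(d-\kappa y)$ if there is a truncated job.

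\emph{Case 1: no truncated job.} Then $\theta\le r=x(\mathcal C_{k,\ell})\le\tau$, so $\lambda=r$ and $a=a(r)=1-\binom{2/r}{1/r}^{-1}$. By \Cref{wclem} it suffices to show
\[
\tfrac{a(r)}{2}s^2-c_1 s+c_1\kappa r\ \ge 0 \qquad \text{for } s\ge r,\ r\in[\theta,\tau].
\]
Minimizing the quadratic over $s$ (its minimizer $c_1/a$ is about $0.87$, so it is feasible) reduces this to $c_1\kappa r\ge c_1^2/(2a(r))$, i.e.\ $2\kappa\, r\, a(r)\ge c_1$. I would check that $r\,a(r)$ is increasing on $[\theta,\tau]$, so only $r=\theta$ needs verification. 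A rough evaluation gives $a(0.56)\approx0.786$, and then both sides are about $0.298$. So the constants $c_1,\kappa$ are essentially tuned to make this endpoint tight, and the Gamma-function evaluation must be done carefully, say with interval arithmetic.

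\emph{Case 2: a truncated job $j'$.} Here $\lambda=\tau$ and $r+y>\tau\ge r$, and \Cref{wclem} gives the bound with the extra terms $d^2/2$ and $\tfrac{s}{2}\inf_{x\in(0,r]}\bigl(x(1-a)+2d(1-\Psi(x,y;x/\tau,1-r/\tau))\bigr)$. We need this to be at least $c_1(s-\kappa r)+c_1(d-\kappa y)$. The left side is convex in $(s,d)$ apart from the infimum term, which is linear in $s$ once $d$ is fixed. So for each $(r,y)$ I would minimize over $s\in[r,\pi r]$ and $d\in[y,\pi y]$ in closed form, leaving a two-parameter problem in $(r,y)$ with $r\le\tau<r+y\le 1$. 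To lower-bound the infimum I would use \Cref{mainestlemma}/\Cref{gratta1} to upper-bound $\Psi(x,y;x/\tau,1-r/\tau)$ uniformly in $x\in(0,r]$. By $\Psi$-monotonicity it is enough to take the smallest admissible $\rho$'s on each box, and the cruder $0^{\text{th}}$-order bound of \Cref{simp-corr} might already suffice away from the extremes.

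The main obstacle is Case 2. The $\Psi$ term involves a non-infinitesimal $y$ and a large $\rho_{j'}$, so the Poisson limit of \Cref{poisson-prop} does not apply, and one has to grid over $(r,y)$ and bound $\Psi$ via the $k^{\text{th}}$-order approximations. Consistent with the paper's stated non-rigorous treatment of this section, I would carry this out as a numerical optimization rather than a certified computation. If the truncated case turns out to be slack, I would first try simply dropping the $\Psi$-benefit and using $d^2/2$ together with the $a s^2/2$ term.
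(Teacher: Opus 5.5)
Your Case 1 is the paper's argument. Minimizing the quadratic in $s$ is equivalent to the paper's bound $t^2/(2(t-\kappa))\ge 2\kappa$, and both reduce to the same check $2\kappa\theta\,a(\theta)\approx 0.6853\ge c_1$. Your plan for Case 2 (a numerical optimization using $k$-th order upper bounds on $\Psi$) is also the paper's.

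The gap is the domain you would optimize over in Case 2. You only impose $r\le\tau<r+y$, but you need the structural fact $r<\theta$. The truncated job was added to a cluster that was still open, and the $\theta$-rule would already have closed that cluster if its weight had reached $\theta$. Without $r<\theta$, the inequality you reduce to is false, so your numerical check cannot succeed. As $r\to\tau$ we get $\rho_{j'}=1-r/\tau\to0$, so $A_{j'}$ decouples from the other coordinates and $\Psi(x,y;x/\tau,\rho_{j'})\to1$. Letting $x\to0$, the infimum term then vanishes. Take $y=1-\tau$, $s=c_1/a(\tau)\approx0.913\in[r,\pi r]$ and $d=c_1\in[y,\pi y]$, with $a(\tau)\approx0.7494$. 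This gives
\[
\tfrac{a s^2}{2}+\tfrac{d^2}{2}-c_1\bigl(s+d-\kappa(r+y)\bigr)\;\longrightarrow\; c_1\kappa-\tfrac{c_1^2}{2}\Bigl(1+\tfrac{1}{a(\tau)}\Bigr)\approx 0.532-0.546<0 .
\]
Equivalently, the paper's ratio is about $0.667<c_1$ at that point. The constraint $r<\theta$ keeps $\rho_{j'}\ge 1-\theta/\tau\approx0.079$. That is what makes the coupling between $j'$ and the untruncated jobs strong enough, and why the paper optimizes over $r\in[0,\theta]$.

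Two smaller problems. First, your upper constraint $r+y\le 1$ is not justified. The weight of a cluster on one machine is not bounded by $1$; only $y=x_{j'}\le 1$ is. So $r+y$ can be close to $1+\theta$, and that region must be covered too. The correct domain is the paper's: $r\in[0,\theta]$, $y\in[\tau-r,1]$, $s\in[r,\pi r]$, $d\in[y,\pi y]$.

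Second, your fallback of dropping the $\Psi$-benefit cannot work. With the infimum term replaced by $0$, take $r=0.5$, $y=0.15$, $s=c_1/a(\tau)$ and $d=\pi y$. Then $\frac{as^2}{2}+\frac{d^2}{2}\approx0.54$, while $c_1\bigl(s+d-\kappa(r+y)\bigr)\approx0.74$. So the negative correlation between the truncated job and the untruncated ones is genuinely needed in Case 2.
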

\begin{proof}
Suppose $\calC_{k,\ell}^{*}$ has no truncated job. Then $\sum_{j\in \calC_{k,\ell}^*}x_j(H_j-\kappa) = s - \kappa r$ and $B_{k,\ell}\geq\frac{as^2}{2}$ with $a = 1- \frac{1}{\binom{2/r}{1/r}}$ and $r \geq \theta$. Let $t = s/r$; note that since $H_j \in [1, \pi]$ we also have $t \in [1,\pi]$. We have:
\begin{align*}
\frac{B_{k,\ell}}{s - \kappa r} \geq \frac{(1 - \frac{1}{\binom{2/r}{1/r}}) s^2}{2(s - \kappa r)} = \frac{(1 - \frac{1}{\binom{2/r}{1/r}}) r t^2}{2(t - \kappa)}
\end{align*}

By straightforward calculus, we see that $\frac{t^2}{2 (t - \kappa)} \geq 2 \kappa$. Also, the expression $(1 - \frac{1}{\binom{2/r}{1/r}}) r$ is increasing in $r$, so it is at least $(1 - \frac{1}{\binom{2/\theta}{1/\theta}}) \theta$. Overall, we have
\begin{align*}
\frac{B_{k,\ell}}{s - \kappa r} \geq  (1 - \frac{1}{\binom{2/\theta}{1/\theta}}) \theta) \cdot 2 \kappa = 0.68532025702558509220.... \geq c_1.
\end{align*}

Next, suppose $\calC_{k,\ell}^{*}$ has a truncated job. Then $\sum_{j\in \calC_{k,\ell}^*}x_j(H_j-\kappa) = s + d - \kappa(r + y)$. By \Cref{wclem}, 
\begin{align*}
    \frac{B_{k, \ell}}{s + d - \kappa(r + y)} \geq \inf{\substack{r \in [0, \theta],x\in(0,r]\\ y \in [\tau -r, 1] \\  s \in [r, r\pi], d \in [y, y\pi]}} \frac{\frac{as^2}{2} + \frac{d^2}{2} + \frac{s}{2} \left( x(1-a)+2d\left( 1 - \Psi(x, y; x/\tau,  1 - r/\tau)\right)\right)}{s + d - \kappa(r + y)}
\end{align*}
Using the $3^{rd}$-order approximation for $\Psi$, 
it is shown numerically to be at least $0.68707 > c_1$. 
\end{proof}

\begin{proposition}\label{cor:leftover}
Under the assumptions of \Cref{wclem}, a leftover cluster $\mathcal C_{k,\ell}^*$ satisfies $$
B_{k,\ell} \geq c_2 \Bigl( \sum_{j \in \calC_{k,\ell}^*} x_j H_j \Bigr)^2
\qquad \text{ for constant $c_2 = 0.3747129$} 
$$
\end{proposition}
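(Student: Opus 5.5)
We prove \Cref{cor:leftover} by the same case split as in \Cref{prop:non-leftover}, according to whether $\calC_{k,\ell}^*$ contains a truncated job, with \Cref{wclem} as the starting point. The difference is that the target is now quadratic in $\sum_{j \in \calC_{k,\ell}^*} x_j H_j$, so no lower bound on the cluster mass $x(\calC_{k,\ell})$ is available. A leftover cluster may be very light, or it may be a non-leftover cluster truncated by $\mathcal J^*$.

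\emph{Case 1: no truncated job in $\calC^*_{k,\ell}$.} Here $\sum_{j\in\calC^*_{k,\ell}} x_j H_j = s$, and \Cref{wclem} gives $B_{k,\ell}\ge a s^2/2$ with $a = 1-\binom{2/\lambda}{1/\lambda}^{-1}$ and $\lambda = \min\{\tau, x(\calC_{k,\ell})\}\le\tau$. The map $\lambda\mapsto 1-\binom{2/\lambda}{1/\lambda}^{-1}$ is decreasing in $\lambda$: as $\lambda\to 0$ the binomial blows up and $a\to 1$, and as $\lambda$ grows $a$ shrinks. Hence the worst case is $\lambda=\tau$, and we get
\[
B_{k,\ell}\ge \tfrac12\Bigl(1-\binom{2/\tau}{1/\tau}^{-1}\Bigr)s^2 .
\]
We then check numerically that $\tfrac12\bigl(1-\binom{2/\tau}{1/\tau}^{-1}\bigr)\ge c_2$ for $\tau=0.608$, which should be the source of the constant $0.3747\ldots$. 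I would verify the monotonicity in $\lambda$ directly from the Gamma-function form of the binomial, or numerically over $(0,\tau]$.

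\emph{Case 2: a truncated job $j'$ is present, with $y=x_{j'}$ and $d=x_{j'}H_{j'}$.} The target becomes $c_2(s+d)^2$, and \Cref{wclem} gives
\[
B_{k,\ell}\ge \frac{a s^2}{2}+\frac{d^2}{2}+\frac s2\inf_{x\in(0,r]}\Bigl(x(1-a)+2d\bigl(1-\Psi(x,y;x/\tau,1-r/\tau)\bigr)\Bigr),
\]
with $\lambda=\tau$, so $a$ is the same constant as in Case 1. Since $a/2\ge c_2$ and $1/2\ge c_2$, it suffices that the cross term dominates $2c_2 sd$, that is,
\[
\inf_{x\in(0,r]}\Bigl(x(1-a)+2d\bigl(1-\Psi(x,y;x/\tau,1-r/\tau)\bigr)\Bigr)\ \ge\ 4c_2\, d .
\]
If this fails, we instead minimize the full ratio
\[
\frac{\frac{as^2}{2}+\frac{d^2}{2}+\frac s2(\cdots)}{(s+d)^2}
\]
over $r\in[0,\tau]$, $y\in[\tau-r,1]$, $s\in[r,\pi r]$, $d\in[y,\pi y]$ and $x\in(0,r]$. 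Because the expression is a quadratic form in $(s,d)$, for fixed $r,y,x$ the minimizing ratio $s/d$ can be found in closed form, which reduces the search to a few dimensions.

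\emph{Main obstacle.} The hard step is bounding $1-\Psi$ in Case 2, where $\rho_{j'}=1-r/\tau$ can be large and $x$ can be tiny. I would first use $\rho$-monotonicity of $\Psi$ together with the $3$rd-order approximation of \Cref{mainestlemma} (as in \Cref{prop:non-leftover}) to bound $\Psi$ on a grid, and then perform the numerical minimization. Per the paper's convention for \Cref{thm: sch_apx}, this step is non-rigorous.
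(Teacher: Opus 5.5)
Your Case 1 is exactly the paper's proof. Since $\lambda\le\tau$, you get $a\ge 1-\binom{2/\tau}{1/\tau}^{-1}$, hence $B_{k,\ell}\ge \frac12\bigl(1-\binom{2/\tau}{1/\tau}^{-1}\bigr)s^2$, and this constant is $0.37471\ldots\ge c_2$.

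The gap is Case 2. You treat a truncated job inside $\calC^*_{k,\ell}$ as a real possibility to be settled numerically. In that case the inequality is actually false, so neither of your routes can work. Take $y=x_{j'}=1$, which lies in your range $y\in[\tau-r,1]$. Then $\Psi(x,1;\rho_1,\rho_2)=\E[A_1^{1/x-1}]/x=1$ for every $x$. So the infimum in \Cref{wclem} equals $\inf_{x\in(0,r]}x(1-a)=0$, and the bound collapses to $\frac12(as^2+d^2)$. Its ratio to $(s+d)^2$ is minimized at $s/d=1/a$, with value $a/(2(1+a))\approx0.214<c_2$. This point is feasible, e.g.\ $r=1/2$, $d=1$, $s=1/a\le\pi r$.

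This is not a weakness of \Cref{wclem}. Take many tiny untruncated jobs. By \Cref{poisson-prop}, the pairwise $\Psi$ among them tends to $\binom{2/\tau}{1/\tau}^{-1}=1-a$. The diagonal terms vanish, and the cross terms with $j'$ are exactly zero. So the true $B_{k,\ell}$ approaches $\frac12(as^2+d^2)$. In particular, your sufficient condition $\inf(\cdots)\ge4c_2d$ would need $\Psi\le1-2c_2\approx0.25$ as $x\to0$, which fails at $y=1$.

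The missing idea is the structural fact that makes Case 2 vacuous; it is the first line of the paper's proof. In \textsc{Clustering}, a truncated job is the job whose addition pushes the cluster mass above $\tau$. The cluster is closed immediately after it and a new cluster is opened, so a truncated job is always the last job of its cluster. Since $\calP_k^*$ is a prefix of $\calP_k$ in Smith-ratio order, a truncated job in $\mathcal J^*$ means its whole cluster was completed inside $\calP_k^*$ and a later cluster was opened. Hence that cluster is not the leftover (last-opened) one. If instead the leftover cluster is a full cluster cut off by $\mathcal J^*$, its truncated job (if any) is its final job and lies outside $\mathcal J^*$. Either way $d=y=0$, and your Case 1 alone is the complete proof.
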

\begin{proof}
Let parameters $\lambda,a,r,s,y,d$ be as in \Cref{wclem}.
The leftover cluster cannot contain a truncated job in $\mathcal C^*_{k,\ell}$. So $d = y = 0$ and $s = \sum_{j \in \mathcal C^*_{k,\ell}} x_j H_j$. Also note that $\lambda \leq \tau$ and hence $a \geq 1 - \binom{2/\tau}{1/\tau}^{-1}$. We have
$$
B_{k,\ell} \geq \Bigl( 1 - \frac{1}{\binom{2/\tau}{1/\tau}} \Bigr) \cdot \frac{ \bigl( \sum_{j \in \calC_{k,\ell}^*} x_j H_j \bigr)^2 }{2}
$$
and observe that $\bigl( 1 - \frac{1}{\binom{2/\tau}{1/\tau}} \bigr) \cdot \frac{1}{2} =0.374712911362113\ldots\ge c_2$.
\end{proof}    
    
\begin{lemma}[cf Proposition 26 of \cite{harris2024dependent}]\label{lem: c3}
Let
\[
R_k := \sum_{j \in \mathcal{C}_{k,\mathrm{left}}^*} x_j,
\qquad
S_k := \sum_{j \in \mathcal{C}_{k,\mathrm{left}}^*} x_j H_j,
\qquad
T_k := \frac{S_k}{R_k},
\]
and define
\[
f(r,t) := \max\{0,\; c_1 (t - \kappa) - c_2 r t^2\},
\qquad
D := \sum_k P_k^2 R_k f(R_k, T_k).
\]
Then
\[
\E[Z] \le \E[D] + c_3 Q + \frac{L^2}{2},
\qquad
\text{where } c_3 := 0.814462.
\]
\end{lemma}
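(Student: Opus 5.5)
We start from \cref{eq: upper + bonus terms}, which gives $\E[Z\mid P_{\text{offset}}] \le Q + \frac{L^2}{2} - \sum_{k,\ell} P_k^2 B_{k,\ell}$. It therefore suffices to show
\[
\E\Bigl[\sum_{k,\ell} P_k^2 B_{k,\ell}\Bigr] \ \ge\ (1-c_3)\,Q - \E[D].
\]
We work class by class. Fix $P_{\text{offset}}$ and a processing-time class $\mathcal P_k^*$, and split its clusters into the non-leftover clusters and the single leftover cluster $\mathcal C^*_{k,\mathrm{left}}$.

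\textbf{Non-leftover clusters.} For these we apply \Cref{prop:non-leftover}, obtaining $B_{k,\ell} \ge c_1\sum_{j\in\mathcal C^*_{k,\ell}} x_j(H_j-\kappa)$.

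\textbf{Leftover cluster.} Here we take the better of two bounds. \Cref{cor:leftover} gives $B_{k,\mathrm{left}} \ge c_2 S_k^2 = c_2 R_k^2 T_k^2$. Trivially, $B_{k,\mathrm{left}}\ge 0$. Hence
\[
B_{k,\mathrm{left}} \ \ge\ c_1 R_k(T_k-\kappa) - R_k\bigl(c_1(T_k-\kappa) - c_2 R_k T_k^2\bigr) \ \ge\ c_1 R_k(T_k-\kappa) - R_k f(R_k,T_k).
\]
The last step uses $f \ge c_1(t-\kappa)-c_2rt^2$. The nonnegativity of $B$ is needed only implicitly here: $f\ge 0$ makes the deficit term correct when $c_2 r t^2$ exceeds $c_1(t-\kappa)$.

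\textbf{Summing over the class.} Adding both cases over all clusters of $\mathcal P_k^*$, and noting that $c_1R_k(T_k-\kappa)=c_1\sum_{j\in \mathcal C^*_{k,\mathrm{left}}}x_j(H_j-\kappa)$, we get
\[
\sum_\ell P_k^2 B_{k,\ell} \ \ge\ c_1\sum_{j\in\mathcal P_k^*} x_j P_k^2 (H_j-\kappa) - P_k^2 R_k f(R_k,T_k).
\]
Summing over $k$, the second term contributes exactly $-D$. The first term equals $c_1\sum_j x_j\bigl(p_j P_k - \kappa P_k^2\bigr)$, where $P_k = p_j/H_j$.

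\textbf{Averaging over the offset.} The remaining step, which is the only genuinely new computation, is to average over $P_{\text{offset}}\sim\mathrm{Unif}[0,1]$. For a fixed job $j$, $\log_\pi H_j$ is uniform on $[0,1)$, so $H_j = \pi^{U}$ with $U$ uniform. Then
\[
\E\bigl[p_jP_k-\kappa P_k^2\bigr] = p_j^2\,\E\bigl[\pi^{-U}-\kappa\pi^{-2U}\bigr] = p_j^2\Bigl(\frac{1-1/\pi}{\ln \pi} - \kappa\,\frac{1-\pi^{-2}}{2\ln\pi}\Bigr).
\]
Summing over $j$ gives $\E[\cdot] = c_1\gamma\, Q$ with this constant $\gamma$. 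We then need $1 - c_1\gamma \le c_3$. With $\pi=4.5$, $\kappa=0.778$ and $c_1=0.684$, we have $\gamma\approx(0.7778-0.3761)/1.5041\approx0.2671$, so $1-c_1\gamma\approx 0.8173$.

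This is slightly larger than $c_3=0.814462$. So the main obstacle is that the crude accounting loses a little, and I would need to recover the slack as in Proposition 26 of \cite{harris2024dependent}. Two sources of slack are available. First, jobs with $H_j<\kappa$ cannot occur, since $H_j\ge 1$, so there is no loss from negative terms there. Second, and more importantly, the first-order term $\sum_j x_j^2 H_j^2$ inside $B_{k,\ell}$ was partly discarded; I would retain it, or exploit the leftover cluster's gain when $f=0$. Concretely, I would first recheck $\gamma$ numerically. If the gap persists, I would keep the maximum $\max\{c_2R_k^2T_k^2,\,c_1R_k(T_k-\kappa)\}$ more carefully, which is what the definition of $f$ encodes, and verify numerically that the resulting constant is at most $c_3$.

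The expectation over the Dirichlet randomness has already been absorbed into $B_{k,\ell}$ by \cref{eq: upper + bonus terms}. Taking expectation over $P_{\text{offset}}$ of the per-offset inequality therefore gives $\E[Z]\le \E[D]+c_3Q+\frac{L^2}{2}$.
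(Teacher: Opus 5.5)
Your argument is the paper's argument. You bound the non-leftover clusters by $c_1\sum x_j(H_j-\kappa)$, and bound the leftover cluster so that it contributes $c_1R_k(T_k-\kappa)-R_kf(R_k,T_k)$. You then multiply by $P_k^2=p_j^2/H_j^2$, sum, and average over $P_{\text{offset}}$ using that $\log_\pi H_j$ is uniform on $[0,1)$. Your closed form $\gamma=\frac{1-1/\pi}{\ln\pi}-\kappa\frac{1-\pi^{-2}}{2\ln\pi}$ is also correct; it equals the paper's $\frac{\kappa-2\pi+2\pi^2-\kappa\pi^2}{2\pi^2\log\pi}$.

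The only error is in the final arithmetic, and it alone creates the ``gap'' you describe. With $\pi=4.5$ we have $\pi^{-2}=1/20.25\approx 0.04938$. So $\kappa(1-\pi^{-2})/2\approx 0.778\cdot 0.47531\approx 0.36979$, not $0.3761$. Hence $\gamma\approx(0.77778-0.36979)/1.50408\approx 0.271254$, and $1-c_1\gamma\approx 1-0.185538\approx 0.81446197\le c_3=0.814462$. The constant $c_3$ is exactly this quantity rounded up; the paper records $0.814461971\ldots$. So the crude accounting already closes the proof. None of your slack-recovery ideas are needed: retaining $\sum_j x_j^2H_j^2$, or refining the maximum in the leftover cluster. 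As written, your proposal stops short of a conclusion only because of this slip; with the corrected number it is complete.

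A minor remark: in the leftover case you never actually use $B_{k,\mathrm{left}}\ge 0$. Since $c_1R(T-\kappa)-Rf(R,T)=\min\{c_1R(T-\kappa),\,c_2R^2T^2\}$, the bound $B_{k,\mathrm{left}}\ge c_2S_k^2$ from the leftover-cluster proposition suffices on its own. This is exactly what your displayed chain of inequalities uses.
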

\begin{proof}
Summing over clusters in $\calP_k^*$, and using \Cref{prop:non-leftover} or \Cref{cor:leftover} for each cluster, gives: 
\[
\sum_{\ell} B_{k,\ell}
\;\ge\;
- R_k f(R_k,T_k) + 
c_1 \sum_{j \in \mathcal{P}_k^*} x_j (H_j - \kappa).
\]
Substituting this bound into \eqref{eq: upper + bonus terms} gives
\[
\E[Z \mid P_{\mathrm{offset}}]
\le
Q + \frac{L^2}{2}
+ D
- c_1 \sum_j x_j p_j^2 \frac{H_j - \kappa}{H_j^2}.
\]

Taking expectation over the random variable $P_{\mathrm{offset}}$,
\[
\E\!\left[(H_j - \kappa)/H_j^2\right]
=
\frac{1}{\log \pi} \int_{1}^{\pi} \frac{h - \kappa}{h^3}\,\mathrm{d}h
=
\frac{\kappa - 2\pi + 2\pi^2 - \kappa \pi^2}{2 \pi^2 \log \pi}.
\]
Therefore,
\[
\E[Z]
\le
\E[D]
+
\Bigl(
1
-
c_1 \cdot
\frac{\kappa - 2\pi + 2\pi^2 - \kappa \pi^2}{2 \pi^2 \log \pi}
\Bigr) Q
+
\frac{L^2}{2}.
\]

Then
\[
c_3 \geq
1
-
c_1 \cdot
\frac{\kappa - 2\pi + 2\pi^2 - \kappa \pi^2}{2 \pi^2 \log \pi} 
=0.814461971070319\ldots\
\]
completes the proof.
\end{proof}

\subsection{Bound on $\mathrm{LB}$}
The lower bound derivation reuses the structural inequalities of \cite{harris2024dependent}, with slightly different numerical constants. We therefore state the main results and omit the corresponding numerical calculations.

\begin{lemma}\label{lem: sch tech}
For parameters \(\beta := 1.93\) and \(\kappa, c_1, c_2, c_3\) as above, define
\[
g(r, t, h) := h \cdot \Biggl( 1 - \sqrt{ 1 - \frac{\beta (h-\kappa) f(r, t)}{h^2 (t-\kappa)} } \Biggr), \qquad
g_k(h) := g(R_k, T_k, h).
\]
\[
A := \sum_k P_k \sum_{j \in \mathcal C^{*}_{k,\mathrm{left}}} x_j g_k(H_j), \qquad  q := Q + \E[D]/c_3
 \]
 
Then, for a fixed parameter $\gamma := 0.00594$, we have
\begin{enumerate}
\item[(i)] (cf Corollary~21 of \cite{harris2024dependent}) \ \ \ \ $\mathrm{LB} \ge \max\!\left\{Q,\; \tfrac12(Q + L^2)\right\}.$

\item[(ii)] (cf Lemma~28 of \cite{harris2024dependent}) \ \ \ \ \ \ $\mathrm{LB} \ge \tfrac12\bigl(Q + \beta D + (L - A)^2\bigr).$

\item[(iii)] (cf Lemma~30 of \cite{harris2024dependent}) \ \ \ \ \ \ \thinspace  $\E[A^2] \le \gamma \E[D] + c_5 Q$ where $c_5 = 0.004833.$

\item[(iv)] (cf Proposition~33 of \cite{harris2024dependent}) \ \ \ $\mathrm{LB}
\;\ge\;
\frac{\beta c_3 q + \max\{0,\; L - c_6 \sqrt{q}\}^2}{\beta c_3 + 1}$ where  $c_6 := 0.0695551$.
\end{enumerate}
\end{lemma}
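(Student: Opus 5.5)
The plan is to transplant, essentially verbatim, the four corresponding arguments of \cite{harris2024dependent}. Those arguments never touch the rounding mechanism: they use only the SDP/LP constraints, the clustering structure, and the definitions of $f$, $D$, $A$, $g$. The only inputs that change are the constants $\kappa, c_1, c_2, c_3$ coming from \Cref{prop:non-leftover}, \Cref{cor:leftover} and \Cref{lem: c3}, and the parameters $\beta, \gamma$. So I would treat (i)--(iv) as statements about the fractional solution $\vec x$ and the pairwise terms $x_{j,j'}$, and re-verify each cited inequality with the new constants.

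\textbf{Part (i)} is independent of any constants. One has $\mathrm{LB} \ge Q$ directly, since the pairwise terms are nonnegative. The bound $\mathrm{LB} \ge \frac12(Q+L^2)$ follows from the SDP constraint making the matrix $(x_{j,j'})$ dominate $\vec x\vec x^{T}$ on the relevant block, as in \cite{bansal2016lift}. This gives $\sum_{j,j'} x_{j,j'}p_jp_{j'} \ge L^2$.

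\textbf{Part (ii)} follows Lemma~28 of \cite{harris2024dependent}. First I would lower-bound the SDP contribution of each leftover cluster, using the fact that jobs in $\mathcal C^{*}_{k,\mathrm{left}}$ have high Smith ratio and a concentrated processing time class. The key step is a per-cluster convex-duality argument: replace $p_j$ by $p_j - P_k g_k(H_j)$ on leftover-cluster jobs. The function $g(r,t,h) = h(1-\sqrt{1-\beta(h-\kappa)f/(h^2(t-\kappa))})$ is exactly the root of the quadratic
\[
h^2 - (h-g)^2 = \beta (h-\kappa) f /(t-\kappa),
\]
so it trades off $\beta R_k f(R_k,T_k)P_k^2$ against the lost linear mass. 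Squaring and summing, using positive semidefiniteness on the shifted vector, gives $(L-A)^2 + \beta D$.

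\textbf{Part (iii)} follows Lemma~30. I would expand $A^2$ over pairs of classes $k,k'$. Since $P_k$ is geometric in $\pi$ with a random offset, the cross terms are bounded by a geometric series. Each diagonal term $P_k^2(\sum_j x_j g_k(H_j))^2$ is bounded by $\gamma P_k^2 R_k f(R_k,T_k) + c_5\sum_j x_j p_j^2$; this is a two-variable numerical inequality in $(r,t)$ after optimizing over $H_j\in[1,\pi]$. Averaging over $P_{\mathrm{offset}}$ then yields the claim.

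\textbf{Part (iv)} follows Proposition~33 and is pure algebra. I would combine (ii) with Cauchy--Schwarz/Jensen, $\E[A] \le \sqrt{\E[A^2]} \le \sqrt{\gamma\E[D]+c_5Q} \le c_6\sqrt q$, which holds provided $c_6^2 \ge \max(\gamma c_3, c_5)$. Then, since $\mathrm{LB}$ is deterministic, I would average (ii) and (i) with weights chosen to form $q = Q+\E[D]/c_3$, giving the stated expression.

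The main obstacle is the numerical verification of (iii) and the per-cluster inequality behind (ii) with the new constants. These are the only places where the changed $c_1, c_2, \kappa$ enter nontrivially. I would check them by a grid search over $(r,t,h)$, consistent with the non-rigorous status declared for this section.
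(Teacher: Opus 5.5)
Your overall strategy is the paper's. The paper gives no argument of its own for this lemma; it reuses the inequalities of \cite{harris2024dependent} with the new constants. Your treatment of (i), (ii) and (iv) is sound.

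For (ii), set $v_j=p_j-P_kg_k(H_j)$ on $\mathcal C^*_{k,\mathrm{left}}$ and $v_j=p_j$ elsewhere. Then $0\le v_j\le p_j$, and together with $x_{j,j'}\ge 0$ and $x_{j,j}=x_j$ this gives $\sum_{j,j'}x_{j,j'}(p_jp_{j'}-v_jv_{j'})\ge\sum_jx_j(p_j^2-v_j^2)$. Your quadratic identity makes the right-hand side exactly $\beta D$, and positive semidefiniteness gives $\sum_{j,j'}x_{j,j'}v_jv_{j'}\ge(L-A)^2$. No Smith-ratio property is used.

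For (iv), put weight $2/(\beta c_3+1)$ on the expectation of (ii) and weight $(\beta c_3-1)/(\beta c_3+1)$ on $\mathrm{LB}\ge Q$. This is legitimate since $\beta c_3\approx1.572>1$. Also $c_6^2\approx0.0048379\ge\max\{\gamma c_3,c_5\}$, with essentially no slack.

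The mechanism you describe for (iii), however, cannot work. Write $a_k=\sum_{j\in\mathcal C^*_{k,\mathrm{left}}}x_jg_k(H_j)$.
\begin{itemize}
\item If each diagonal term $P_k^2a_k^2$ already uses the full budget $\gamma P_k^2R_kf(R_k,T_k)+c_5\sum_jx_jp_j^2$, the nonnegative cross terms $2P_kP_{k'}a_ka_{k'}$ have nowhere to go.
\item Shrinking the per-class constants does not rescue this, because $A^2\le\gamma D+c_5Q$ is false for a fixed offset.
\end{itemize}

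For a counterexample, take two adjacent classes (so $P_{k+1}=\pi P_k$, with $\pi=4.5$). Let each consist only of a leftover cluster with $R_k=0.2$ and all $H_j=1.5$. Then $f\approx0.3252$, $g\approx0.2263$ and $a_k\approx0.0453$. So $A^2=(1+\pi)^2a_k^2P_k^2\approx0.0620\,P_k^2$. On the other hand, $\gamma D+c_5Q=(1+\pi^2)(0.2\gamma f+0.45c_5)P_k^2\approx0.0544\,P_k^2$.

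Hence the random offset must enter the numerical inequality itself. As $P_{\mathrm{offset}}$ varies, each $H_j$ is log-uniform on $[1,\pi)$ while $Q$ stays fixed. In this example, the averaged ratio $\E[A^2]/(\gamma\E[D]+c_5Q)$ drops to about $0.71$. Taking the worst case over $H_j\in[1,\pi]$ class by class and averaging only at the end discards exactly this effect. So the verification of (iii) has to bound $\E[A^2]$, cross-class terms included, as an integral over the offset. This is in the same way that \Cref{lem: c3} computes $\E[(H_j-\kappa)/H_j^2]$.
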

\bigskip

\begin{theorem}\label{thm: sch_apx1}[Restated]
We have $\E[Z] \le 1.387 \cdot \mathrm{LB}.$
\end{theorem}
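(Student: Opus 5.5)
The plan is to combine the upper bound on $\E[Z]$ from \Cref{lem: c3} with the lower bounds on $\mathrm{LB}$ from \Cref{lem: sch tech}, and reduce everything to a low-dimensional numerical optimization. By homogeneity we normalize $\mathrm{LB}$ (or $Q$) and treat $Q$, $L$, $\E[D]$ and $\E[A^2]$ as scalar parameters. We then show that no feasible choice of them makes $\E[Z]/\mathrm{LB}$ exceed $1.387$.

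\textbf{Step 1: the two sides of the ratio.} From \Cref{lem: c3},
\[
\E[Z] \le \E[D] + c_3 Q + \tfrac{L^2}{2} = c_3 q + \tfrac{L^2}{2}, \qquad q = Q + \E[D]/c_3 .
\]
So the upper bound depends only on $q$ and $L$. For $\mathrm{LB}$ we have three lower bounds:
\begin{itemize}
\item[(a)] $\mathrm{LB}\ge Q$;
\item[(b)] $\mathrm{LB}\ge \tfrac12(Q+L^2)$, both from \Cref{lem: sch tech}(i);
\item[(c)] the bound from \Cref{lem: sch tech}(iv), in terms of $q$ and $L$.
\end{itemize}
Bound (c) itself comes from (ii) and (iii). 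Since (ii) holds pointwise in $P_{\mathrm{offset}}$ and $\mathrm{LB}$ is deterministic, we take expectations and apply Jensen/Cauchy--Schwarz to get $\E[(L-A)^2] \ge (L-\sqrt{\E[A^2]})^2$ when $L\ge\sqrt{\E A^2}$. We then bound $\E[A^2]$ by (iii) and optimize the convex combination of (ii) with (i). I take (iv) as given, as the statement permits.

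\textbf{Step 2: reduction to two free parameters.} Scale so that $q=1$; all bounds are 2-homogeneous in $(\sqrt q, L)$ after this rescaling. The remaining free quantities are $L\ge 0$ and the split of $q$ between $Q$ and $\E[D]/c_3$, i.e.\ $Q\in[0,1]$.

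Holding $L$ fixed, the upper bound $c_3+L^2/2$ does not depend on $Q$. Bounds (a) and (b) increase in $Q$ and (c) does not depend on it, so the worst case is small $Q$. However, $Q=0$ forces $\E[D]=c_3$, and it is (c) that controls this regime. Taking the lower bound as $\max\{(a),(b),(c)\}$, it suffices to check, for $L\ge0$ and $Q\in[0,1]$,
\[
c_3+\frac{L^2}{2}\le 1.387\cdot\max\Bigl\{Q,\ \tfrac12(Q+L^2),\ \frac{\beta c_3+\max\{0,L-c_6\}^2}{\beta c_3+1}\Bigr\}.
\]
At $Q=0$ this reduces to a one-variable inequality in $L$. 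Small $L$ is handled by the first term of (c), $\beta c_3/(\beta c_3+1)$. Large $L$ is handled by (b), whose ratio tends to $1$, and by (c), which grows like $L^2/(\beta c_3+1)$. The binding case is intermediate $L$, found by equating the (c)-ratio with the (b)-ratio.

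\textbf{Main obstacle.} The real difficulty is not this final two-variable maximization, which is a routine grid or calculus check. It is that the constants $c_1,c_2,c_3,c_5,c_6,\gamma$ feeding into it are obtained non-rigorously from numerical $\Psi$ estimates and from optimizations inherited from \cite{harris2024dependent}. For this reason the conclusion is recorded as a daggered pseudo-theorem.

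For the final step I would evaluate the maximum of the ratio numerically over a fine grid in $(L,Q)$, confirm that it is attained at $Q=0$ at the crossover point described above, and check that its value is below $1.387$.
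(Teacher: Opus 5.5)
Your proposal is correct and is essentially the paper's proof. The paper combines \Cref{lem: c3} with \Cref{lem: sch tech}(iv) alone and maximizes $(\beta c_3+1)(c_3q+L^2/2)/(\beta c_3q+\max\{0,L-c_6\sqrt{q}\}^2)$ to get $1.38695$, so your extra bounds from (i) are harmless but redundant. One correction to your prediction: the maximum sits at the interior stationary point $L\approx 0.955\sqrt{q}$ of this (iv)-only ratio, where your (b)-ratio is about $2.8$; it is not at the crossover of the (b)- and (c)-ratios near $L\approx 2.12\sqrt{q}$, where the ratio is only about $1.363$. So your ``confirm it is attained at the crossover'' check would fail, even though the bound $1.387$ still holds.
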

\begin{proof}
    By \Cref{lem: c3} and \Cref{lem: sch tech}(iv),
    \[
\frac{\E[Z]}{\mathrm{LB}}
\le
\frac{(\beta c_3 + 1)\bigl(c_3 q + L^2/2\bigr)}
     {\beta c_3 q + \max\{0,\, L - c_6 \sqrt{q}\}^2}.
\]
    The right-hand side can be shown to be at most $1.38695$ using symbolic maximization.
\end{proof}

\bibliographystyle{alpha}
\bibliography{refs}
\nocite{*}
\end{document}